\documentclass[a4paper,UKenglish,cleveref, autoref, thm-restate]{lipics-v2021}

\hideLIPIcs

\usepackage[dvipsnames]{xcolor}
\usepackage{tikz}
\usetikzlibrary{calc,shapes}
\usepackage[utf8]{inputenc}
\usepackage{amsmath,amsthm,amssymb}
\usepackage{mathtools}
\usepackage{diagbox}
\usepackage{todonotes}
\usepackage[unicode]{hyperref}
\usepackage{xspace}
\usepackage{thm-restate}
\usepackage[noend]{algpseudocode}
\usepackage{caption}
\usepackage{subcaption}
\usepackage{algorithm}
\usepackage{graphicx}

\usepackage{thm-restate}

\newcommand{\RR}{\mathbb{R}}
\newcommand{\NN}{\mathbb{N}}
\newcommand{\PP}{\mathcal{P}}
\newcommand{\OO}{\mathcal{O}}

\DeclareMathOperator*{\argmax}{arg\,max}

\newcommand{\kk}{\texorpdfstring{$k$}{k}}
\newcommand{\NP}{\texorpdfstring{$\mathcal{NP}$}{NP}}

\newcommand{\Ax}{\ensuremath{A_X}\xspace}
\newcommand{\Px}{\ensuremath{P_X}\xspace}
\newcommand{\Am}{A_\mathrm{total}}
\newcommand{\Pm}{p_{\mathrm{max}}}
\newcommand{\Pt}{P_{\mathrm{total}}}

\newcommand{\poly}{\mathcal{T}}

\DeclarePairedDelimiter\floor{\lfloor}{\rfloor}
\newcommand{\round}[1]{\ensuremath{\lfloor#1\rceil}}

\preto\align{\setcounter{equation}{0}%
             }

\nolinenumbers

\title{Finding Regions of Maximum Circularity in Plane Geometric Graphs}

\author{Jan-Henrik Haunert}
{Institute of Geodesy and Geoinformation, University of Bonn, Germany \and \url{https://www.uni-bonn.de/de/forschung-lehre/forschungsprofil/transdisziplinaere-forschungsbereiche/modelling/mitgliederverzeichnis/jan-henrik-haunert} }
{haunert@igg.uni-bonn.de}
{https://orcid.org/0000-0001-8005-943X}
{}
\author{Joshua Marc K\"onen}{Institute of Computer Science, University of Bonn, Germany \and \url{https://nerva.cs.uni-bonn.de/doku.php/staff/joshuakoenen}}{koenen@cs.uni-bonn.de}{https://orcid.org/0000-0003-4245-4812}{}
\author{Heiko R\"oglin}{Institute of Computer Science, University of Bonn, Germany \and \url{https://www.roeglin.org/}}{roeglin@cs.uni-bonn.de}{https://orcid.org/0009-0006-8438-3986}{}
\author{Tarek Stuck}{Institute of Computer Science, University of Bonn, Germany}{s6tastuc@uni-bonn.de}{}{}

\funding{This work has been funded by the Deutsche
Forschungsgemeinschaft (DFG, German Research Foundation) – 390685813;
459420781 and by the Lamarr Institute for Machine Learning and
Artificial Intelligence lamarr-institute.org.}

\authorrunning{J. Haunert, J.\,M. Könen, H. Röglin and T. Stuck}

\Copyright{Jane Open Access and Joan R. Public}

\ccsdesc[500]{Theory of computation~Design and analysis of algorithms}

\keywords{multicriteria optimization, Polsby-Popper score, Pareto optimality, NP-hardness}

\category{}

\relatedversion{} 

\EventEditors{John Q. Open and Joan R. Access}
\EventNoEds{2}
\EventLongTitle{42nd Conference on Very Important Topics (CVIT 2016)}
\EventShortTitle{CVIT 2016}
\EventAcronym{CVIT}
\EventYear{2016}
\EventDate{December 24--27, 2016}
\EventLocation{Little Whinging, United Kingdom}
\EventLogo{}
\SeriesVolume{42}
\ArticleNo{23}

\begin{document}

\maketitle

\begin{abstract}
A problem that occurs in different applications in geographical information science is to generate compact regions from areas on a map. 
This is important, e.g., in the context of electoral districting to avoid gerrymandering. 
A common measure for the compactness of a region is the Polsby-Popper score, which measures how close a given region is to a circle based on its area and perimeter. 
We assume that a polygonal subdivision of the plane is given and study the problem of selecting a subset of the polygonal faces that maximizes the Polsby-Popper score, given by $\frac{4\pi A}{P^2}$, where $A$ is the area of the selected shape and $P$ is its perimeter.

We consider the more general task of maximizing $\frac{A}{P^\alpha}$ for an arbitrary $\alpha>1$, which we call the $\alpha$-circularity problem.
We perform the first rigorous study of its complexity and show that it is weakly NP-hard if $\alpha \in (1,2]$. Furthermore, for $\alpha>1$ we present a pseudopolynomial time algorithm for this problem.

\end{abstract}

\section{Introduction}

In geographical information science and spatial planning, \emph{spatial allocation} refers to the determination of regions by selecting sets of faces from a given subdivision of the plane~\cite{Shirabe2005}. 
Often the faces correspond to administrative areas or cells of a regular grid. 
The version of the problem where the set of faces needs to be partitioned into multiple regions is commonly referred to as \emph{districting}, which is particularly relevant for defining electoral districts~\cite{ValidiBL2022}. 
Another important case of spatial allocation deals with determining a single region, e.g., to locate a nature reserve or a hazardous waste site \cite{CovaC2000}.
While spatial allocation often follows application-specific criteria, some criteria are of general relevance. These include criteria related to size, compactness, and contiguity of the output regions. 
While contiguity is often imposed as a hard constraint, compactness is usually modeled as an optimization objective.
In graph-based redistricting models, a common compactness objective is to minimize the number of cut edges, i.e., adjacency edges whose endpoints are assigned to different districts~\cite{DDS19}.
Here, we study the special case of spatial allocation that asks for selecting a single set of faces maximizing the compactness of the region resulting from their union. 
To measure compactness, we use the well-established Polsby-Popper score, which is frequently applied in electoral districting to avoid gerrymandering, i.e., the manipulation of electoral districts to advantage a particular party \cite{PolsbyP1991}. Although this score cannot fully prevent gerrymandering, particularly in cases of \textit{stealth gerrymanders}, where geometric regularity is preserved while electoral outcomes remain skewed (see, e.g., \cite{BLSS22,CG20}), it remains a commonly used tool for detecting political manipulation.
The problem that asks for a single region of maximum compactness can also be understood as a pattern recognition task. 
It is of relevance, e.g., for detecting ring roads in road networks, which has been studied in the context of cartographic generalization~\cite{HeinzleAS2006}
and a modified version has been used in medical image analysis as a numerical parameter to predict cervical lymph node metastases, where non-compact regions are associated with more aggressive tumor growth and infiltration~\cite{CCBKRMSIEG24}.

The Polsby-Popper score of an arbitrary shape in the plane is defined as $\frac{4\pi A}{P^2}$, where $A$ is the area of the shape and $P$ the length of its perimeter. 
From the isoperimetric inequality it is known that for a fixed length $P$ of a closed curve we have the inequality $4\pi A \leq P^2$, where $A$ is the enclosed area. Additionally, equality holds if and only if the closed curve corresponds to a circle. Therefore the Polsby-Popper score always falls within the range $[0,1]$.
By definition, it is also scale-invariant, because enlarging or shrinking all polygons by the same ratio does not affect its value.
Although the Polsby-Popper score is among the most popular compactness measures for shape analysis and spatial allocation, it has been criticized for being sensitive to the sinuosity of the region boundary and the level of detail at which it is represented
\cite{BarnesS2021}. 
Other compactness measures based on moments of inertia or comparisons of a region with a reference shape (e.g., the region’s convex hull) are therefore often preferred
\cite{Maceachren1985}. 
However, we deem the Polsby-Popper score a reasonable measure for detecting regions whose boundary is short by design, e.g., regions enclosed by circular ring roads.

The Polsby-Popper score has been analyzed for a long time with results even coming up in recent years. 
Recently, it was analyzed by Belotti et al.~\cite{BBE24}, who introduced mixed-integer second-order cone programs (MISOCP) for finding either a single district with optimal Polsby-Popper score or multiple districts such that the average score of the districts is maximized.
In the following we will refer to the first problem as the \textit{circularity} problem.
Using a MISOCP formulation also allows the inclusion of additional constraints for either problem, such as population balance or lower and upper bounds on the population of each district.

We will consider a more general version of the circularity problem, where instead of finding a set of polygons maximizing $\frac{4\pi A}{P^2}$, we seek to maximize $\frac{A}{P^\alpha}$ for an arbitrary $\alpha > 1$. We refer to this problem as the $\alpha$-\textit{circularity problem}. 
Note that the larger the parameter $\alpha$ is chosen, the more focus is put on minimizing the perimeter. 
This means that for large values of $\alpha$, smaller regions are favored, while for $\alpha$ close to $1$, we focus more on the enclosed area. \cref{fig:example_instance_alpha} illustrates the influence of $\alpha$ on the optimal solution for a given instance.
Additionally, we will consider the problem of finding exactly $k$ districts in a map such that the minimum $\alpha$-circularity among all districts is maximized.
We refer to this problem as the \emph{$k$-districting} problem.
To the best of our knowledge, we are the first to analyze the theoretical complexity of the circularity problem and $k$-districting problem.

\subsection{Related Work}
An optimization problem similar to the circularity problem is the \textit{Minimum Quotient Cut} problem. 
In this problem we are given a graph $G=(V,E)$ as well as edge costs $c:E \rightarrow \mathbb{R}_+$ and vertex weights $w:V \rightarrow \mathbb{R}_+$. The goal is to find a cut $S \subset V$ minimizing $\frac{\text{cost}(S)}{\min\{w(S), w(V\setminus S)\}}$, where $\text{cost}(S)$ corresponds to the cost of all edges with one endpoint in $S$ and one endpoint in $V\setminus S$, i.e., $\text{cost}(S)=\sum_{e=\{u,v\}:u\in S, v\in V\setminus S}c(e)$. 

A related problem is the \emph{Sparsest Cut} problem
where we have a supply graph and a demand graph and the cost of a cut is measured by the total cost of the edges that are being cut in the supply graph divided by the total cost of the edges that are being cut in the demand graph.
For the case of uniform demands, where every pair of vertices has unit demand, Arora, Rao and Vazirani gave an $\mathcal{O}(\sqrt{\log n})$-approximation algorithm~\cite{ARV09}. 
For general demands, an algorithm by Arora, Lee and Naor achieves an approximation factor of $\mathcal{O}(\sqrt{\log n } \log \log n)$~\cite{ALN05}.
Regarding hardness, for the non-uniform sparsest cut there is an approximation hardness of $17/16 - \varepsilon$, even when the supply graph has treewidth at most 2~\cite{GTW13}.

For planar graphs, the minimum quotient cut problem can be solved in pseudopolynomial time because minimal cuts in the primal graph correspond to simple cycles in the respective dual graph.
In this setting, the first exact algorithm was presented by Park and Phillips, who solved the planar minimum quotient cut problem in pseudopolynomial time $\tilde{\mathcal{O}}(n^2W)$, where $W$ is the sum of the vertex weights~\cite{PP93}. Additionally, they showed that the problem is weakly \NP{}-hard for 2-outerplanar and series-parallel graphs.

There is a natural representation of the $\alpha$-circularity problem that almost corresponds to the setting described above:
We construct the adjacency graph of our polygon instance where every vertex $v_p$ represents a polygon $p$, there is one vertex $z$ for the outer face, and every vertex $v_p$ has a weight equal to the area of $p$.
Two vertices are connected by an edge if the corresponding polygons have a common boundary. The cost of the edge is set to the length of this boundary, i.e., the total length of all polygon edges between the two polygons.

In this graph, we are interested in finding a set of vertices $S\subseteq V\setminus \{z\}$ that maximizes the ratio $\frac{w(S)}{\text{cost}(S)^\alpha}$. 
Although this formulation is not identical to the minimum quotient cut problem, it enables us to use algorithms developed for minimum quotient cut in our setting.
Specifically, we utilize the search graph, as defined in~\cite{PP93}, to compute the solution that maximizes the $\alpha$-circularity for any given $\alpha>1$. 

\subsection{Our Results}
In \cref{sec:Alg} we show that for any $\alpha>1$, an optimal subset of polygons with area $A$ and perimeter $P$ that maximizes $A/P^\alpha$ can be computed in pseudopolynomial time $\OO(n^3 \cdot m \cdot \Am)$, where $\Am$ is the sum of areas of all polygons in the instance, assuming that all polygons have integer area values, and $n,m$ are the total number of vertices and edges of the polygons, respectively. 
Alternatively, if all polygon edge lengths are integers, an optimal solution can be computed in pseudopolynomial time $\OO(n^2\cdot m\cdot \Pt)$, where $\Pt$ is the total sum of the edge lengths of all polygons.

The circularity problem can also be viewed as a bicriteria optimization problem, where the goal is to minimize the perimeter and to maximize the area of a solution.
In such bicriteria optimization settings, one is often interested in computing the set of Pareto-optimal solutions, where a solution is Pareto-optimal if no other solution is at least as good in both criteria and strictly better in at least one. An important observation is that any solution with maximum circularity must be Pareto-optimal because otherwise there would exist a solution with larger circularity. Our pseudopolynomial time algorithm computes the set of Pareto-optimal solutions by dynamic programming and then picks the solution with largest circularity among them. The running time of this approach can be bounded by $\OO(n^2\cdot m\cdot p_{\max})$, where $p_{\max}$ corresponds to the largest Pareto set arising in any subproblem solved by the dynamic program. If the areas or edge lengths are integers, $p_{\max}$ can be bounded from above by $n\cdot\Am$ and $\Pt$, respectively, resulting in the running times stated above.

Let us point out that our algorithm also works if the areas and edge lengths are rational or real-valued, and also in this case it achieves the claimed runtime of $\OO(n^2\cdot m\cdot p_{\max})$. We used integer values only to bound the size $p_{\max}$ of the largest Pareto set in the worst case. In general, this is a rather pessimistic estimate. While, in the worst case, Pareto sets can be of exponential (or pseudopolynomial) size for almost all bicriteria optimization problems, there are usually much fewer Pareto-optimal solutions on real-world instances (see, e.g.,~\cite{Ehrgott2005,Muller-HannemannW06}). This is not only observed in experiments but it has also been shown in theory that the expected number of Pareto-optimal solutions is polynomial in the probabilistic input model of smoothed analysis (see, e.g.,~\cite{BrunschR15,BeierRRV2022}). We demonstrate experimentally on real-world inputs that this behavior is also observed for the circularity problem, and that the actual value of $p_{\max}$ is typically far smaller than its worst-case bound.

For instances with rational area values, we show in \cref{sec:FPTAS} that rounding to appropriate integer values yields an FPTAS with runtime $\OO(\frac{n^3\cdot m^2\cdot f(m)}{\varepsilon})$, assuming 
$A_{\text{total}}/A_{\text{min}}$
is bounded by some polynomial function $f(m)$, where $A_{\text{min}}$ corresponds to the smallest area in the instance.  

In \cref{sec:NP-hardness} we show that solving the $\alpha$-circularity problem is weakly \NP{}-hard for any $\alpha \in (1,2]$ by a reduction from the weakly \NP{}-complete \textit{partition} problem.
The reduction can also be adapted to show \NP{}-hardness of a similar problem:
In \cref{sec:NP-hardness-districting} we show that assigning each polygon to one of $k$ districts such that each district contains at least one polygon and the smallest $\alpha$-circularity score is maximized is also weakly \NP{}-hard for any $\alpha \in (1,2]$. 

\begin{figure}%
    \centering
     \begin{subfigure}[t]{0.32\textwidth} 
         \centering
         \includegraphics[width=\textwidth, page=11]{figures/example.pdf} 
         \subcaption{Optimal region for $\alpha=3$.}
     \label{fig:example_left}
     \end{subfigure}
     \hfill
     \begin{subfigure}[t]{0.32\textwidth}
         \centering        
\includegraphics[width=\textwidth, page=12]{figures/example.pdf} 
         \subcaption{Optimal region for $\alpha = 2$.}
         \label{fig:example_middle}
     \end{subfigure}
    \hfill
    \begin{subfigure}[t]{0.32\textwidth}
        \centering
        \includegraphics[width=\textwidth, page=13]{figures/example.pdf} 
        \subcaption{Optimal region for $\alpha=1.1$.}
        \label{fig:example_right}
    \end{subfigure}
    \caption{Example instance with edge lengths and areas induced by a Euclidean embedding. Optimal solutions maximizing $\alpha$-circularity for different values of $\alpha$ are highlighted in orange.
    }
    \label{fig:example_instance_alpha}
\end{figure}

\section{Preliminaries} \label{sec:Preliminaries}

Let $[i]:= \{1,\ldots,i\}$.
We aim to find a subset of polygons whose union forms the overall most compact shape.
By the isoperimetric inequality, the most compact shape in two dimensions is a circle. 
Therefore, given a set of polygons $\poly$, we aim to find a subset $S\subseteq \poly$ whose union maximizes $\frac{4\pi A(S)}{P(S)^2}$, where $A(S)$ and $P(S)$ denote the area and perimeter of the union of all polygons in $S$, respectively.
We assume that the set of polygons $\poly$ is connected and does not contain any holes.
Since constant factors do not affect the relative ordering of solutions in terms of quality, we can simplify this expression to $\frac{A(S)}{P(S)^2}$.
We define $c_\alpha(S):=\frac{A(S)}{P(S)^\alpha}$ as our general $\alpha$-circularity objective function, where $\alpha = 2$ corresponds to the original Polsby-Popper score.
We assume that all polygons have positive area and positive edge lengths.

A polygonal instance is given by $G_\poly=(G=(V,E),\ell,A)$, where $G$ is a Euclidean plane graph with $n=\vert V \vert$ vertices and $m = \vert E \vert$ edges representing the vertices and edges of the polygons $\poly$.
For any subset of the polygons $S\subseteq \poly$ there exists a unique edge set $E_S\subseteq E$ that contains all boundary edges of the union of the polygons in set $S$.
The function $\ell:E\rightarrow \mathbb{R}_{>0}$ assigns a length to each edge. Moreover, the function $A: 2^E\rightarrow \mathbb{N}_{0}$ assigns to each edge set that corresponds to the boundary of the union $S$ of some polygons its enclosed area and 0 otherwise.
For $E_S\subseteq E$, we use the shorthand $\ell(E_S)$ to denote $\sum_{e\in E_S}\ell(e)$.

The $\alpha$-circularity score of $S$ is given by $\frac{A(E_S)}{\ell(E_S)^\alpha}$.
For simplicity, we will use the function $A$ 
interchangeably in the context of a polygon instance $\poly$ or its corresponding graph representation $G_\poly$. That is, we  write $A(S)$ if $S \subseteq \poly$ is a subset of the polygons and $A(E_S)$ when considering the edge set in the graph $G_\poly$. 

First we observe that, for any $\alpha \geq 0$, an optimal solution cannot contain holes because filling these holes strictly increases the area and strictly decreases the perimeter.

\begin{observation}\label{obs:hole}
For any $\alpha \geq 0$, an optimal set of polygons $S$ maximizing $c_\alpha(S)$ does not contain holes.
\end{observation}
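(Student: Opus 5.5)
We argue by contradiction, via a local improvement. Suppose $S\subseteq\poly$ is optimal for $c_\alpha$ and the union $U$ of the polygons in $S$ has a hole, i.e., a bounded connected component $H$ of $\RR^2\setminus U$.

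First we locate $H$ within the instance. Since $\poly$ is connected and has no holes, its union is a topological disk. Also, $H$ is enclosed by $U$ and so does not meet the outer face. Hence $H$ is, up to its boundary, the union of a nonempty set $S_H\subseteq\poly\setminus S$ of polygons. Set $S':=S\cup S_H$.

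Next we compare areas. The polygons of $S_H$ are disjoint from those of $S$ up to boundaries, so $A(S')=A(S)+A(S_H)$. Every polygon has positive area, so $A(S')>A(S)$.

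The main step is the perimeter. The boundary edge set $E_S$ splits into the edges lying on $\partial H$ and the rest. The edges on $\partial H$ separate polygons of $S$ from polygons of $S_H$. After adding $S_H$ they become interior edges, because both adjacent faces now lie in $S'$, so they leave the boundary. Every other edge of $E_S$ keeps its status: it borders a face of $S$ on one side and a face outside $S$ on the other, and that outside face is not in $H$, since $H$ is a separate component of the complement. The edges of $S_H$ not on $\partial H$ lie strictly inside $H$ and are therefore interior edges of $S'$. Hence $E_{S'}=E_S\setminus E_{\partial H}$. The edge set $E_{\partial H}$ is nonempty, and edge lengths are positive, so $\ell(E_{S'})<\ell(E_S)$.

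Finally, for $\alpha\ge 0$ the map $P\mapsto P^{-\alpha}$ is non-increasing on $P>0$. Together with $A(S')>A(S)>0$ this gives $c_\alpha(S')>c_\alpha(S)$. The case $\alpha=0$ also works, since the strict gain then comes from the area alone. If $\ell(E_{S'})=0$, i.e., $S'=\poly$ has an empty boundary, we would need a convention; but $\poly$ is bounded, so it always has outer boundary edges and this case does not occur. The strict improvement contradicts the optimality of $S$.

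The step needing most care is the perimeter step: justifying $E_{S'}=E_S\setminus E_{\partial H}$ cleanly from the planar structure, namely that the faces inside $H$ are exactly those of $S_H$ and that no other boundary edge changes. This uses the assumptions that $\poly$ is connected and hole-free, so that $H$ consists of instance faces and is separated from the outer face.
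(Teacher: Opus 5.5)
Your proof is correct and follows the same argument as the paper, which justifies the observation only by remarking that filling a hole strictly increases the area and strictly decreases the perimeter. You supply the details behind that remark: identifying the filling set $S_H$, showing $E_{S'}=E_S\setminus E_{\partial H}$, and using that $P\mapsto P^{-\alpha}$ is non-increasing for $\alpha\ge 0$. One small remark: you do not need the union of all polygons to be a topological disk. It suffices that its complement is connected and unbounded, so it lies in the unbounded component of the complement of $U$, and this is exactly what places $H$ inside the instance.
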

Next, we show that for any $\alpha > 1$ the polygons in an optimal circularity solution must be connected.
This result is similar to Theorem~2.2 in~\cite{PP93}, showing that an optimal solution for the minimum quotient cut is always achieved by a simple cycle that traverses the boundary of some region in the corresponding dual graph.

First, we will consider the case where the solution consists of two disconnected polygons.
The following lemma follows from elementary calculations, its proof is deferred to \cref{apx:singlecycleproof}.

\begin{restatable}{lem}{SingleCyclex}\label{lemma:SingleCycle_x}
If $S=S_1\cup S_2$ for two edge-disjoint and non-intersecting sets of polygons $S_1,S_2\subseteq \poly$ with $A(S_1),A(S_2)>0$, 
we have $c_\alpha(S)<\max\{c_\alpha(S_1),c_\alpha(S_2)\}$ for every $\alpha>1$.
\end{restatable}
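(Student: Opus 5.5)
Since $S_1$ and $S_2$ are edge-disjoint and non-intersecting, both area and perimeter are additive: $A(S)=A_1+A_2$ and $P(S)=P_1+P_2$, where $A_i:=A(S_i)>0$ and $P_i:=\ell(E_{S_i})>0$. The only thing to verify is that no boundary edge of $S_1$ coincides with one of $S_2$, so that $E_S=E_{S_1}\,\dot\cup\, E_{S_2}$. This is exactly what edge-disjointness gives, and non-intersection ensures the areas do not overlap. Thus
\[
c_\alpha(S)=\frac{A_1+A_2}{(P_1+P_2)^\alpha},
\]
and the claim becomes a purely numerical inequality in four positive reals.

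Write $M:=\max\{c_\alpha(S_1),c_\alpha(S_2)\}$, so that $A_i\le M P_i^\alpha$ for $i=1,2$. Then
\[
A_1+A_2\le M\,(P_1^\alpha+P_2^\alpha).
\]
It therefore suffices to show the strict superadditivity $P_1^\alpha+P_2^\alpha<(P_1+P_2)^\alpha$ for $P_1,P_2>0$ and $\alpha>1$. Together these give $c_\alpha(S)\le M\,\frac{P_1^\alpha+P_2^\alpha}{(P_1+P_2)^\alpha}<M$. Strictness is fine because $M>0$, which follows from the positive areas.

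For the superadditivity, normalize by setting $t:=P_1/(P_1+P_2)\in(0,1)$. The inequality becomes $t^\alpha+(1-t)^\alpha<1=t+(1-t)$. This follows from $t^\alpha<t$ for $t\in(0,1)$ and $\alpha>1$, and likewise $(1-t)^\alpha<1-t$.

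The main point requiring care, rather than a real obstacle, is justifying the additivity of the perimeter. If $S_1$ and $S_2$ shared a vertex but no edge, the boundary edge set of the union would still be the disjoint union of the two boundary edge sets, since the perimeter counts edges only. The hypothesis "edge-disjoint" is therefore precisely what is needed. Everything else is the elementary calculation above.
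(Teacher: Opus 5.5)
Your proof is correct and is essentially the paper's argument. The paper argues by contradiction: it sums $A_i \le c_\alpha(S)\,P_i^\alpha$ to get $P_1^\alpha+P_2^\alpha \ge (P_1+P_2)^\alpha$. You instead sum $A_i \le M P_i^\alpha$ directly. Both routes end with the strict superadditivity of $x^\alpha$, which you verify by normalizing to $t^\alpha+(1-t)^\alpha<1$ where the paper cites convexity together with $f(0)=0$.
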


We can extend this result to any collection of disjoint regions. 
Let $S=\bigcup_{i\in [k]}S_i$, where all $S_i$ are pairwise disjoint and satisfy $A(S_i)>0$.
Partitioning $S$ into $S_k$ and $\bigcup_{i\in [k-1]}S_i$, together with \cref{lemma:SingleCycle_x} implies $c_\alpha(S)<\max\{c_\alpha(S_k),c_\alpha(\bigcup_{i\in [k-1]}S_i)\}$.
Therefore, at least one of these two sets strictly improves the objective value while containing fewer components. Repeating this process yields a single region $\hat{S}=S_j$ for some $j\in [k]$.
As a consequence, we obtain the following observation.
\begin{observation}\label{obs:MultipleCycles}
An optimal set of polygons $S$ that maximizes $c_\alpha(S)$ for $\alpha > 1$ can be represented by a single simple cycle that traverses along the boundary of the selected polygons.
\end{observation}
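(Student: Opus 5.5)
The plan is to reduce Observation~\ref{obs:MultipleCycles} to Lemma~\ref{lemma:SingleCycle_x} and Observation~\ref{obs:hole} by analysing the structure of an optimal set $S$.

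\textbf{Step 1: write $S$ as a union of regions.} Let $S$ be an optimal set of polygons for $c_\alpha$ with $\alpha>1$. Such an optimum exists because $\poly$ is finite; we restrict to $S\neq\emptyset$, where $A(S)>0$ since all polygons have positive area. Define two polygons of $S$ to be related if they share a boundary edge, and take the transitive closure. This partitions $S$ into edge-connected classes $S_1,\dots,S_k$, and each $A(S_i)>0$.

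\textbf{Step 2: show the $S_i$ behave as disjoint pieces.} By construction, distinct classes share no edge. They may, however, touch at a vertex. In that case the boundary edge set still splits as $E_S=E_{S_1}\,\dot\cup\,\cdots\,\dot\cup\,E_{S_k}$. Areas are additive as well, so $A(S)=\sum_i A(S_i)$ and $\ell(E_S)=\sum_i\ell(E_{S_i})$. These two additivity facts are exactly what the calculation in Lemma~\ref{lemma:SingleCycle_x} needs, so the lemma applies to any split into edge-disjoint groups of classes.

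\textbf{Step 3: conclude $k=1$.} If $k\ge 2$, apply Lemma~\ref{lemma:SingleCycle_x} to $S_k$ and $\bigcup_{i<k}S_i$. This gives a strictly better set, contradicting optimality, so $k=1$. In other words, $S$ is edge-connected.

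\textbf{Step 4: reduce the boundary to one simple cycle.} By Observation~\ref{obs:hole}, $S$ has no holes, so the boundary of the region $\bigcup S$ is a single closed walk. This walk may still revisit a vertex: two parts of $S$ can meet at a pinch vertex while being connected through shared edges elsewhere. Such a revisit would mean the complement touches itself at that vertex, enclosing a bounded complementary component, that is, a hole. Ruling this out is the delicate part, and I would argue it via planarity. Take the faces of $\poly\setminus S$ together with the outer face. Because $S$ has no holes, every complementary face is connected to the outer face. Since $S$ is also edge-connected, the dual cut is then a minimal cut. A minimal cut in a plane graph corresponds to a simple cycle in the dual. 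Translated back to the primal, this means the boundary edges $E_S$ form a single simple cycle.

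\textbf{Main obstacle.} I expect the hardest step to be this last topological argument, namely the passage from "edge-connected without holes" to "simple cycle", especially at pinch vertices. I would handle it with the primal/dual minimal-cut correspondence cited from~\cite{PP93}.
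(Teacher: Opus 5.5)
Your proposal is correct and follows the paper's route: Lemma~\ref{lemma:SingleCycle_x}, applied to one class versus the rest, rules out more than one component, and Observation~\ref{obs:hole} rules out holes. Two of your steps make explicit what the paper only asserts ``as a consequence''. In Step~2 you note that classes touching at a vertex still have additive area and perimeter, which is all the lemma's proof uses. In Step~4, both $S$ and its complement are connected in the dual, so $\delta(S)$ is a bond and $E_S$ is a simple cycle; this is sound and is the argument the paper only alludes to via~\cite{PP93}.
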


In general, the number of vertices $n$ and edges $m$ in $G_\poly$ might not be bounded in terms of the number of polygons $\vert \poly\vert$, as the boundary of a polygon might contain arbitrarily many vertices with degree 2. 
To ensure a linear bound in $\vert \poly \vert$, we can contract vertices with degree 2 by merging their incident edges as long as this does not result in a multigraph.
The following lemma is deferred to \cref{apx:PolyBound}.

\begin{restatable}{lem}{PolyBound}\label{lemma:PolyBound}
Contracting as many vertices with degree 2 as possible yields an instance $G_\poly$ with $n=\OO(\vert \poly \vert)$ and $m=\OO(\vert \poly \vert)$.
\end{restatable}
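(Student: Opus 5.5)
The plan is to bound the size of the contracted graph with Euler's formula, after first arguing that the graph has minimum degree at least $2$ and that only a few degree-2 vertices can survive the contraction. Let $G'=(V',E')$ be the plane graph obtained after exhaustive contraction. First, I will argue that $G'$ has no vertices of degree $0$ or $1$. The subdivision is connected and has no holes, and every edge lies on the boundary of some polygon of positive area. Hence every edge is on a cycle (a polygon boundary), so there are no bridges, no degree-1 vertices and, for $|\poly|\ge 1$, no isolated vertices. Contraction preserves this property, so every vertex of $G'$ has degree at least $2$.

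Next I will bound the degree-2 vertices that remain. A degree-2 vertex $v$ with neighbours $u,w$ is not contracted only if merging its two edges would create a multi-edge. This happens either when $u=w$, meaning $v$ has two parallel edges to $u$, or when $u$ and $w$ are already adjacent. I will charge each such surviving vertex $v$ to the edge $\{u,w\}$, or to the parallel pair, which together with the edges at $v$ bounds a face (a triangle or a digon). Every face of $G'$ is either a polygon of $\poly$ or the outer face, so there are $|\poly|+1$ faces. I will then show that each face receives only $O(1)$ charges per edge and, more strongly, $O(1)$ charges in total. The key point is that a face whose boundary contains a non-contractible degree-2 vertex has a very short boundary: it consists of $v$ and its two neighbours, and it is the face on one side of the path $u$--$v$--$w$. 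Since $v$ has degree $2$, both of its incident faces contain the path $u$--$v$--$w$. One of them is the triangle closed by the edge $\{u,w\}$; if instead both faces used that path as their whole boundary, the graph would be just a triangle. Charging each surviving $v$ to a face whose boundary has at most three vertices, with at most three such $v$ per face, gives at most $3(|\poly|+1)$ surviving degree-2 vertices. This charging step is the main obstacle, and I will treat the degenerate digon case ($u=w$) separately in the same way.

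Finally, I will let $n_2$ denote the number of degree-2 vertices and $n_{\ge3}$ the number of the rest, and combine the handshake lemma with Euler's formula. Since $G'$ is connected and planar, $n'-m'+f'=2$ with $f'=|\poly|+1$, and $2m'\ge 2n_2+3n_{\ge3}$. Substituting $n_{\ge3}=n'-n_2$ gives $2m' \ge 3n' - n_2$. Combining this with Euler's formula yields $m' \le 3f' + n_2 - 6$, and hence $n' = m'-f'+2 = O(|\poly| + n_2)$. Using $n_2 = O(|\poly|)$ from the charging step gives both $n=O(|\poly|)$ and $m=O(|\poly|)$.
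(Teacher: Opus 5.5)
\textbf{There is a gap in the charging step.} Your closing Euler/handshake computation is correct ($n'\le 2f'-4+n_2$, $m'\le 3f'-6+n_2$), so everything hinges on $n_2=O(|\poly|)$. The charging argument you give for that bound rests on a false structural claim. Take a surviving degree-2 vertex $v$ with neighbours $u,w$ and $\{u,w\}\in E$. The closed curve $u$--$v$--$w$--$u$ need not bound a face on either side, because other parts of the graph can lie both inside and outside it.

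Here is a concrete counterexample. Take $u=(0,0)$, $w=(2,0)$, $v=(1,2)$, $x=(1,1)$, $y=(1,-1)$ and the three polygons $uvwx$, $uxw$, $uwy$.
\begin{itemize}
\item The instance is connected and hole-free.
\item It is already exhaustively contracted, since each of $v,x,y$ has the adjacent neighbours $u,w$.
\item Yet the two faces at $v$ are the quadrilateral $uvwx$ and the outer face $uvwy$.
\end{itemize}
So neither face at $v$ is the triangle, and the bound of at most three charges per face does not apply. Your dichotomy (``one face is the triangle, or the graph is just a triangle'') misses exactly this case.

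You also cannot patch this by charging $v$ to an arbitrary incident face. If every edge $a_ia_{i+1}$ of a large convex polygon is accompanied by an outside path $a_i$--$v_i$--$a_{i+1}$, the outer face alone carries all $k$ surviving vertices $v_i$.

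\textbf{How to repair it.} A correct way to get $n_2=O(|\poly|)$ is a chain argument, which is essentially the paper's proof.
\begin{itemize}
\item Suppress \emph{all} degree-2 vertices, allowing parallel edges and loops. The resulting plane multigraph $H$ has the same $|\poly|+1$ faces and, for $|\poly|\ge 2$, minimum degree $3$.
\item Euler's formula with $2|E(H)|\ge 3|V(H)|$ then gives $|V(H)|\le 2|\poly|-2$ and $|E(H)|\le 3|\poly|-3$.
\item Every edge of $H$ corresponds to a maximal chain of degree-2 vertices of $G'$. Suppose a chain $a$--$v_1$--$v_2$--$\cdots$--$b$ with $a\neq b$ retained two internal vertices. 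Then $v_1$ would still be contractible: the neighbours of $v_2$ are $v_1$ and its successor, and the successor is either a degree-2 vertex or $b$, so never $a$.
\item By the same reasoning, a closed chain ($a=b$) retains at most two internal vertices.
\end{itemize}
Hence $n_2\le 2|E(H)|\le 6|\poly|$, and your Euler bookkeeping (or directly $n'\le |V(H)|+2|E(H)|$) completes the proof. The paper phrases the same idea as contracting fully and then re-subdividing multi-edges, which adds at most $|E(H)|$ vertices.
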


\section{Algorithm for the \texorpdfstring{$\alpha$}{alpha}-circularity Problem}\label{sec:Alg}
From \cref{obs:MultipleCycles} we know that for any $\alpha > 1$ an optimal solution must be a simple cycle.
Hence, we design an algorithm that 
computes the set of Pareto-optimal cycles in the graph $G_{\poly}$
with respect to maximizing the enclosed area and minimizing the perimeter.

Our algorithm builds on the general approach of Park and Phillips~\cite{PP93} for solving the planar minimum quotient cut problem in pseudopolynomial time. 
Their method can be adapted with only minor modifications to solve the $\alpha$-circularity problem for any $\alpha > 1$. However, this direct adaptation has two significant drawbacks: it applies only to integral areas, and its running time is quasilinear in the total area $\Am$. Moreover, because of the algorithm’s structure, this dependence on $\Am$ occurs for every instance and is therefore not merely a worst-case bound. We propose a modified algorithm that also supports real-valued areas and edge lengths. Although its worst-case running time is likewise pseudopolynomial on instances with integral areas or edge lengths, its performance on realistic instances can be considerably better. We discuss this in more detail after presenting the algorithm and report experimental results confirming this behavior.

Before explaining our algorithm, we explain the approach by Park and Phillips~\cite{PP93} for solving the planar minimum quotient cut problem in pseudopolynomial time and how we modify this approach to solve the $\alpha$-circularity problem.
Their algorithm first constructs for the given planar embedded graph $G$ with edge costs and vertex weights its dual graph $G_D=(V_D,E_D)$
and then creates a directed search graph $G_s=(V_s,E_s)$ from $G_D$, where edges have lengths and weights. 
The search graph $G_s$ is constructed by fixing an arbitrary spanning tree $T$ of $G_D$ and replacing every undirected edge $\{x,y\}\in E_D$ by two directed edges $(x,y),(y,x)$. 
Each directed edge $e$ in $E_s$ is associated with a length $\ell'(e)$ and a weight $w(e)$. 
For edges $e=(x,y)$ with $\{x,y\}\in E(T)$, the weight is defined as $w(e)=0$.
For all other edges $e$, the weight $|w(e)|$ corresponds to the weight of the region enclosed by $e$ and the unique $y$-$x$ path in $T$. 
Depending on the orientation of $e$, we associate $w(e)$ with either the positive or negative value of the weight enclosed together with the tree $T$. 
We assume that $w(e)$ is positive if the enclosed weight lies to the left of $e$ and negative otherwise.
\cref{fig:search_graph_weights} shows an example of the weights in the search graph.
The construction can also be interpreted as choosing a fundamental cycle basis of $G_D$ induced by the spanning tree $T$ and then defining the weight of the directed non-tree edges based on the orientation and enclosed weight of the respective fundamental cycle.

In our application, a polygon instance $G_\poly$ corresponds to the dual graph $G_D$ of a corresponding planar minimum quotient cut instance.
The faces of $G_\poly$ correspond to the vertices of the original minimum quotient cut instance, and their areas correspond to the respective vertex weights.
Consequently, the search graph $G_s$ and the spanning tree $T$ can be obtained directly from $G_\poly$.

The following lemma, used in \cite{PP93}, shows that the weight of any simple cycle corresponds to its enclosed weight. 
Since its proof is omitted in~\cite{PP93}, we provide a proof in \cref{apx:weight_enclosed} for completeness.
\begin{restatable}[\cite{PP93}]{lemma}{CycleEnclosingLemma}\label{lem:weight_enclosed}
For every simple cycle $C$ in a polygon instance $G_\poly$ with enclosed area $A(C)$ and length $\ell(C)$, for the search graph $(G_s,\ell',w)$ we have $\vert w(C)\vert = A(C)$ and $\ell'(C)=\ell(C)$.
\end{restatable}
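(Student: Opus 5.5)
The length statement is immediate. The search graph uses the same edges as $G_\poly$, each undirected edge replaced by two directed copies, and we set $\ell'((x,y))=\ell'((y,x))=\ell(\{x,y\})$. A simple cycle $C$ uses each of its edges exactly once in one orientation, so $\ell'(C)=\sum_{e\in C}\ell(e)=\ell(C)$.

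For the weight statement, the plan is to express the enclosed area as a signed sum of fundamental cycles. For each non-tree edge $e=(x,y)$, let $C_e$ be the fundamental cycle formed by $e$ and the $y$-$x$ path in $T$. Let $\chi_D$ be the signed indicator function on faces of an oriented closed walk $D$, that is, its winding number $\mathrm{wind}_D(f)$ around an interior point of face $f$. Then by definition $w(e)=\sum_f A(f)\,\mathrm{wind}_{C_e}(f)$. A positive value means the enclosed region lies to the left, i.e.\ the cycle is counterclockwise. Winding numbers are additive on the cycle space, and in particular they only depend on the $\mathbb{Z}$-chain of the walk.

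Now take a simple cycle $C$ with a fixed orientation, viewed as an integer edge chain. Because $T$ is a spanning tree, the fundamental cycles form a basis of the cycle space, and the coefficients are read off from the non-tree edges. Concretely,
\[
C = \sum_{e\in C\setminus T} C_e
\]
as oriented chains, where each $e$ is taken in the orientation used by $C$. To verify this, note that the difference of the two sides contains no non-tree edges. It is also a cycle, since it has zero boundary, and a cycle supported on a forest must be zero.

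Additivity of winding numbers then gives
\[
\sum_{e\in C\setminus T} w(e)=\sum_f A(f)\,\mathrm{wind}_C(f),
\]
and tree edges contribute weight $0$. So $w(C)=\sum_{e\in C} w(e)$ equals $\sum_f A(f)\,\mathrm{wind}_C(f)$. By the Jordan curve theorem, a simple cycle has winding number $\pm1$, with one fixed sign, on the faces it encloses and $0$ elsewhere. Hence $|w(C)|=A(C)$.

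The main obstacle is making the sign convention rigorous: ``left of $e$'' has to agree with the winding number of $C_e$. I would handle this by defining $w(e)$ precisely as $\sum_f A(f)\,\mathrm{wind}_{C_e}(f)$, which matches the stated convention for a simple fundamental cycle, since fundamental cycles are simple. I would also check that the areas lie inside the bounded faces, since the outer face has no area, so that they are well defined. The linear-algebra step over $\mathbb{Z}$ is routine.
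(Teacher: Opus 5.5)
Your proof is correct, and it takes a genuinely different route from the paper.

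\textbf{The paper's route.} It argues face by face. The duals of the non-tree edges form a spanning tree $T^*$ of the dual graph. The paper observes that $\pm c(u)$ appears as a summand of $w(e)$ exactly when the dual of $e$ lies on the $T^*$-path $P(u)$ from the outer face to $u$. It then considers the edges of $C$ crossed by $P(u)$ and shows that their signs alternate along $P(u)$. This step uses a fairly delicate planarity argument that builds a $K_5$. As a result, $c(u)$ is added $h+1$ times and subtracted $h$ times when $u$ lies inside $C$, and added and subtracted equally often when $u$ lies outside.

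\textbf{Your route.} You write $C$ in the fundamental cycle basis, $C=\sum_{e\in C\setminus T}C_e$. Your justification is exactly right: the difference has zero boundary and is supported on the forest $T$, so it vanishes. You then apply the linear functional $D\mapsto\sum_f A(f)\,\mathrm{wind}_D(f)$ to both sides. The only geometric input is the winding-number form of the Jordan curve theorem, used twice:
\begin{itemize}
\item once for the simple cycles $C_e$, to match your formula for $w(e)$ with the paper's left/right sign convention (as you note, this is not literally ``by definition'');
\item once for $C$ itself.
\end{itemize}

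\textbf{What each buys.} Your argument is much shorter and avoids the sign-alternation analysis entirely. It also proves more: $w(D)=\sum_f A(f)\,\mathrm{wind}_D(f)$ for every closed walk $D$ in $G_s$. This gives a clean meaning to the weights of the non-simple closed walks that appear in the proof of Lemma~\ref{lem:cycle_dissection}. The paper's accounting, in turn, makes explicit that $|w(e)|$ is the total weight of the subtree of $T^*$ cut off by the dual of $e$. That is the same fact behind the bottom-up, linear-time computation of all weights mentioned in the proof of Theorem~\ref{thm:algo_cycles}.
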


Park and Phillips use the search graph to find an optimal minimum quotient cut in pseudopolynomial time with respect to the sum of the vertex weights as follows: 
From the search graph an expanded graph is constructed, where every vertex additionally gets associated with a weight, ranging from $-W$ to $W$ with $W$ corresponding to the sum of all weights, i.e., a vertex $x\in V_s$ is replaced by vertices $x_c=(x,c)$ for $c\in \{-W,\dots,W\}$.
Similarly, for any edge $(x,y)\in E_s$ with weight $h$, the edge $((x,c),(y,c+h))$ is introduced for any $c\in \{-W,\ldots, W\}$ (if $\vert c+h\vert>W$, these edges are not required).
The length of each edge is set to the original length as in the search graph.
Now one can show that finding the optimal minimum quotient cut in the original graph is equivalent to finding a shortest path from some starting vertex $(x,0)$ to some $(x,b)$ for $b\in \{-W,\dots,W\}$ that optimizes their so-called pseudoquotient, defined as $\ell(C)/\min\{\vert w(C)\vert,W-\vert w(C)\vert\}$.

Next we discuss the modifications to the algorithm from Park and Phillips that are used for our algorithm. 
Let $\alpha > 1$ be fixed and $C_\alpha$ be a directed simple cycle that traverses the boundary of an optimal subset of polygons with respect to $c_\alpha$ in counterclockwise order.
First, we guess a vertex $u$ on this cycle.
Our algorithm does not construct the expanded graph but directly works on the search graph instead. 
In the search graph we compare paths from $u$ to an arbitrary vertex $v$ based on their respective length and weight.
For a given path $s$ in the search graph, its length and area correspond to the sum of the respective length and weight values of the edges contained in $s$.
Thus, a feasible solution (whether an intermediate path or the final cycle) is represented as $s\in \mathbb{R}^2$ with bicriteria cost $s=(\ell_s,w_s)$, where $\ell_s$ represents the path length and $w_s$ encodes the corresponding area.
Our goal is to minimize the first criterion while maximizing the second.

We say that a path $s_1$ \textit{dominates} another path $s_2$ if their endpoints are identical and $\ell_{s_1} \leq \ell_{s_2}$ and $w_{s_1} \geq w_{s_2}$, with at least one inequality being strict. 
For brevity, we write $s_1 \prec s_2$ in this case. 
A path $s$ is \emph{Pareto-optimal} if there does not exist another path $s'$ such that $s'\prec s$. 
The set of all paths that are not dominated by any other path is called the \emph{Pareto set}. 
Let $\PP_1, \PP_2$ be two sets of Pareto-optimal paths.
We define their \emph{combined} Pareto set as $\PP_1 \oplus \PP_2 := \{p \in \PP_1 \cup \PP_2 \mid p \text{ is Pareto-optimal in } \PP_1 \cup \PP_2\}$, which corresponds to all Pareto-optimal paths that are not dominated by any path $s \in \PP_1\cup \PP_2$.

We use the modified bicriteria graph $G_s$ to compute all Pareto-optimal cycles with at most $n$ edges. Since we cannot restrict the algorithm to only compute simple cycles (which follows from the NP-hardness of the longest path problem), the computed Pareto set will also contain non-simple cycles. 
Consequently, solutions might correspond to walks that traverse edges multiple times.
However, only simple cycles correspond to valid polygon selections in the original instance. Our algorithm selects among the computed Pareto set the solution that maximizes $c_\alpha$. We will show that this will always be a simple solution that also globally maximizes $c_\alpha$. 
\begin{figure}%
    \centering
     \begin{subfigure}[t]{0.3\textwidth} 
         \centering
         \includegraphics[width=\textwidth, page=6]{figures/example_pareto_path2.pdf} 
         \subcaption{Example polygonal instance.}
         \label{fig:search_graph_weights}
     \end{subfigure}
     \hfill
     \begin{subfigure}[t]{0.3\textwidth}
         \centering        
\includegraphics[width=\textwidth, page=7]{figures/example_pareto_path2.pdf} 
         \subcaption{Example path cost.}
         \label{fig:pareto_path}
     \end{subfigure}
    \hfill
    \begin{subfigure}[t]{0.3\textwidth}
        \centering
        \includegraphics[width=\textwidth, page=8]{figures/example_pareto_path2.pdf} 
        \subcaption{Dominance behavior.}
        \label{fig:dom_pareto_path}
    \end{subfigure}
    \caption{%
    An example polygon instance where area values are shown in violet. \\
    \cref{fig:search_graph_weights}: The spanning tree is highlighted in blue and all positive directed edges in dashed brown. \\    
    \cref{fig:pareto_path}: 
    Exemplary Pareto-optimal path, highlighted in red, with associated weight $16$.\\
    \cref{fig:dom_pareto_path}: 
    The red and orange paths have same length but the orange path has larger associated weight and therefore dominates the red path.
    }
    \label{fig:example_pareto_paths}
\end{figure}

We first construct $G_s$ from $G_\poly$. For each node $x\in V(G_s)$ we then compute the set of Pareto-optimal cycles that contain $x$ using the bicriteria Bellman-Ford algorithm~\cite{CM85}. For every node $y\in V(G_s)$ and $i\in \{0,\ldots,n\}$, this algorithm maintains a set $L_y^i$ of Pareto-optimal paths from $x$ to $y$ that consists of at most $i$ edges. 
In particular, $L_x^i$ contains all Pareto-optimal paths from $x$ to itself with at most $i$ edges.
If the cycle that maximizes $c_\alpha$ contains $x$, the corresponding solution is guaranteed to be contained in $L_x^n$. 
See \cref{alg:find_pareto_cycles} for the pseudocode.
\cref{fig:example_pareto_paths} provides an example instance and the dominance behavior of exemplary bicriteria paths.

\begin{algorithm}[!t]
\caption{Pareto-cycles}
\label{alg:find_pareto_cycles}
\begin{algorithmic}[1]
\Function{Pareto-cycles}{Instance $G_\poly=(G,\ell:E\rightarrow \mathbb{R}_{>0},A: 2^E\rightarrow \mathbb{N}_{0})$, $\alpha >1$}
\State Compute search graph $(G_s,\ell',w)$
\State $n\gets \lvert V(G_s) \rvert$
\For{$u\in V(G_s)$} \Comment{Guess vertex from optimal solution}\label{alg:guess_vertex}
    \State $L_x^0 \gets \emptyset$ for every $x\in V(G_s)\setminus \{u\}$\label{alg:start_subroutine_bellmann}
    \State $L_u^0 \gets \{(0,0)\}$    
    \For{$i=1$ to $n$} \Comment{Bicriteria Bellman-Ford}\label{alg:invariant}
        \For{$x\in V(G_s)$}
            \State $L_x^i \gets L_x^{i-1}$
        \EndFor
        \For{$e=(a,b)\in E(G_s)$}\label{alg:guess_edge}
            \State $L_{b,e}^{i} \gets \{(l_s,w_s) + (\ell'(e), w(e)) \mid (l_s,w_s) \in L_a^{i-1}\}$%
            \State $L_b^i \gets L_b^{i} \oplus L_{b,e}^i$  \Comment{Compute new Pareto set} \label{alg:subroutine_bellmann}
        \EndFor        
    \EndFor
    \State $s_u^* \gets \underset{(l',w')\in L_u^n\setminus{\{(0,0)\}} }{\argmax} \frac{ w'}{(l')^\alpha}$ \Comment{Evaluate all Pareto-cycles $u\rightarrow u$ by $c_\alpha$}\label{alg:find_best}%
\EndFor
\State \Return $\argmax_{(l^*,w^*)\in\{s_u^*\mid u\in V(G_s)\}}\frac{ w^*}{(l^*)^\alpha}$
\EndFunction%
\end{algorithmic}
\end{algorithm}

We note that the bicriteria shortest path problem is usually formulated for strictly positive cost functions. Nevertheless, the correctness proof for the bicriteria Bellman–Ford algorithm extends to our setting, in which some costs may be negative, by essentially the same argument as in the standard case (see \cref{apx:bellman_ford_negative}).
\begin{restatable}{lemma}{BellmanFordLemma}\label{lem:bellman_ford_negative}
    For any start vertex $u\in V(G_s)$, the bicriteria Bellman-Ford subroutine in Algorithm~\ref{alg:find_pareto_cycles}, lines \ref{alg:start_subroutine_bellmann}--\ref{alg:subroutine_bellmann}, computes for each vertex $v\in V(G_s)$ the set of Pareto-optimal $u$-$v$ paths that use at most $n$ edges.
\end{restatable}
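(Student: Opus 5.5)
We argue by induction on $i$ that after iteration $i$ of the loop in line~\ref{alg:invariant}, for every vertex $v\in V(G_s)$ the set $L_v^i$ equals the set of cost vectors $(\ell_s,w_s)$ of Pareto-optimal $u$-$v$ walks using at most $i$ edges. Duplicate vectors are identified, and the criteria are minimizing $\ell'$ and maximizing $w$. Walks may repeat vertices, since the algorithm does not enforce simplicity; "paths" in the statement should be read in this sense. The base case $i=0$ holds by initialization: the only walk with $0$ edges is the trivial walk at $u$, with cost $(0,0)$, and no other vertex has any walk of length $0$. The statement of the lemma is then the case $i=n$.

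For the inductive step, fix $v$ and let $\mathcal{W}_v^i$ be the set of cost vectors of $u$-$v$ walks with at most $i$ edges. Every such walk either has at most $i-1$ edges, or has exactly $i$ edges and ends with some edge $e=(a,v)$, its prefix being a $u$-$a$ walk with at most $i-1$ edges. Hence
\[
\mathcal{W}_v^i=\mathcal{W}_v^{i-1}\cup\bigcup_{e=(a,v)\in E(G_s)}\bigl(\mathcal{W}_a^{i-1}+(\ell'(e),w(e))\bigr).
\]
The key fact is that translating a set by a fixed vector commutes with taking Pareto sets. If $q\prec p$ then $q+c\prec p+c$, and conversely, for any $c\in\mathbb{R}^2$. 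This holds regardless of the signs of the coordinates of $c$. It is exactly the point where the negative weights $w(e)$ cause no trouble: dominance is defined through coordinatewise comparisons, which are translation invariant, and we never use that costs increase along a walk.

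Two further facts are needed. The first is that $\mathrm{Par}(X\cup Y)=\mathrm{Par}(\mathrm{Par}(X)\cup\mathrm{Par}(Y))$. The second is that every point of $\mathcal{W}_a^{i-1}$ is either Pareto-optimal or dominated by a Pareto-optimal point. For these I need the sets involved to be finite, and they are: $\mathcal{W}_a^{i-1}$ consists of costs of walks with at most $i-1$ edges in a finite graph, so it is a finite set. Dominance is a strict partial order on it, so minimal elements exist below every element, and both facts follow.

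Combining these gives $\mathrm{Par}(\mathcal{W}_v^i)=\mathrm{Par}\bigl(L_v^{i-1}\cup\bigcup_e(L_a^{i-1}+c_e)\bigr)$ with $c_e=(\ell'(e),w(e))$. The algorithm computes exactly the right-hand side, by initializing $L_v^i\gets L_v^{i-1}$ and then applying $\oplus$ successively with each $L_{v,e}^i$. Associativity of $\oplus$ (again from the union identity) shows that the order of the edge loop is irrelevant.

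The one subtlety is that the edge loop reads $L_a^{i-1}$ and not the partially updated $L_a^i$, so the induction is clean. I expect the main, if modest, obstacle to be justifying the "every dominated point is dominated by a Pareto point" step without positivity, since that is where standard proofs for positive costs appeal to shortest-prefix arguments. Restricting to walks with at most $i$ edges makes each set finite, so the argument goes through without any such appeal.
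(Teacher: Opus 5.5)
Your proof is correct and is essentially the paper's argument: induction on $i$, splitting each walk with at most $i$ edges at its last edge, and using translation invariance of dominance as the only place where the negative weights $w(e)$ could matter. The paper phrases that step as ``the prefix of a Pareto-optimal path is Pareto-optimal, since otherwise appending the last edge to a dominating prefix would dominate the whole path.'' Your set-identity formulation also gives the reverse inclusion, namely that no dominated vector survives in $L_v^i$, which the paper only proves in the containment direction; this is exactly where your finiteness observation is needed.
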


It is not difficult to show that a cycle $C^*$ corresponding to a solution maximizing $c_\alpha$ cannot be dominated by any other solution, whether simple or non-simple. 
The proof of the following lemma is deferred to \cref{apx:cycle_dissection}.
\begin{restatable}{lem}{cycleDissection}\label{lem:cycle_dissection}
For any $\alpha>1$, 
    a cycle $C^*$ corresponding to a solution maximizing $c_\alpha$ is Pareto-optimal.
\end{restatable}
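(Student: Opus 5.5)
The plan is to argue by contradiction directly in the search graph $(G_s,\ell',w)$. We compare $C^*$ against every closed walk through the same vertex, simple or not, since those are the objects the bicriteria Bellman--Ford routine keeps (\cref{lem:bellman_ford_negative}). Orient $C^*$ counterclockwise. Then $w(C^*)=A(C^*)>0$ and $\ell'(C^*)=\ell(C^*)>0$ by \cref{lem:weight_enclosed}.

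Suppose some nontrivial closed walk $W$ dominates $C^*$, so $\ell'(W)\le \ell'(C^*)$ and $w(W)\ge w(C^*)$, with at least one inequality strict. All edge lengths are positive, so $\ell'(W)>0$. Since also $w(W)\ge w(C^*)>0$, we get
\[
\frac{w(W)}{\ell'(W)^\alpha} > \frac{w(C^*)}{\ell'(C^*)^\alpha} = c_\alpha(C^*).
\]
It therefore suffices to show that no closed walk $W$ has $w(W)/\ell'(W)^\alpha > c_\alpha(C^*)$.

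To show this, decompose $W$ into simple directed cycles $C_1,\dots,C_k$ in $G_s$. Repeatedly split the walk at a repeated vertex; this gives $\ell'(W)=\sum_i \ell'(C_i)$ and $w(W)=\sum_i w(C_i)$. There are two kinds of pieces.
\begin{itemize}
\item A piece may be a $2$-cycle $(x,y),(y,x)$. Its weight is $0$: either both edges are tree edges, or the two fundamental cycles coincide with opposite orientation. Such a piece also corresponds to a degenerate cycle in $G_\poly$.
\item Every other piece is a simple cycle of $G_\poly$. By \cref{lem:weight_enclosed}, $w(C_i)\le |w(C_i)|=A(C_i)$, and $C_i$ bounds a region of polygons. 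By optimality of $C^*$ (together with \cref{obs:hole}, so the region has no holes), $A(C_i)\le c_\alpha(C^*)\,\ell'(C_i)^\alpha$.
\end{itemize}
For the $2$-cycles the same bound $w(C_i)=0\le c_\alpha(C^*)\,\ell'(C_i)^\alpha$ holds trivially. Summing over all pieces gives
\[
w(W)\le c_\alpha(C^*)\sum_i \ell'(C_i)^\alpha \le c_\alpha(C^*)\Bigl(\sum_i \ell'(C_i)\Bigr)^\alpha = c_\alpha(C^*)\,\ell'(W)^\alpha .
\]
The middle inequality is the superadditivity of $t\mapsto t^\alpha$ on positive reals for $\alpha>1$; this is the same elementary fact behind \cref{lemma:SingleCycle_x}. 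The resulting bound contradicts the strict inequality above, so no such $W$ exists.

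The main obstacle is the decomposition step. We must be sure that every piece of a non-simple walk in $G_s$ is either a genuine simple cycle of $G_\poly$, to which \cref{lem:weight_enclosed} and the optimality of $C^*$ apply, or a zero-weight back-and-forth traversal. Negatively oriented (clockwise) simple cycles need no separate treatment, because $w\le |w|$ already covers them. I would first verify the $2$-cycle claim from the definition of the weights, and then check that the splitting preserves the additive length and weight sums, which holds because the edge multisets agree.
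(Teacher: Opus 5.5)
Your proof is correct and follows the same route as the paper: decompose a dominating closed walk into simple cycles, use additivity of $\ell'$ and $w$, and use superadditivity of $t\mapsto t^\alpha$ to contradict the optimality of $C^*$. Your per-piece bound $w(C_i)\le |w(C_i)|\le c_\alpha(C^*)\,\ell'(C_i)^\alpha$ covers clockwise pieces and zero-weight back-and-forth $2$-cycles directly. This is cleaner than the paper's ``without loss of generality'' reorientation and removal steps and its informal extension of \cref{lemma:SingleCycle_x} to repeated polygons.
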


The correctness of \cref{alg:find_pareto_cycles} follows almost immediately from 
\cref{lem:bellman_ford_negative} and \cref{lem:cycle_dissection}. Its proof is deferred to \cref{apx:algo_cycles}.

\begin{restatable}{theorem}{algoCycle}\label{thm:algo_cycles}
\cref{alg:find_pareto_cycles} computes, for any given instance $G_\poly$ and $\alpha > 1$, the length and area of the optimal solution for $c_\alpha$. The runtime of the algorithm is $\OO(n^2 \cdot m \cdot \Pm)$ where $\Pm$ is the size of the largest Pareto set during computation.
\end{restatable}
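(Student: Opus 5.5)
The proof has two parts, correctness and running time.

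\textbf{Correctness.} Fix $\alpha>1$. By \cref{obs:MultipleCycles}, an optimal solution is bounded by a single simple cycle $C^*$ in $G_\poly$. We orient $C^*$ counterclockwise, so that its enclosed area lies to the left. By \cref{lem:weight_enclosed}, the corresponding directed closed walk in $G_s$ then has $w(C^*)=A(C^*)>0$ and $\ell'(C^*)=\ell(C^*)$; the sign is positive by the orientation convention. Since $C^*$ is simple, it uses at most $n$ edges.

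Consider the iteration of the outer loop in which $u$ is some vertex of $C^*$. By \cref{lem:bellman_ford_negative}, $L_u^n$ is the Pareto set of all $u$-$u$ walks with at most $n$ edges. By \cref{lem:cycle_dissection}, $C^*$ is not dominated by any such walk, simple or not. Hence either $(\ell(C^*),A(C^*))\in L_u^n$, or some walk with identical cost vector is in $L_u^n$. In both cases the value $c_\alpha(C^*)$ is attained by an element of $L_u^n\setminus\{(0,0)\}$. Note that $(0,0)$ is excluded and that $\ell(C^*)>0$.

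It remains to show that no element of any $L_x^n$ has a larger objective value than $c_\alpha(C^*)$. Otherwise a non-simple closed walk would be reported with a value that does not correspond to any polygon selection.

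\textbf{Main obstacle.} Ruling out such spurious walks is the main difficulty. I would take it from the argument behind \cref{lem:cycle_dissection}. A closed walk $W$ in $G_s$ decomposes into simple directed cycles $C_1,\dots,C_k$, so that $\ell'(W)=\sum_i \ell(C_i)$ and $w(W)=\sum_i w(C_i)$. By \cref{lem:weight_enclosed}, $w(C_i)\le |w(C_i)|=A(C_i)$. Moreover, $A(C_i)\le c_\alpha(C^*)\,\ell(C_i)^\alpha$ by optimality of $C^*$ among simple cycles. Then superadditivity of $t\mapsto t^\alpha$ for $\alpha>1$ gives
\[
w(W)\le c_\alpha(C^*)\sum_i \ell(C_i)^\alpha\le c_\alpha(C^*)\,\ell'(W)^\alpha .
\]
This shows that the returned maximum equals $c_\alpha(C^*)$. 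Walks with $w\le 0$ are harmless, and a walk that happens to be simple but clockwise has negative weight, so it is also harmless. Thus the algorithm outputs exactly the length and area values of an optimal solution. If the returned pair ties with the optimum, it has the same value, which suffices because the statement concerns the optimal length and area.

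\textbf{Runtime.} There are $|V(G_s)|=n$ choices of $u$, and $n$ Bellman–Ford rounds for each. Each round processes every edge of $G_s$, and there are $2m=\OO(m)$ of them. For each edge we shift a list of size at most $\Pm$ and merge it with another Pareto list of size at most $\Pm$. If the lists are kept sorted by length, the merge takes linear time $\OO(\Pm)$, and the shift preserves the sorted order. The final evaluation over $L_u^n$ costs only $\OO(\Pm)$ per $u$. Constructing $G_s$, which needs a spanning tree and the fundamental-cycle areas, takes polynomial time dominated by the main loop. In total this gives $\OO(n\cdot n\cdot m\cdot \Pm)=\OO(n^2\cdot m\cdot\Pm)$.
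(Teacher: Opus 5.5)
Your proposal is correct and is essentially the paper's proof: the optimal cycle's cost vector lands in $L_u^n$ by \cref{lem:cycle_dissection} and \cref{lem:bellman_ford_negative}, your decomposition-plus-superadditivity bound $w(W)\le c_\alpha(C^*)\,\ell'(W)^\alpha$ spells out the step ``no closed walk scores higher'' that the paper takes from the proof of \cref{lem:cycle_dissection}, and the runtime count is the same. Two small repairs are needed: directed two-edge cycles $x\to y\to x$ in the decomposition are not covered by \cref{lem:weight_enclosed}, but they have weight $0$ and so satisfy the bound trivially; and your claim that a tie ``suffices'' is wrong as stated, since a tying non-simple walk would not give the length and area of any polygon selection, so use strict superadditivity, $\sum_i\ell(C_i)^\alpha<\bigl(\sum_i\ell(C_i)\bigr)^\alpha$ for $k\ge 2$, which shows such ties cannot occur.
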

We note that our algorithm does not use the geometric properties of the polygons.
The edge lengths and areas of the polygons could also be defined as arbitrary positive values and do not need to correspond to Euclidean measures.

In the actual implementation, we do not store the respective paths explicitly, but only their current bicriteria costs. 
We note that the exact path of the optimal solution can be reconstructed by saving for each solution a pointer to the previous vertex from which the current solution was derived.

Instead of bounding the size of each Pareto set in each vertex by $\Pm$, we can alternatively bound the runtime by $n\cdot \Am$:
Since we assume that the areas of our polygons are integers and their sum is bounded from above by $\Am$, we can upper bound the weight of any edge in $G_s$ by $\Am$ and lower bound it by $-\Am$. 
As the optimal solution $C^*$ corresponds to a simple cycle, it contains at most $n$ edges.
This means that for each subpath $s_i$ of $C^*$ its weight can be upper bounded by $n \cdot \Am$ and lower bounded by $-n \cdot \Am$. 
Therefore we can upper bound the size of a list $L_u$ containing Pareto-optimal solutions by $2n \cdot \Am + 1$ (we do not save multiple solutions with the same value).

Alternatively, instead of bounding the runtime based on the total area, we can similarly bound the number of different Pareto-optimal solutions by the total perimeter, assuming that each edge length is an integer value. 
Since the edge lengths in the search graph are always positive, the length of any optimal subpath $s_i$ of $C^*$ can always be upper-bounded by $P_{\textrm{total}}$, where $P_{\textrm{total}}$ is the sum of the lengths of all edges of the polygons.
Any solution that exceeds this perimeter threshold can directly be discarded as it can not correspond to some subpath of an optimal solution.
Therefore, the size of a list $L_u^i$ containing Pareto-optimal solutions can also be bounded by $ P_{\textrm{total}}$.
This gives us the following corollary.
\begin{corollary}\label{cor:find_pareto_cycles_pmax}
If all area values are integers, the running time of 
Algorithm~\ref{alg:find_pareto_cycles} is $\OO(n^3\cdot m\cdot \Am)$.
Alternatively, if all edge lengths are integers, the runtime of Algorithm~\ref{alg:find_pareto_cycles} is $\OO(n^2\cdot m\cdot \Pt)$.
\end{corollary}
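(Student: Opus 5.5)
The corollary follows by substituting a bound on $\Pm$ into the running time $\OO(n^2 \cdot m \cdot \Pm)$ from \cref{thm:algo_cycles}; no other part of the analysis changes. The only thing to check is that each list $L_y^i$ has size at most $2n\cdot\Am+1 = \OO(n\cdot\Am)$ in the integral-area case and at most $\Pt+1=\OO(\Pt)$ in the integral-length case. I would also argue that pruning entries outside the respective ranges keeps the Pareto set for $L_u^n$ correct, or at least keeps the entry for the optimal cycle.

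For integral areas, every weight $w(e)$ in $G_s$ is an integer with $|w(e)|\le\Am$, since it is plus or minus the area enclosed by a fundamental cycle. Any path with at most $n$ edges therefore has an integer weight in $[-n\Am, n\Am]$. Two Pareto-optimal entries in the same list must have distinct weights; otherwise the one with smaller or equal length dominates, and duplicates of identical cost are stored once. So $|L_y^i|\le 2n\Am+1$. The same bound applies to each intermediate $L_{b,e}^i$, since it is a shifted copy of $L_a^{i-1}$. This gives $\Pm=\OO(n\Am)$ and total time $\OO(n^3\cdot m\cdot\Am)$.

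For integral lengths, lengths are positive integers, but a walk of at most $n$ edges can exceed $\Pt$ because edges may repeat. I would therefore add a pruning step that discards any entry of length greater than $\Pt$. This is safe for two reasons. First, the optimal simple cycle $C^*$ has length at most $\Pt$, and since lengths are positive, every prefix subpath of $C^*$ also has length at most $\Pt$. Second, by \cref{lem:bellman_ford_negative}, the Bellman--Ford invariant is only needed for the prefixes of $C^*$ or for entries dominating them. An entry dominating a prefix has length at most that prefix's length, so it is never pruned either.

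This domination argument is the main subtle point. I would redo the inductive step of \cref{lem:bellman_ford_negative} restricted to entries of length at most $\Pt$. The claim is that for every prefix $s$ of $C^*$ with $i$ edges, $L^i$ contains an entry of length at most $\ell_s$ and weight at least $w_s$. By \cref{lem:cycle_dissection}, this forces $C^*$'s cost itself to appear in $L_u^n$. After pruning, lengths are integers in $[0,\Pt]$ and distinct in each Pareto list, so $|L_y^i|\le\Pt+1$. Hence $\Pm=\OO(\Pt)$ and the running time is $\OO(n^2\cdot m\cdot\Pt)$.
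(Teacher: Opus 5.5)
Your proposal is correct and follows the paper's argument. In the integral-area case, path weights are integers in $[-n\cdot\Am, n\cdot\Am]$ and entries of a Pareto list have distinct weights, so $\Pm \le 2n\cdot\Am+1$; in the integral-length case, discarding entries longer than $\Pt$ gives $\Pm = \OO(\Pt)$; either bound is then plugged into the $\OO(n^2\cdot m\cdot\Pm)$ running time of \cref{thm:algo_cycles}. Your write-up is more careful than the paper's brief remark, since you apply the weight bound to all walks with at most $n$ edges and justify the pruning with a weak-dominance invariant, but the route is the same.
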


In contrast to~\cite{PP93}, who upper bounded the intermediate area of any subpath $s_i$ by $\Am$, we used $n\cdot \Am$ instead:
Though the final weight of an optimal solution will always be in $[-\Am,\Am]$, one can construct instances where at least one intermediate path 
always has weight strictly larger than $\Am$ or strictly smaller than $-\Am$. 
In \cref{sec:weight_bounds_instance} we give an example instance for sparsest cut for which any optimal path in the dual graph will have a weight larger than $\Am$ or smaller than $-\Am$ at some point. 
Since in~\cite{PP93} every such solution cannot be found because intermediate paths are always restricted to have weight between $-\Am$ and $\Am$, 
without further arguments one also needs to ensure that paths with associated weight in $[-n\cdot \Am, n\cdot \Am]$ are not removed.
This increases the runtime proposed in~\cite{PP93} to $\OO(n^3\cdot \Am\cdot \log (n\cdot \Am))$ if $n$ Dijkstra iterations are performed.

As mentioned above, the algorithm by Park and Phillips~\cite{PP93} can be adapted directly to obtain a solution that maximizes $c_\alpha$. However, since it is based on finding shortest paths in the search graph $G_s$, its running time grows pseudopolynomially in $\Am$ not only in the worst case but on all instances. The running time of our algorithm can instead be bounded in terms of $\Pm$, which is typically much smaller, though in the worst case it is also pseudopolynomial. In \cref{sec:exp} we present experiments on some real-world inputs showing that this is indeed the case. In particular, we study the effect of the precision of the numbers in the input. For this, we take a real-world instance with large integer area values and
round them to values with fewer significant digits.
This can be seen as reducing the precision of the numbers. While the running time of our algorithm is rather stable and does not increase significantly with the precision of the numbers, the running time of the algorithm by Park and Phillips grows rapidly with the required precision, making it impractical for realistic inputs. More detailed descriptions of the datasets and the observed behavior for both algorithms are given in \cref{sec:exp}.

It has been observed that Pareto sets for realistic inputs of many problems are much smaller than in the worst case. 
A theoretical explanation of this has been given in the framework of smoothed analysis. In this framework, one deviates from the usual worst-case perspective by assuming that an adversarial input is subject to random noise. 
This noise is supposed to model non-adversarial influences that are present in applications. 
Following previous work on smoothed analysis, we assume that an instance of the $\alpha$-circularity problem is given in which the length $\ell(e)$ of each edge $e$ in the graph $G_\poly$ is an independent random variable. 
We assume that the rest of the instance is chosen adversarially and that also the probability distributions of the edge lengths are chosen adversarially. 
To be more specific, we assume that an adversary can choose for each edge $e\in E$ a probability distribution $f_e:[0,1]\to[0,\phi]$ according to which the length $\ell(e)$ of $e$ is chosen independently from the other edge lengths. Here, the parameter $\phi\ge 1$ measures how close the analysis is to a worst-case analysis: For $\phi\to\infty$, the smoothed analysis becomes a worst-case analysis because the adversary can concentrate the entire probability mass on a single value. For $\phi=1$, the smoothed analysis resembles an average-case analysis in which every edge length is chosen uniformly at random from $[0,1]$.
This setup also represents the circularity setup in the geodetic setting, where for example edge lengths are subject to measurement errors and noise.

Using Theorem~1 from \cite{BeierRRV2022}
and the fact that every edge is contained at most $n$ times in any Pareto-optimal solution, it follows that $\Pm$ can be bounded by $\OO(n^2\cdot m^2 \cdot \phi)$ in expectation. 
Combined with \cref{thm:algo_cycles} this yields the following corollary.

\begin{corollary}
   If the length of every edge $e$ from $G_\poly$ is chosen independently according to a probability density $f_e:[0,1]\to[0,\phi]$, the runtime of Algorithm~\ref{alg:find_pareto_cycles} is $\OO(n^4\cdot m^3 \cdot \phi)$ in expectation.
\end{corollary}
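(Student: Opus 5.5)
The plan is to combine Theorem~1 of \cite{BeierRRV2022}, which bounds the expected Pareto set size, with the runtime bound of \cref{thm:algo_cycles}. By \cref{thm:algo_cycles}, the runtime is $\OO(n^2\cdot m\cdot \Pm)$. Here $\Pm$ is the largest of the Pareto sets $L_y^i$ that arise over all start vertices $u$, all iterations $i\le n$ and all vertices $y$. The expected runtime is therefore $\OO(n^2\cdot m\cdot \mathbb{E}[\Pm])$, and it suffices to show $\mathbb{E}[\Pm]=\OO(n^2\cdot m^2\cdot\phi)$.

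\textbf{Casting a Pareto set as a binary optimization problem.} Fix $u$, $y$ and $i$. Every path $s$ from $u$ to $y$ in $G_s$ with at most $i\le n$ edges has the cost vector $(\ell_s,w_s)$. I would encode $s$ by the multiplicity vector $(k_e)_{e\in E}$, where $k_e\in\{0,\dots,n\}$ counts how often $s$ traverses a directed copy of the polygon edge $e$. Then
\[
\ell_s=\sum_{e\in E} k_e\,\ell(e).
\]
This is a linear objective with random coefficients, whose integer coefficients are bounded by $n$. The weight $w_s$ depends only on the combinatorial structure of $s$ and not on the random lengths, so it is an adversarial, deterministic second objective.

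\textbf{Applying the smoothed bound.} Theorem~1 of \cite{BeierRRV2022} treats one linear objective with $\phi$-smooth coefficients and integer coefficients in a bounded range, together with an arbitrary deterministic second objective. In that setting it bounds the expected number of Pareto-optimal solutions polynomially in the number of variables, the coefficient range and $\phi$. Plugging in $m$ variables and range $n$ should give $\OO(n^2 m^2\phi)$ for each fixed triple $(u,y,i)$.

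\textbf{Main obstacle: from each list to the maximum.} I expect this step to be the hardest. $\Pm$ is a maximum over polynomially many lists, and a maximum of expectations does not bound the expectation of the maximum. The first thing I would try is to observe that the runtime of \cref{thm:algo_cycles} is really a sum over iterations and edges of the current list sizes, namely
\[
\sum_{u}\sum_{i\le n}\sum_{(a,b)\in E(G_s)} |L_a^{i-1}|.
\]
There are $n$ choices of $u$, $n$ iterations and $\OO(m)$ edges, so by linearity of expectation this sum is bounded by $\OO(n^2 m)$ times the per-list bound $\OO(n^2m^2\phi)$. This gives the claimed $\OO(n^4 m^3\phi)$ without ever needing $\mathbb{E}[\max]$.

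\textbf{Remaining checks.} The Bellman--Ford lists must coincide with true Pareto sets so that the smoothed bound applies to them; this follows from \cref{lem:bellman_ford_negative}. The combine operation $\oplus$ must run in time linear in the list sizes, which holds for sorted lists.
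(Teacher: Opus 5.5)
Your argument is the paper's: Theorem~1 of \cite{BeierRRV2022} applied with one variable per edge of $G_\poly$ and multiplicities at most $n$, combined with \cref{thm:algo_cycles}. The paper simply asserts $\mathbb{E}[\Pm]=\OO(n^2m^2\phi)$, so your linearity-of-expectation accounting over individual lists is a more careful justification that avoids the expected-maximum issue you correctly flag.

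To make that accounting complete, you also need to charge each merge in line~\ref{alg:subroutine_bellmann} for the current intermediate list $L_b^i$, which your displayed sum omits. This list is not covered by \cref{lem:bellman_ford_negative}. It is, however, the Pareto set of a fixed, length-independent family of $u$-$b$ walks: those with at most $i-1$ edges, plus those with $i$ edges whose last edge was already processed. So the same per-list bound and linearity still give $\OO(n^4 m^3\phi)$.
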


\section{Approximation Scheme}\label{sec:FPTAS}
Next, we design an FPTAS for the $\alpha$-circularity problem.
For this purpose we apply a standard rounding scheme to ensure that each iteration of the bicriteria Bellman-Ford algorithm runs in polynomial time in $m$ and $\frac{1}{\varepsilon}$.
We assume that the ratio between the total area $\Am$ and the smallest area $A_{\min}$ is bounded by some polynomial function $f(m)$, i.e., $\frac{\Am}{A_{\min}}\leq f(m)$.

By running Algorithm~\ref{alg:find_pareto_cycles} and replacing the area values $a_i$ of the edges in the search graph with rounded values $a_i'= \left\lfloor \frac{a_i}{K'} \right\rfloor$, where $K'=\frac{\varepsilon \cdot \Am}{f(m)\cdot m}$, we can ensure a polynomial runtime and guarantee that the computed solution is a $(1-\varepsilon)$-approximation to the solution based on the original area values.
The full details and proof are deferred to \cref{apx:FPTAS}.

\begin{restatable}{theorem}{FPTASRounding}\label{lemma:FPTAS}
If $\Am/A_{\min} \leq f(m)$ for some polynomial function $f(m)$, one can compute for any given $\alpha>1$ a $(1-\varepsilon)$-approximation for the $\alpha$-circularity problem in time $\OO\left(\frac{n^3\cdot m^2\cdot f(m)}{\varepsilon}\right)$.
\end{restatable}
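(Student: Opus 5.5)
We run \cref{alg:find_pareto_cycles} on the search graph $G_s$ with every edge weight $a_i=w(e)$ replaced by $a_i'=\lfloor a_i/K'\rfloor$, where $K'=\frac{\varepsilon\Am}{f(m)\, m}$. Edge lengths stay unchanged. The argument has three steps: bound the running time via \cref{thm:algo_cycles}, bound the rounding error on the optimal cycle, and argue that the returned cycle is evaluated with its true area.

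\textbf{Running time.} The rounded weights are integers with absolute value at most $\Am/K' = f(m)m/\varepsilon$. As in \cref{cor:find_pareto_cycles_pmax}, every path with at most $n$ edges has rounded weight in $[-n f(m)m/\varepsilon,\; n f(m)m/\varepsilon]$. We keep at most one solution per weight value, so $\Pm=\OO(n f(m) m/\varepsilon)$. Plugging this into the $\OO(n^2 m \Pm)$ bound of \cref{thm:algo_cycles} gives $\OO(n^3 m^2 f(m)/\varepsilon)$.

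\textbf{Rounding error.} Let $C^*$ be an optimal simple cycle with area $A^*=w(C^*)$ and length $\ell^*$. By \cref{lem:weight_enclosed} it has at most $m$ non-tree edges. Each floor loses less than one unit, so
\[
K'\,w'(C^*)\ \ge\ A^* - mK' \ =\ A^* - \varepsilon\Am/f(m)\ \ge\ A^*-\varepsilon A_{\min}\ \ge\ (1-\varepsilon)A^*,
\]
using $\Am/f(m)\le A_{\min}$ and $A^*\ge A_{\min}$. A lower bound of this form also holds for any cycle $C$: $K'w'(C)\ge w(C)-|C|K'$, which for simple cycles again costs at most $\varepsilon A_{\min}$. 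In the other direction, $K'w'(C)\le w(C)$ for every cycle, as long as the cycle is traversed in the positive orientation; for negatively weighted edges we may need to use the same floor convention consistently.

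\textbf{Correctness of the output.} In the rounded instance, the algorithm returns the cycle $\hat C$ maximizing $w'(C)/\ell(C)^\alpha$ over all cycles with at most $n$ edges. The chain we want is
\[
c_\alpha(\hat C)\ \ge\ \frac{K'w'(\hat C)}{\ell(\hat C)^\alpha}\ \ge\ \frac{K'w'(C^*)}{(\ell^*)^\alpha}\ \ge\ (1-\varepsilon)\,c_\alpha(C^*).
\]
The middle inequality is the optimality of $\hat C$ in the rounded instance, and the last is the rounding bound above.

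\textbf{Main obstacle.} The first inequality must be justified even though $\hat C$ may be non-simple, since the rounded instance need not satisfy \cref{lem:cycle_dissection}. To handle this, decompose the closed walk $\hat C$ into simple cycles. Both $w'$ and $\ell$ are additive over the pieces, and $\alpha>1$ makes $x\mapsto x^\alpha$ superadditive. The ratio argument of \cref{lemma:SingleCycle_x} then shows that some simple piece $C'$ has a rounded ratio at least as large. Because $K'w'(C')\le w(C')=\pm A(C')$, this $C'$ is a genuine polygon selection whose true circularity is at least the rounded value. This is what lets us report a simple solution with $c_\alpha\ge(1-\varepsilon)c_\alpha(C^*)$. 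The orientation and sign bookkeeping for negative rounded weights is the delicate point: a piece with negative rounded weight must be discarded, and doing so can only increase the ratio of the rest.
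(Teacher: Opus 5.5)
Your proposal is correct and follows the paper's proof. It uses the same rounding $a_i'=\lfloor a_i/K'\rfloor$ with $K'=\frac{\varepsilon\Am}{f(m)\,m}$, the same additive loss of at most $mK'\le\varepsilon A_{\min}\le\varepsilon A^*$ on the optimal cycle, and the same $\OO(n\,m\,f(m)/\varepsilon)$ bound on the Pareto sets plugged into \cref{thm:algo_cycles}.

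Your decomposition of a possibly non-simple returned walk into simple pieces is sound, and it handles a point the paper's proof leaves implicit by treating $S'$ as a simple feasible solution. Your orientation hedge is unnecessary: $\lfloor x\rfloor\le x$ holds for every real $x$, so $K'w'(W)\le w(W)$ for every walk $W$ regardless of orientation.
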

\section{\NP{}-hardness of the Circularity Problem}\label{sec:NP-hardness}
In this section we show weak \NP{}-hardness of the $\alpha$-circularity problem for $\alpha \in (1,2]$ via a reduction from the \textit{partition problem}, which is weakly \NP{}-complete:
given a list $Y= (a_1,\ldots, a_n)\in \NN^n$ with $Z:=\sum_{i=1}^n a_i$, we want to find an index set $I\subset [n]$ such that $\sum_{i\in I}a_i = Z/2$. 
For an arbitrary partition instance we construct an $\alpha$-circularity instance $\poly$ such that a specific $\alpha$-circularity score can be achieved if and only if there exists a feasible solution $I\subset [n]$ with $\sum_{i\in I}a_i=Z/2$.

\begin{figure}
\centering
\includegraphics[width=.35\linewidth]{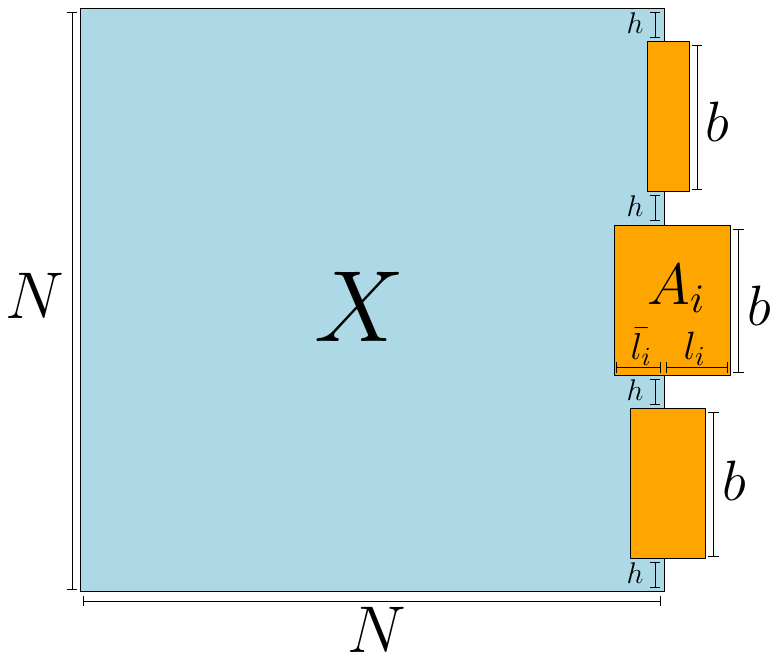}
\caption{Construction visualization for equivalent $\alpha$-circularity instance. }
\label{fig:hardness_instance}
\end{figure}

Intuitively, the construction works as follows: 
We begin with a large axis-aligned square with side length $N$.
Next, we place $n$ axis-aligned rectangles $R_1,\dots, R_n$ along the right border of the square $X$, such that
they partially cover the boundary of $X$.
Additionally, the rectangles are at distance $h$ from each other. 
The value of $N$ will be chosen later.
A visualization of the instance is shown in \cref{fig:hardness_instance}.
The large square, after being overlaid with the rectangles, has remaining area $\Ax$ and a larger perimeter $\Px$. 
We refer to $X$ as the \textit{almost-square}.

Each rectangle $R_i$ has area $a_i$ and including it in a solution containing $X$ increases the total area by $a_i$ and the perimeter by $\frac{a_i}{c}$, for some $c\in \RR_{>0}$. 
For the construction, we assume that $Z\geq 27$, otherwise the partition instance can be solved in constant time.

Each $R_i$ has vertical length $b$ and two horizontal lengths, where $l_i$ extends outside of $X$ and $\bar{l_i}$ extends inside of $X$, see \cref{fig:hardness_instance} for a visualization.
For our construction, the following two conditions must hold:
\begin{align}
    b(l_i+\bar{l_i}) &= a_i, \label{equ:area_cond} \\
    2(l_i-\bar{l_i}) &= \frac{a_i}{c}. \label{equ:perim_cond}
\end{align}
Equation~\eqref{equ:area_cond} corresponds to the area of rectangle $R_i$, which should be equal to $a_i$.
Equation~\eqref{equ:perim_cond} represents the change in perimeter when $R_i$ is added to any solution containing $X$.
Solving both equations for $l_i$ gives us 
\begin{align*}
    l_i &= \frac{a_i}{b} - \bar{l_i} &l_i &= \frac{a_i}{2c} + \bar{l_i}.
\end{align*}
Combining both equations and then solving for $\bar{l_i}$ and then for $l_i$ gives us
\begin{align*}
\bar{l_i} &= \frac{a_i}{2} \left( \frac{1}{b} - \frac{1}{2c} \right), &
l_i &= \frac{a_i}{2} \left( \frac{1}{b} + \frac{1}{2c} \right).
\end{align*}
 \begin{figure}%
    \centering
     \begin{subfigure}[t]{0.3\textwidth} 
         \centering
         \includegraphics[width=\textwidth, page=1]{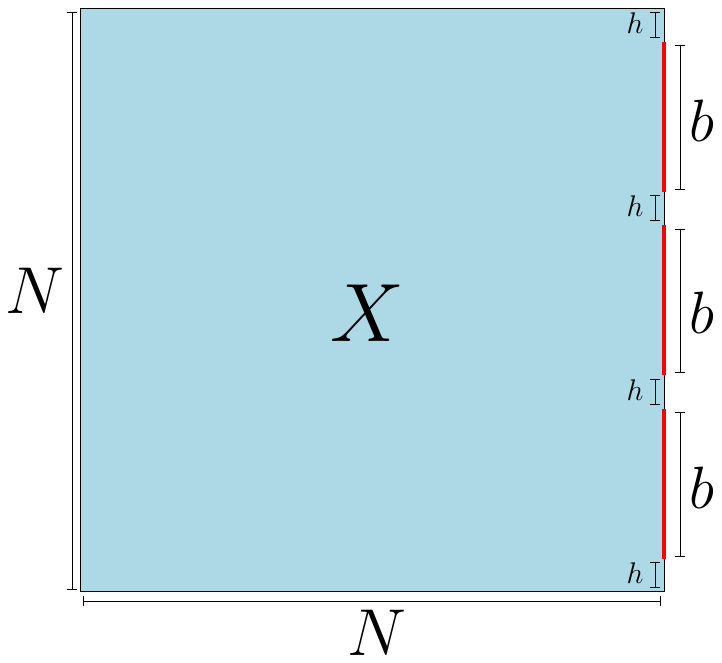} 
         \subcaption{}
         \label{fig:hardness_instance_lines}
     \end{subfigure}
     \hfill
     \begin{subfigure}[t]{0.3\textwidth}
         \centering        
\includegraphics[width=\textwidth, page=2]{figures/hardness_construction.pdf} 
         \subcaption{}
         \label{fig:hardness_instance_smaller}
     \end{subfigure}
    \hfill
    \begin{subfigure}[t]{0.3\textwidth}
        \centering
        \includegraphics[width=\textwidth, page=5]{figures/hardness_construction.pdf} 
        \subcaption{}
        \label{fig:hardness_instance_removing}
    \end{subfigure}
    \caption{Steps for constructing the $\alpha$-circularity hardness instance, as described in Algorithm~\ref{alg:generate_NP_instance}.\\
    \cref{fig:hardness_instance_lines}: Generate square $X$, then remove $n$ line segments of length $b$ at pairwise distance $h$ on the right side. \\\cref{fig:hardness_instance_smaller}: Fit all $n$ rectangles into the right side such that conditions for $\overline{l_i}$ and $l_i$ for each rectangle $R_i$ are fulfilled.\\
    \cref{fig:hardness_instance_removing}: Remove additional area and increase perimeter by clipping away corner parts (decrease in area) or taking detours at the border (increase perimeter).
    }
    \label{fig:construction_hardness}
\end{figure}
We need to show the following conditions to prove that our instance $\poly$ can always be constructed for every possible partition instance $Y$:

\begin{itemize}
    \item[(1)] For every rectangle $R_i$, all side lengths $b,l_i,\bar{l_i}$ are strictly positive. 
    \item[(2)] The rectangles can be placed evenly along the border of $X$ without touching each other, i.e., $h>0$.
    \item[(3)] No rectangle $R_i$ can cut through the almost-square, i.e., for all $i\in [n]$ we have $\bar{l_i} < N$.
\end{itemize}

In the following we assume $b=Z^2$ and $N=Z^5$. 
Additionally, we set $c = \frac{\alpha\cdot \Ax + Z/2\cdot(\alpha -1)}{\Px}$.
We can see that $c>0$ always holds if $\alpha>1$.
Assume that our instance is feasible, so the conditions $(1)$-$(3)$ hold.
We show correctness of the conditions in Lemmata~\ref{lem:construction_1} to \ref{lem:construction_3}.

Since each rectangle $R_i$ can reduce the area of the original square of side length $N$ by at most $a_i$, we know that $\Ax$ is at least $N^2-Z=Z^{10}-Z$.
We can upper bound the inside length $\bar{l_i}$ by $\bar{l_i}=\frac{a_i}{2}(\frac{1}{b}-\frac{1}{2c})\leq \frac{Z}{2}(\frac{1}{Z^2}-0) = \frac{1}{2Z}$ and therefore the total perimeter $\Px$ by $4N+2\sum_{i\in [n]}\bar{l_i}\leq 4Z^5 +2Z\cdot \frac{1}{2Z}= 4Z^5+1$.

\begin{algorithm}[!t]
\caption{Generate $\alpha$-circularity instance}
\label{alg:generate_NP_instance}
\begin{algorithmic}[1]
\Function{$\alpha$-circularity Instance}{Partition Instance $Y=(a_1,\ldots,a_n)\in \mathbb{N}^n, \alpha\in (1,2]$}
\State Set $Z:= \sum_{i \in [n]} a_i$
\State Set $N=Z^5$, $\Ax=Z^{10}-Z$, $\Px=4Z^5+1$, $c = \frac{\alpha \cdot \Ax + Z/2\cdot(\alpha -1)}{\Px},$ \Statex \hspace{1cm} $b=Z^2$ and $h=(N-n\cdot b)/(n+1)$
\State Draw full square with side length $N$
\State Remove $n$ line segments of length $b$ and pairwise distance $h$ on the right side of 
\Statex \hspace{1cm} the full square (\cref{fig:hardness_instance_lines})
\State Place rectangles $R_1,\dots,R_n$ at the removed line segments from the full square such  
\Statex \hspace{1cm} that for every $i\in [n]$ we have $l_i= \frac{a_i}{2}(\frac{1}{b} + \frac{1}{2c})$ and $\bar{l_i} = \frac{a_i}{2} (\frac{1}{b} - \frac{1}{2c})$ (\cref{fig:hardness_instance_smaller})
\State Modify the now almost-square $X$ such that it gains perimeter $1-2\sum_{i\in [n]}\overline{l_i}$ and loses 
\Statex \hspace{1cm} area $Z-b\sum_{i\in [n]}\overline{l_i}$ (\cref{fig:hardness_instance_removing})
\State \Return drawn instance
\EndFunction
\end{algorithmic}
\end{algorithm}

See Algorithm~\ref{alg:generate_NP_instance} for the procedure to generate our instance: Given an arbitrary instance $Y\subset \mathbb{N}^n$, we fix $\Ax=Z^{10}-Z,\Px=4Z^5+1$ and choose $c$ and $h$ accordingly. 
We start with a complete square, remove evenly spaced line segments, and insert our rectangles $R_1,\dots,R_n$ into these gaps. 
Afterwards, the area of the almost-square $X$ is $N^2-b\sum_{i\in [n]}\overline{l_i}$, and its perimeter is $4N + 2\sum_{i\in [n]}\overline{l_i}$.
Assuming $\overline{l_i}>0$ for all $i$, we modify $X$ to increase perimeter by $1-2\sum_{i\in [n]}\overline{l_i}$, which, by our previous reasoning, will correspond to a value of at least $0$,
and decrease area by $Z-b\sum_{i\in [n]}\overline{l_i}$, ensuring that the final area and perimeter satisfy $\Px=4Z^5+1$ and $\Ax=Z^{10}-Z$.
Figure~\ref{fig:construction_hardness} illustrates this process.

From the construction it also follows that area loss and perimeter gain are always non-negative.
By removing corner parts of $X$ (top left in Figure~\ref{fig:hardness_instance_removing}), we can reduce area while keeping the perimeter unchanged. 
By introducing detours inside and outside of $X$ (bottom left in Figure~\ref{fig:hardness_instance_removing}), we can increase its perimeter while maintaining the same area. 

For a partition solution $I\subseteq [n]$ we define $S_I=R_I\cup \{X\}$, where $R_I:=\{R_i \in \{R_1,\ldots,R_n\} \mid i\in I \}$ 
is the set of selected rectangles along with $X$.
This solution yields an $\alpha$-circularity score of 
\begin{align*}
    c_\alpha(S_I) = \frac{\Ax+\sum_{i\in I}A(R_i)}{(\Px+\sum_{i\in I}2(l_i-\bar{l_i}))^\alpha}
    =\frac{\Ax+\sum_{i\in I}(l_i+\bar{l_i})b}{(\Px+\sum_{i\in I}2(l_i-\bar{l_i}))^\alpha}
    = \frac{\Ax+\sum_{i\in I} a_i}{(\Px+\frac{1}{c}\sum_{i\in I}a_i)^\alpha}.
\end{align*} 
By treating $\sum_{i\in I}a_i$ as a continuous variable $x\in \RR_{\geq 0}$, this allows us to express the cost function as $c_\alpha(x)=\frac{\Ax+x}{(\Px+\frac{x}{c})^\alpha}$. 
With $c = \frac{\alpha \cdot \Ax + Z/2\cdot(\alpha -1)}{\Px}$, the function
$c_\alpha$ attains its maximum for $x=Z/2$.
The proof of the following lemma is deferred to \cref{apx:function_maximum_hardness}:
\begin{restatable}{lemma}{FunctionMaximumLemma}\label{lem:function_maximum_hardness}
    The function $c_\alpha(x)$ attains its unique maximum for $x\in \mathbb{R}_{\geq 0}$ at $x=Z/2$.
\end{restatable}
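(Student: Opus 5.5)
We maximize $c_\alpha(x)=\frac{\Ax+x}{(\Px+x/c)^\alpha}$ on $x\ge 0$ by elementary calculus. The function is smooth and positive on $\RR_{\geq 0}$, because $c>0$ and $\Px>0$. We differentiate by the quotient rule:
\begin{align*}
c_\alpha'(x) = \frac{(\Px+x/c)^{\alpha} - (\Ax+x)\,\frac{\alpha}{c}(\Px+x/c)^{\alpha-1}}{(\Px+x/c)^{2\alpha}}
= \frac{\Px+\frac{x}{c} - \frac{\alpha}{c}(\Ax+x)}{(\Px+x/c)^{\alpha+1}}.
\end{align*}
The denominator is strictly positive, so the sign of $c_\alpha'(x)$ equals the sign of the numerator
\[
g(x) = \Px - \frac{\alpha \Ax}{c} + \frac{x}{c}(1-\alpha).
\]

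This $g$ is affine in $x$ with slope $(1-\alpha)/c$. Since $\alpha>1$ and $c>0$, the slope is strictly negative, so $g$ is strictly decreasing and has exactly one root $x_0$. Solving $g(x_0)=0$ gives $c\Px = \alpha\Ax + x_0(\alpha-1)$, that is,
\[
x_0 = \frac{c\Px-\alpha\Ax}{\alpha-1}.
\]
Substituting $c=\frac{\alpha\Ax+Z/2\cdot(\alpha-1)}{\Px}$ yields $c\Px-\alpha\Ax = \frac{Z}{2}(\alpha-1)$, hence $x_0=Z/2$. Note that $Z/2\ge 0$ lies in the domain.

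It follows that $c_\alpha'>0$ on $[0,Z/2)$ and $c_\alpha'<0$ on $(Z/2,\infty)$. Thus $c_\alpha$ is strictly increasing up to $Z/2$ and strictly decreasing afterwards, so $x=Z/2$ is the unique global maximizer on $\RR_{\geq 0}$.

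The only step needing care is simplifying the derivative correctly, namely factoring out $(\Px+x/c)^{\alpha-1}$, and checking that $c>0$ so the sign argument is valid. Both are routine, and $c>0$ was already noted in the construction.
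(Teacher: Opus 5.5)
Your proof is correct. You find the critical point the same way the paper does: factor $(P_X+x/c)^{\alpha-1}$ out of the numerator of $c_\alpha'$ and solve $cP_X+x=\alpha(A_X+x)$. Where you differ is in how you show this point is the unique global maximizer.

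The paper computes the second derivative $c_\alpha''$ and evaluates it at $x=Z/2$. Its numerator there is proportional to $\tfrac12(\alpha-1)(-2A_X-Z)<0$. The paper then argues that $Z/2$ is the only critical point on $[0,\infty)$ and is a local maximum, hence the unique maximizer. That last step silently uses a standard fact: a differentiable function on an interval whose only critical point is a strict local maximum attains its global maximum there.

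You instead note that the sign-determining factor $g(x)$ is affine with strictly negative slope $(1-\alpha)/c$. This gives the sign of $c_\alpha'$ on all of $[0,\infty)$ at once. So $c_\alpha$ is strictly increasing on $[0,Z/2]$ and strictly decreasing on $[Z/2,\infty)$, and global uniqueness follows immediately. You need no second-derivative computation, and only $c>0$ and $\alpha>1$; the paper's appeal to $Z\ge 27$ is not needed.

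The paper's route gives only local curvature information at the critical point. Yours is shorter and makes the global step explicit rather than implicit.
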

With \cref{lem:function_maximum_hardness} we know that if the instance is feasible, i.e., it can be constructed as previously described, and if the almost-square $X$ is included in the solution, the optimal $\alpha$-circularity is achieved by selecting the rectangles from a partition solution $I\subset [n]$ such that the total area sums up to $Z/2$. 
It remains to show that no better solution exists and that the instance is feasible.
The proofs for the construction Lemmata~\ref{lem:construction_1} to \ref{lem:construction_3}, as well as the condition that no other solution can achieve a better $\alpha$-circularity score than $c_\alpha(S_I)$, are deferred to \cref{apx:NP-hardness}.

\begin{restatable}[Condition (1)]{lemma}{ConstructionFirst}
\label{lem:construction_1}
    For every rectangle $R_i$ we have $b>0$, $l_i>0$ and $\overline{l_i}>0$.
\end{restatable}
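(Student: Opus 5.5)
We need $b=Z^2>0$, which is immediate. Since $a_i\ge 1$ and $c>0$, the formula $l_i=\frac{a_i}{2}\left(\frac{1}{b}+\frac{1}{2c}\right)$ gives $l_i>0$ at once. The only real content is $\overline{l_i}>0$, which holds if and only if $\frac{1}{b}>\frac{1}{2c}$, that is, $2c>b=Z^2$. So the plan is to lower bound $c$ using the fixed values $\Ax=Z^{10}-Z$ and $\Px=4Z^5+1$.

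For $\alpha>1$ we have $\alpha\cdot\Ax\ge \Ax$ and $Z/2\cdot(\alpha-1)>0$. Hence
\[
c=\frac{\alpha\cdot\Ax+Z/2\cdot(\alpha-1)}{\Px}>\frac{Z^{10}-Z}{4Z^5+1}.
\]
It then suffices to show $2(Z^{10}-Z)\ge Z^2(4Z^5+1)$, i.e.\ $2Z^{10}-4Z^7-Z^2-2Z\ge 0$. For $Z\ge 27$ (the standing assumption) this is routine: $2Z^{10}\ge 2\cdot 27^3 Z^7$ dominates $4Z^7+Z^2+2Z$ by a wide margin. This gives $2c>Z^2=b$, hence $\frac{1}{2c}<\frac{1}{b}$, and so $\overline{l_i}=\frac{a_i}{2}\left(\frac1b-\frac1{2c}\right)>0$ because $a_i>0$.

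The one point needing care is circularity. The values $\Ax,\Px$ used in $c$ are the target values fixed in Algorithm~\ref{alg:generate_NP_instance}, not quantities depending on the $\overline{l_i}$. So $c$ is well defined before the rectangles are placed, and the bound above involves no self-reference.
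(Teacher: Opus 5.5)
Your proof is correct and follows essentially the same route as the paper: both reduce $\overline{l_i}>0$ to the inequality $\frac{1}{b}>\frac{1}{2c}$ and then substitute the fixed values $\Ax=Z^{10}-Z$, $\Px=4Z^5+1$, $b=Z^2$. The only difference is cosmetic. The paper argues by contradiction and isolates $\alpha\le\frac{4Z^6+Z+1}{2Z^9-1}<1$, whereas you bound $c$ from below by replacing $\alpha$ with $1$ and then check $2(Z^{10}-Z)\ge Z^2(4Z^5+1)$ directly.
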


\begin{restatable}[Condition (2)]{lemma}{ConstructionSecond}
\label{lem:construction_2}
We can position all rectangles along the border of $X$ such that none overlap and are at distance $h>0$ separated from each other.
\end{restatable}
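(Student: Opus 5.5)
The plan is to place the $n$ rectangles along the right side of the square of side length $N=Z^5$ so that each occupies a segment of length $b=Z^2$ on that side. Consecutive segments, and the segments next to the two corners, are separated by gaps of length $h=(N-n\cdot b)/(n+1)$, exactly as in Algorithm~\ref{alg:generate_NP_instance}. The whole statement then reduces to two points. First, the vertical extents of the rectangles are disjoint intervals. Second, $h>0$, which is equivalent to $N>n\cdot b$.

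For the second point I use $a_i\in\mathbb{N}$, so every $a_i\geq 1$ and therefore $n\leq Z$. This gives $n\cdot b\leq Z\cdot Z^2=Z^3<Z^5=N$, where the last inequality holds because $Z\geq 27>1$. Hence $h=(N-nb)/(n+1)>0$.

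For the first point I take the $i$-th rectangle to occupy the vertical interval $[i\cdot h+(i-1)b,\; i\cdot h+ib]$. These intervals lie inside $[0,N]$, since the last one ends at $nh+nb=N-h<N$. Consecutive intervals are separated by exactly $h>0$. So no two rectangles overlap or touch, whatever their horizontal lengths $l_i,\bar{l_i}$ are. This step uses only that these lengths are positive, which is Lemma~\ref{lem:construction_1}.

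There is no real obstacle in this argument. One small issue is worth checking: the later modification of $X$, which clips corners and adds detours, must not interfere with the rectangles. The gaps $h$ before the first and after the last rectangle are of order $Z^5$, which leaves ample room for the corner clipping. The detours are placed on the bottom and left sides, away from the rectangles. Whether each $\bar{l_i}$ stays inside $X$ is condition (3), handled separately.
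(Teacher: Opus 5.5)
Your proof is correct and follows the paper's argument: you use $a_i\geq 1$ to get $n\leq Z$, hence $nb\leq Z^3<Z^5=N$ and $h=(N-nb)/(n+1)>0$, and your explicit intervals $[ih+(i-1)b,\, ih+ib]$ just make the paper's equal spacing concrete. One small correction to your side remark: since $n$ can be close to $Z$, you only have $h\geq Z^3(Z-1)$, so $h$ is guaranteed to be of order $Z^4$ rather than $Z^5$, which is still ample room for the corner clipping.
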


\begin{restatable}[Condition (3)]{lemma}{ConstructionThird}
\label{lem:construction_3}
No rectangle $R_i$ can cut through the almost-square, i.e., $\bar{l_i}<N$ for every $i\in [n]$.
\end{restatable}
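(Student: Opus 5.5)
Since $\bar{l_i} = \frac{a_i}{2}\left(\frac{1}{b}-\frac{1}{2c}\right)$, the plan is to bound $\bar{l_i}$ above by something far smaller than $N = Z^5$. The only ingredient needed is $c>0$, which was observed right after the choice $c = \frac{\alpha \Ax + Z/2\cdot(\alpha-1)}{\Px}$: since $\alpha>1$, $\Ax = Z^{10}-Z>0$ and $\Px>0$, both summands in the numerator are positive.

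From $c>0$ we get $\frac{1}{2c}>0$, so $\frac{1}{b}-\frac{1}{2c} < \frac{1}{b} = \frac{1}{Z^2}$. Using $a_i \le Z$ then gives
\[
\bar{l_i} < \frac{a_i}{2Z^2} \le \frac{Z}{2Z^2} = \frac{1}{2Z}.
\]
This is the same estimate already used in the text to bound $\Px$. Because $Z \ge 27$, we have $\frac{1}{2Z} < 1 < Z^5 = N$, and hence $\bar{l_i} < N$ for every $i\in[n]$.

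It is worth checking explicitly that $c$ is well defined independently of the $\bar{l_i}$. Algorithm~\ref{alg:generate_NP_instance} fixes $\Ax$ and $\Px$ to the constants $Z^{10}-Z$ and $4Z^5+1$ before computing $c$ and the lengths, so $c$ does not depend on the $\bar{l_i}$ and there is no circularity in the argument. There is no real obstacle here. For the geometric interpretation of ``cutting through,'' I would also note that the rectangle's inner part lies within horizontal depth $\bar{l_i}<N$ from the right side. Since the square has width $N$, it cannot reach the left side. Moreover, the later modifications of $X$ (corner clipping and detours) can be placed away from the rectangles, because $h>0$ by Lemma~\ref{lem:construction_2}.
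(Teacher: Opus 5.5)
Your proof is correct and is essentially the paper's argument. The paper writes $\bar{l_i} = a_i/b - l_i$ and uses the area condition together with $l_i>0$ to get $\bar{l_i}\le 1/Z$. You use the closed form with $c>0$ to get the slightly sharper $\bar{l_i}<1/(2Z)$. In both versions $\bar{l_i}$ is far below $N=Z^5$.
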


\begin{restatable}{lemma}{SingleRectangle}
\label{lemma:NP-hardness_single_rectangle}
For any nonempty solution $S\subseteq \{R_1,\ldots,R_n,X\}$ we have $c_\alpha(S) \leq \frac{A_X+Z/2}{(P_X+ Z/(2c))^\alpha}$. Additionally, equality holds if and only if $X\in S$ and $\sum_{R_i\in S\setminus \{X\}} a_i = \frac{Z}{2}$.
\end{restatable}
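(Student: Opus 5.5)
We split into two cases according to whether $X\in S$.

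\emph{Case $X\in S$.}
First we check that each rectangle $R_i$ shares boundary only with $X$ and the outer face. This holds because $h>0$ (\cref{lem:construction_2}) and $\bar{l_i}<N$ (\cref{lem:construction_3}). So adding $R_i$ to a set containing $X$ changes the area by exactly $a_i$ and the perimeter by exactly $2(l_i-\bar{l_i})=a_i/c$, independently of which other rectangles are chosen. Hence $c_\alpha(S)=c_\alpha(x)$ with $x=\sum_{R_i\in S\setminus\{X\}}a_i\ge 0$. \cref{lem:function_maximum_hardness} then gives $c_\alpha(S)\le c_\alpha(Z/2)$, which is exactly the claimed bound. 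Uniqueness of the maximizer gives equality if and only if $x=Z/2$.

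\emph{Case $X\notin S$.}
Now $S$ is a nonempty set of pairwise disjoint rectangles. By \cref{lemma:SingleCycle_x}, extended to several components as in \cref{obs:MultipleCycles}, $c_\alpha(S)\le\max_i c_\alpha(\{R_i\})$. So it suffices to show $c_\alpha(\{R_i\})<c_\alpha(Z/2)$ strictly for every $i$. We bound crudely:
\[
c_\alpha(\{R_i\})=\frac{a_i}{(2b+2(l_i+\bar{l_i}))^\alpha}\le \frac{Z}{(2Z^2)^\alpha}<1 .
\]
For the right-hand side, recall $\Ax=Z^{10}-Z$ and $\Px=4Z^5+1$. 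From the definition, $c=\Theta(Z^5)$, so $Z/(2c)=O(Z^{-4})$ and $\Px+Z/(2c)\le 4Z^5+2$. Since $\alpha\le 2$,
\[
c_\alpha(Z/2)\ge \frac{Z^{10}-Z}{(4Z^5+2)^2},
\]
which for $Z\ge 27$ is about $1/16$.

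Here the strict inequality is not automatic. The single-rectangle bound is at most $Z^{1-2\alpha}$, and because $\alpha>1$ this is at most $Z^{-1}\le 1/27<1/17$. So the comparison needed is $Z^{1-2\alpha}<(Z^{10}-Z)/(4Z^5+2)^\alpha$ for $\alpha\in(1,2]$. The approach is to bound $(4Z^5+2)^\alpha\le (4Z^5+2)^2$ and then check $Z^{-1}<(Z^{10}-Z)/(4Z^5+2)^2$ for $Z\ge27$. This is exactly where both $\alpha\le2$ and $Z\ge 27$ are used.

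\emph{Main obstacle.}
The delicate step is the first case: justifying rigorously that area and perimeter are additive in the chosen rectangles. This requires the rectangles to touch only $X$ and the outer face, and not each other or the corner-clipping and detour modifications. We handle it by placing those modifications away from the right side of $X$, on the left and top-left borders as in \cref{fig:hardness_instance_removing}.
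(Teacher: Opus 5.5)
Your proposal is correct and follows essentially the same route as the paper. \cref{lem:function_maximum_hardness} handles $X\in S$; \cref{lemma:SingleCycle_x} reduces $X\notin S$ to a single rectangle with score at most $Z^{1-2\alpha}$; the final comparison uses the same two relaxations, namely $(P_X+Z/(2c))^\alpha\le(P_X+Z/(2c))^2$ from $\alpha\le 2$ and $Z^{2\alpha-1}\ge Z$ from $\alpha>1$, together with $Z\ge 27$. Your explicit argument that the rectangles add to area and perimeter additively when $X\in S$ goes beyond the paper, which takes this for granted from the construction.
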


Now, the correctness of the reduction follows directly from the properties established earlier. The proof of the following theorem is deferred to \cref{apx:NP-hardness}.

\begin{restatable}{theorem}{NPhardness}\label{thm:NP-hardness-circ}
    The $\alpha$-circularity problem is weakly \NP{}-hard for $\alpha\in (1,2]$.
\end{restatable}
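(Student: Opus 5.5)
The proof is a polynomial-time reduction from partition, using Algorithm~\ref{alg:generate_NP_instance}. Given $Y=(a_1,\ldots,a_n)$ with $Z\ge 27$ (smaller instances are solved directly), we build the instance $\poly$ with $N=Z^5$, $b=Z^2$, $\Ax=Z^{10}-Z$, $\Px=4Z^5+1$ and the stated $c$. The threshold is
\[
\tau := \frac{\Ax+Z/2}{(\Px+Z/(2c))^\alpha}.
\]
We claim that $Y$ is a yes-instance if and only if some $S\subseteq\poly$ has $c_\alpha(S)\ge\tau$. Since the optimal solution value exceeds $\tau$ exactly in the yes case, an exact $\alpha$-circularity algorithm decides partition.

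\textbf{Validity of the instance.} By Lemmata~\ref{lem:construction_1}--\ref{lem:construction_3} the rectangles have positive side lengths, are disjoint and separated by $h>0$, and do not cut through $X$. The modification step in Algorithm~\ref{alg:generate_NP_instance} asks for a non-negative perimeter increase $1-2\sum\bar l_i$ and a non-negative area loss $Z-b\sum\bar l_i$. Both follow from the bound $\bar l_i\le a_i/(2Z)$, which gives $2\sum_i\bar l_i\le 1$ and $b\sum_i\bar l_i\le Z/2$. Corner clipping and boundary detours realize these changes exactly, so we obtain a valid plane subdivision without holes.

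\textbf{Encoding and size.} The numbers involved are $Z^{10}$, $c$, and the lengths $\bar l_i,l_i$, which are rationals of polynomial bit length in $n$ and $\log Z$, provided $\alpha$ is rational. One caveat: irrational $\alpha$ makes $c$ depend on $\alpha$ symbolically, so I would assume $\alpha$ is a fixed rational. The number of polygons is $n+1$, plus $O(1)$ vertices for the clipping and detours. Areas and lengths are polynomially bounded in $Z$ up to rational denominators, so multiplying by a common denominator makes them integers of magnitude polynomial in $Z$. This is why the hardness is weak and stays consistent with the pseudopolynomial algorithm.

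\textbf{Equivalence.} If $I$ is a partition solution, then $S_I=R_I\cup\{X\}$ satisfies $c_\alpha(S_I)=c_\alpha(Z/2)=\tau$ by the computation preceding Lemma~\ref{lem:function_maximum_hardness}. Conversely, suppose $c_\alpha(S)\ge\tau$ for some $S$. Lemma~\ref{lemma:NP-hardness_single_rectangle} gives $c_\alpha(S)\le\tau$, with equality only if $X\in S$ and the chosen rectangles sum to $Z/2$; this yields a partition solution. The lemma covers all subsets, including disconnected ones and subsets without $X$ (whose components are rectangles alone), so no case is missing.

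The main difficulty is already absorbed into Lemma~\ref{lemma:NP-hardness_single_rectangle}, so the remaining risk is the encoding bookkeeping. The step needing the most care is making sure the geometric modification can be carried out with coordinates of polynomial bit size. For this I would take an axis-parallel rectangular notch at a corner of $X$ for the area loss and a thin rectangular spike for the perimeter gain, both with explicitly computable rational dimensions.
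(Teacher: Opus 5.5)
Your proposal is correct and is essentially the paper's proof. It uses the same partition reduction, establishes feasibility via Lemmata~\ref{lem:construction_1}--\ref{lem:construction_3}, and applies \cref{lemma:NP-hardness_single_rectangle} (built on \cref{lem:function_maximum_hardness}) to get $c_\alpha(S)\le\tau$ for every $S$, with equality exactly when $X\in S$ and the chosen rectangles sum to $Z/2$. Your encoding-size remarks and the rational-$\alpha$ caveat are bookkeeping the paper leaves implicit.

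There are two small slips. First, the optimum never \emph{exceeds} $\tau$; it attains $\tau$ exactly in the yes case. Second, $b\sum_i\bar{l_i}\le Z/2$ does not follow from $\bar{l_i}\le a_i/(2Z)$, which only gives $Z^2/2$. It follows from the sharper bound $\bar{l_i}\le a_i/(2b)=a_i/(2Z^2)$.
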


Using the previously defined construction, one can also show \NP{}-hardness for related extensions of the $\alpha$-circularity problem as well.
In \cref{sec:NP-hardness-districting}, we show that the problem of partitioning all given polygons into exactly $k\in \NN$ classes such that the minimum $\alpha$-circularity score over the union of polygons in each class is maximized is also weakly \NP{}-hard.

\bibliography{refs}

\newpage
\appendix

\section{Omitted Proofs from \cref{sec:Preliminaries}}\label{apx:singleCyclex}

\subsection{Proof of Lemma~\ref{lemma:SingleCycle_x}}
\label{apx:singlecycleproof}
\SingleCyclex*

\begin{proof}
Let $A_1:=A(S_1),A_2:=A(S_2)$ denote the areas of both sets and let $P_1:=\ell(S_1),P_2:=\ell(S_2)$ be their perimeters.
Additionally, we have $P_1,P_2>0$ because each edge of a polygon has a positive length. 
Since the area and perimeter of solution $S_1$ remain unchanged when $S_2$ is removed (and vice versa), we can express the area and perimeter of $S$ as the sum of the individual areas and perimeters of $S_1$ and $S_2$, i.e., $A(S) = A(S_1) + A(S_2)$ and $P(S) = P(S_1) + P(S_2)$.
Assume for contradiction that the lemma does not hold.
This implies
\begin{alignat}{2}
&~c_\alpha(S)= \frac{A_1+A_2}{(P_1+P_2)^\alpha} &&\geq \max\left\{\frac{A_1}{P_1^\alpha},\frac{A_2}{P_2^\alpha}\right\} \geq \frac{A_1}{P_1^\alpha} \nonumber \\
\Leftrightarrow &  ~P_1^\alpha(A_1 + A_2) &&\geq A_1(P_1 + P_2)^\alpha. \tag{i}\label{eq:ineq1}
\end{alignat}
By identical reasoning we also get
\begin{alignat}{2}
&~c_\alpha(S)= \frac{A_1+A_2}{(P_1+P_2)^\alpha} &&\geq \max\left\{\frac{A_1}{P_1^\alpha},\frac{A_2}{P_2^\alpha}\right\} \geq \frac{A_2}{P_2^\alpha} \nonumber \\
\Leftrightarrow &  ~P_2^\alpha(A_1 + A_2) &&\geq A_2(P_1 + P_2)^\alpha. \tag{ii} \label{eq:ineq2}
\end{alignat}
Combining inequalities (\ref{eq:ineq1}) and (\ref{eq:ineq2}) results in
\begin{alignat}{2}
&~(P_1^\alpha + P_2^\alpha)(A_1 + A_2) &&\geq (P_1+P_2)^\alpha(A_1+A_2) \nonumber \\
\Leftrightarrow &~ P_1^\alpha + P_2^\alpha && \geq (P_1 + P_2)^\alpha. \tag{iii} \label{eq:ineq3}
\end{alignat}
We know that $f(x)=x^\alpha$ for $x \geq 0$ and $\alpha>1$ is a strictly convex function. 
Since $f(0)=0^\alpha = 0$, $f$ is also superadditive, so it holds that $f(x_1+x_2)>f(x_1) + f(x_2)$. 
Therefore, since $P_1,P_2>0$, we have $(P_1 + P_2)^\alpha = f(P_1 + P_2) >  f(P_1) + f(P_2) = P_1^\alpha + P_2^\alpha$, which is a contradiction to inequality (\ref{eq:ineq3}).
\end{proof}

\subsection{Proof of Lemma~\ref{lemma:PolyBound}}
\label{apx:PolyBound}
\PolyBound*

\begin{proof}
    We only keep degree-2 vertices required for subdividing potential multi-edges.
After applying all possible contractions, the number of vertices and edges in $G_\poly$ is linear in the number of polygons, i.e., we have $n=\OO(\vert \poly \vert)$ and $m=\OO(\vert \poly \vert)$.
This is because, if degree-2 vertices would be contracted as well, assuming $\vert \poly \vert\geq 2$, every vertex will have degree at least 3 afterwards.
Applying Euler's formula $n-m+\vert \poly \vert = 1$, combined with $2m=\sum_{v\in V} \deg(v)\geq 3n$ yields $n\leq 2\vert\poly\vert-2$ and $m\leq 3\vert \poly \vert-3$.
If there are multiple edges that have the same vertices as endpoints we can add a single vertex in the middle to make the graph simple again.
This will 
increase the number of vertices by at most the number of edges in the multigraph, 
resulting in $n\leq 5\vert \poly \vert -5$.
Therefore, for the final simple graph we have $n=\OO(\vert \poly \vert)$ and $m=\OO(\vert \poly \vert)$.
\end{proof}

\section{Omitted Proofs from \cref{sec:Alg}}\label{apx:Alg}

\subsection{Proof of \cref{lem:weight_enclosed}}\label{apx:weight_enclosed}

\begin{figure}
    \centering
    \begin{minipage}[t]{0.475\textwidth}
        \centering
        \includegraphics[width=\textwidth, page=2]{figures/edge_subset_condition_new.pdf} 
        \caption{A possible cycle, drawn in red, in an example instance. The edges are oriented counterclockwise.}
    \end{minipage}
    \hfill
    \begin{minipage}[t]{0.475\textwidth}
        \centering
        \includegraphics[width=\textwidth, page=4]{figures/edge_subset_condition_new.pdf} 
        \caption{Arbitrary spanning tree, shown in blue, and all positive edges from the search graph construction, shown in brown.}
        \label{fig:positive_edges}
    \end{minipage}
\end{figure}

\begin{figure}[ht]
    \centering
    \begin{minipage}[t]{0.475\textwidth}
        \centering
        \includegraphics[width=\textwidth, page=13]{figures/edge_subset_condition_new.pdf}
        \caption{Spanning tree, shown in green, in the dual graph.
        }
        \label{fig:dual_tree}
    \end{minipage}
    \hfill
    \begin{minipage}[t]{0.475\textwidth}
        \centering
        \includegraphics[width=\textwidth, page=18]{figures/edge_subset_condition_new.pdf}
        \caption{An arbitrary vertex $u$, along with its unique path $P(u)$, colored in cyan, and set of vertices that are enclosed by the unique cycle formed with the spanning tree and the final edge of $E(P(u))$, colored in orange.}
        \label{fig:def_path_enclosed_vertices}
    \end{minipage}
\end{figure}

\begin{figure}
    \centering
        \includegraphics[width=.5\textwidth, page=17]{figures/edge_subset_condition_new.pdf}
        \caption{An arbitrary vertex $u$, located in the interior of $C$, along with its unique path $P(u)$. Edges that do not belong to $T$ and are crossed by this path are colored in red or green, depending on the sign of the weight.}
        \label{fig:unique_path_crossed_edges}
\end{figure}

\CycleEnclosingLemma*

\begin{proof}
    Let $G_{\mathcal{T}}=(G, \ell, A)$ for $G=(V,E)$ be an arbitrary connected plane polygon graph. 
    Setting $\ell'=\ell$ trivially preserves the lengths of any simple cycle, so we focus on the correctness of $w$.
    We fix the plane embedding of $G$ and define its corresponding dual graph $G_D=(V_D,E_D)$. 
    Let $c:V_D \rightarrow \RR_{\geq 0}$ be a cost function defined for all vertices of the dual graph, i.e., all faces of $G$ in its embedding, that associates every bounded face with its area and assigns cost $0$ to the outer face. 
    Finally, we define $V_{\text{int}}(C)\subseteq V_D$ as the set of dual vertices enclosed by some simple cycle $C\subseteq E$. 
    We shorten notation and write $c(V')=\sum_{v\in V'}c(v)$ for any set of vertices $V'\subseteq V_D$.
    If $C\subseteq E$ is a simple cycle we have $A(C)=c(V_{\text{int}}(C))$.

    In the following, we describe the construction in more detail. 
    We replace the original undirected graph $G$ with a directed version $G_s=(V_s,E_s)$, where $V_s = V$ and for every edge $e=\{u,v\}\in E$ we include two directed edges $(u,v)$ and $(v,u)$ in $E_s$. 

    For an undirected edge $e=\{u,v\}\in E$ we use $e^D\in E_D$ to denote its unique dual edge and for a directed edge $e=(u,v)\in E_s$ we write $\overline{e}=(v,u)$ to denote its reverse edge and $\hat{e}=\{u,v\}\in E$ to denote the corresponding undirected edge.
     Given a directed edge $e=(u,v)$ we write $\hat{e}^D$ to denote the dual edge of the corresponding undirected edge $\hat{e}=\{u,v\}$.
    
    To define the weight function $w:E_s \rightarrow \RR$, we begin with an arbitrary spanning tree $T\subseteq E$ and set $w(u,v)=w(v,u)=0$ for every edge $(u,v)\in E_s$ with $\{u,v\}\in T$.
    Now, consider any remaining edge $\{u,v\}\in E\setminus T$. 
    Adding either $e=(u,v)$ or $\overline{e}$ to $T$ forms a unique directed cycle $C$ in $G_s$, one oriented clockwise and the other counterclockwise. 
    Assume that including $e$ results in the counterclockwise cycle. 
    We define $w(e)$ as the total cost of the vertices that are enclosed by this cycle, i.e., $w(e)=c(V_{\text{int}}(C))$. 
    For $\overline{e}$ we define $w(\overline{e})=-w(e)$. 
    Intuitively, if the enclosed area of a non-tree edge $e$ lies to its left, we associate its weight with the positive enclosed area, otherwise with the negative area.
    For an illustration of the positively oriented edges in an arbitrary instance with a given spanning tree, see
    \cref{fig:positive_edges}.
    We define $w(E')=\sum_{e\in E'}w(e)$ for a set of edges $E'\subseteq E_s$.

    We use the fact that for any spanning tree $T\subseteq E$ in a planar graph, the set of edges $T^*\subseteq E_D$, which are dual to the complement of $T$, also forms a spanning tree. 
    Let $z\in V_D$ be the dual vertex corresponding to the outer face.
    For each vertex $u\in V_D$, there exists a unique simple path in $(V_D, T^*)$ connecting $u$ to $z$, see \cref{fig:dual_tree} for a visualization.
    In the following we define $P(u)$ as the unique simple path from $z$ to $u$ for $u\neq z$, i.e., $P(u)=(u_1,...,u_b)$ with $u_1=z$ and $u_b=u$. 
    Given some path $P(u)$, $E(P(u))$ denotes the corresponding set of edges, i.e., $E(P(u))=\{\{u_1,u_2\},\ldots, \{u_{b-1},u_b\}\}$, see \cref{fig:def_path_enclosed_vertices} for an example.

    Consider any dual edge $e^D\in E(P(u))$, where $e\in E\setminus T$ is the corresponding primal edge.
    Since $(V_D,T^*)$ is a tree, removing $e^D$ separates $T^*$ into exactly two connected components.
    As $e^D$ is on the unique path from $z$ to $u$, one component must contain $z$, while the other contains $u$.
    Since $e\in E\setminus T$, this edge defines a unique simple cycle in the graph $(V,T\cup \{e\})$, with $e$ being the only non-tree edge.
    This implies that $e^D$ is the only edge in $T^*$ that crosses this cycle. 
    Consequently, the connected component containing $u$ corresponds exactly to the dual vertices enclosed by the unique simple cycle in $(V,T\cup \{e\})$, including $u$.
    Thus, the absolute weight of either directed version of $e$ must equal the total cost of the vertices in this component and therefore also contains $c(u)$ as a summand.

    Now fix any simple directed cycle $C$ in $G_s$.
    We want to show that if $C$ is oriented counterclockwise, then the sum of the edge weights along the cycle equals the total cost of the vertices enclosed by $C$, i.e., $w(C)=c(V_{\text{int}}(C))$.
    Since traversing any simple cycle in the opposite order will by definition result in weight $-w(C)$, we prove this claim for counterclockwise-oriented cycles only.
    
    Let $V_C = V_{\text{int}}(C)\subseteq V_D$ be the set of vertices in the dual graph that are contained in the interior of $C$. 
    We now show that for each vertex $u\in V_C$, its cost $c(u)$ is counted exactly once in the sum $w(C)=\sum_{e\in C}w(e)$, while for any vertex $u\in V_D\setminus V_C$, the cost $c(u)$ does not contribute to the sum or evaluates to zero.

    Let $u\in V_D$ be an arbitrary vertex, possibly in the interior of $C$.  
    Since the outer face vertex $z$ has cost $0$ and is never enclosed by any cycle, we can assume $u\neq z$.
    Let $E_C= \{e\in C \mid \hat{e}^D \in E(P(u))\} = \{e_1,e_2,\ldots,e_k\}$ be the set of edges from $C$, 
    such that for each of their undirected variants the respective dual edge is crossed if we follow the unique path $P(u)$.
    Additionally, for every $i\in [k]$ we assume that $\hat{e_i}^D$ is the $i$-th edge that we encounter when traversing along $P(u)$.
    An example of this is shown in \cref{fig:unique_path_crossed_edges}.
    
    From the Jordan curve theorem we know that the cycle $C$ separates the graph into some interior and exterior region. 
    We can further show that the weight $w(e)$ of some edge $e\in E_C$ is positive if and only if, 
    when traversing the path $P(u)$, whenever $P(u)$ crosses $C$ at some dual edge $\hat{e_i}^D$ with $e_i\in E_C$, we transition from being outside of $C$ to being inside of $C$. 

    \begin{figure}
    \centering
    \begin{minipage}[t]{0.475\textwidth}
        \centering
        \includegraphics[width=\textwidth, page=2]{figures/edge_orientation_swap.pdf}
        \caption{Exemplary situation where orientation of edges does not change after crossing two consecutive edges of $C$ on path to $u$. The subpath in $T^*$ divides $C$ into two regions.}
        \label{fig:edge_orientation}
    \end{minipage}
    \hfill
    \begin{minipage}[t]{0.475\textwidth}
        \centering
       \includegraphics[width=\textwidth, page=3]{figures/edge_orientation_swap.pdf}
        \caption{Passing through an outside region, assuming both sides can be connected to $z$, must result in two unconnected regions from $C$.}
        \label{fig:edge_orientation_outside}
    \end{minipage}
\end{figure}
    
    Following along $C$, we can associate edges from $E_s\setminus T$ that are oriented counterclockwise as edges which, when crossed from the outside, correspond to edges that allow us to move from the exterior of $C$ to its interior. 
    Similarly, if the edge is oriented clockwise, we associate it with an edge that, if crossed from the interior of $C$, allows us to move to the exterior.
    
    Since $P(u)$ starts at the outer-face $z$, the first crossing $\hat{e_1}^D$ of $C$ along $P(u)$ leads from the exterior of $C$ to its interior.
    Removing $\hat{e_1}^D$ from $T^*$ gives us one component containing $z$ and one that corresponds to the dual vertices enclosed by the cycle induced by $\hat{e_1}$.
    The region entered at the first crossing therefore corresponds to the region enclosed by this cycle.
    As $C$ is oriented counterclockwise, its interior is located left of $e_1$.
    Since the direction of the relative enclosed region match in both cases, this means that the directed cycle induced by $e_1$ for $(V,T\cup \{\hat{e_1}\})$ is also oriented counterclockwise and we have $w(e_1)>0$.
    
    In the first case we assume that $u$ belongs to the interior of $C$, i.e., $u\in V_C$. 
    We observe that the number of edges crossed must correspond to some odd value $k=2h+1$ for $h\in \NN_0$, because each time we cross an edge of $C$, we change from the exterior of $C$ to the interior or vice versa. 
    Since $P(u)$ starts at $z$ and $u$ lies in the interior of $C$, we must cross an odd number of edges.

    Next, we argue that the relative orientations of two consecutive edges $e_i,e_{i+1}\in E_C$ with respect to the subpath of $P(u)$ between them must be opposite.
    Assume that this were not the case. 
    Let $e_i=(u_i,v_i)$ and $e_{i+1}=(u_{i+1}, v_{i+1})$ be the first two edges crossed along the path $P(u)$ for which the signs do not swap, i.e., the orientation of the edges stays the same.
    
    In the first case, assume the region between both edges is inside of $C$. 
    We can observe that the subpath of $P(u)$, that starts at the first edge crossing $e_i$ and ends at the edge crossing $e_{i+1}$ must split the interior of $C$ into two separate regions, as indicated in \cref{fig:edge_orientation}. 
    Since $e_i$ and $e_{i+1}$ have the same orientation, we can conclude that, if we split the interior along the previously mentioned path, one side must contain both $u_i$ and $u_{i+1}$, while the other side contains $v_i$ and $v_{i+1}$. 
    As all of these points are part of the boundary, it must be possible to connect these respective endpoints by only traversing the interior of $C$.

    In the second case, assume the region between both edges is outside of $C$. 
    Similar to the previous case, we can conclude that the respective subpath of $P(u)$ must divide this region such that one part is still outside while the other part is enclosed by the subpath and $C$. 
    If this were not the case, this would imply that there are paths $P_1,P_2$ which connect either side to $z$ without passing through $C$. 
    But this would mean that if we combine $P_1$ and $P_2$, we would have a cycle in $G_D$, but points from the interior of $C$ are on different sides of this cycle.
    Since this would imply that $C$ cannot be connected, this is a contradiction. 
    For a visualization, see \cref{fig:edge_orientation_outside}. 
    Therefore, the two regions separated by the subpath of $P(u)$ cannot be connected, which means it must be possible to connect the respective endpoints $u_i$ with $u_{i+1}$ and $v_i$ with $v_{i+1}$ via some edge, without violating planarity with respect to either the interior or the exterior region.

    \begin{figure}
    \centering
    \begin{minipage}[t]{0.475\textwidth}
        \centering
    \includegraphics[width=.65\linewidth, page=5]{figures/edge_orientation_swap.pdf}
    \caption{Visualization of the resulting instance and connectivity of the assumed cycle. After removing unused edges, adding point $g$ and including its edges, we have a K5 graph after contraction.}
    \label{fig:K5}
    \end{minipage}
    \hfill
    \begin{minipage}[t]{0.475\textwidth}
        \centering
       \includegraphics[width=.8\textwidth]{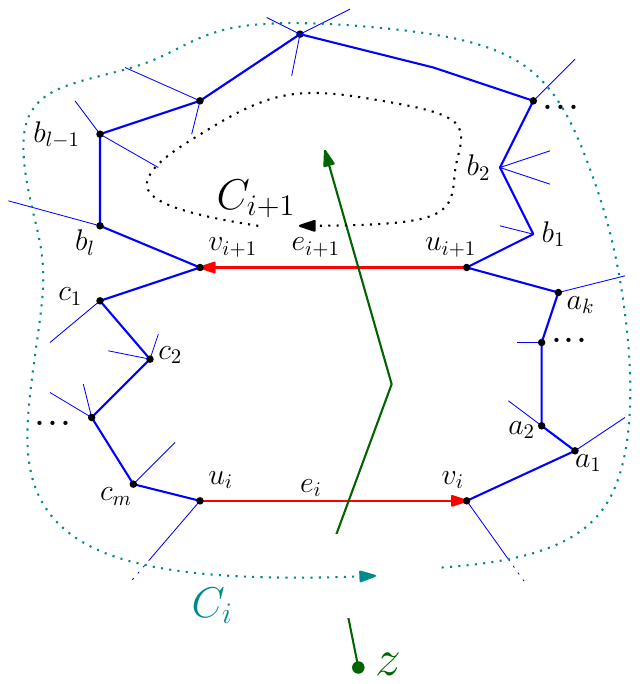}
        \caption{Visualization of directional cycles in proof construction. Edges $e_i$ and $e_{i+1}$ induce respective cycles with spanning tree, shown in blue, that traverse in opposite directions.}
        \label{fig:enclosing_cycles_proof}
    \end{minipage}
\end{figure}

    Next, we only keep in $G$ the undirected versions of the edges of $C$, i.e., the set $\{\hat{e} \in E \mid e\in C\}$. 
    Afterwards, using our previous argument, 
    regardless of if we split up the interior or the exterior region, we can add two edges $e_u=\{u_i,u_{i+1}\}$ and $e_v=\{v_i,v_{i+1}\}$ while preserving planarity.
    Both edges $e_u,e_v$, together with $e_i,e_{i+1}$, must form an enclosed region that is inside or outside of $C$, depending on which case we consider.
    We can place a single vertex $g$ in this region and connect it to all endpoints of the previous four edges, while still maintaining planarity. 
    Next, we contract every edge from $C$ that is not $e_i$ or $e_{i+1}$. 
    By the order of edges it follows that this will result in an edge that connects $v_i$ and $u_{i+1}$. 
    Analogously, there must be an edge which connects $v_{i+1}$ and $u_i$.
    See \cref{fig:K5} for a visualization of the graph construction.
    We can see that our graph must now correspond to $K_5$, which is not planar. 
    Therefore, our original graph $G_s$ could not have been planar which is a contradiction. 

    This means for $E_C$ that the relative orientation of its edges $e_i$ and $e_{i+1}$ must always swap. 
    It remains to show that this change in relative orientation implies that the orientation of the unique simple cycles that include $e_i$ or $e_{i+1}$ are opposite, so the sign of $w(e_i)$ and $w(e_{i+1})$ must be different.
    By our previous arguments we know that $u_i$ and $v_{i+1}$ are in the same relative region, if we use the respective subpath of $P(u)$ as a separator between the regions formed by $e_i$ and $e_{i+1}$.
    The same holds for $v_i$ and $u_{i+1}$. 
    Let $C_i,C_{i+1}$ be the respective two directed cycles formed with $T$.
    From our earlier reasoning, we know that the larger cycle $C_i$ must pass through $u_{i+1}$ after $v_i$.
    Cycle $C_i$ is the larger cycle and must be of the form $C_i=(u_i,v_i,a_1,a_2,\dots,a_k,u_{i+1},b_1,b_2,\dots,b_{l-1},b_l,v_{i+1},c_1,c_2,\dots,c_m,u_i)$. 
    Using $C_i$, we can express the second cycle as $C_{i+1}=(u_{i+1},v_{i+1},b_l,b_{l-1},\dots,b_2,b_1,u_{i+1})$.
    For a visualization, see \cref{fig:enclosing_cycles_proof}.
    Since the direction of the cycle is reversed, we conclude that the signs of $w(e_i)$ and $w(e_{i+1})$ must be different as well. 
    
    Overall, this means that the sign of adjacent edges of $E_C$ must swap at every step.   
    We have already discussed that any edge in $E_C$ will contain either $c(u)$ or $-c(u)$ in its summation.    
    Therefore, we observe that every odd edge $e_{2j+1}\in E_C$ for $j\in \NN_0$ will add the value of $c(u)$ to the cost of $w(C)$, while every even edge $e_{2j} \in E_C$ will subtract the value of $c(u)$. 
    Since the number of edges in $E_C$ is odd, we conclude that $c(u)$ is added $h+1$ times and subtracted $h$ times, so $c(u)$ will be counted exactly once in total.
    Similarly, if $u$ is in the exterior, we either have $E_C=\emptyset$, in which case $c(u)$ does not occur in the weight of any edge of $C$, or we have $E_C=\{e_1,\dots,e_k\}$ for $k=2h$ and $h\in \NN$.
    In the latter case $c(u)$ is added and subtracted an equal amount of times, so all terms including $c(u)$ cancel out and do not contribute to $w(C)$.

    Since this argument holds for every dual vertex $u\in V_D$, we can conclude that $w(C)$ contains the cost of each vertex $u\in V_C$ exactly once, and for every vertex $u\in V_D\setminus V_C$ its total contribution to $w(C)$ is zero. 
    Therefore, we must have $w(C)=c(V_C)$, as required.
\end{proof}

\subsection{Proof of \cref{lem:bellman_ford_negative}}\label{apx:bellman_ford_negative}

\BellmanFordLemma*

\begin{proof}
    We show by induction on $i$ that each time we check the break condition $i=n$ in line~\ref{alg:invariant}, for any $v\in V(G_s)$, the set $L_v^{i-1}$ contains every Pareto-optimal cost vector induced by a $u$-$v$ path using at most $i-1$ edges.
    For simplicity, we will show correctness as if each set contains the actual paths instead of just their respective costs. 
    We will only argue using the costs of these respective paths.
    Since for each tuple in every list there exists a path with the respective cost, this also shows correctness for the case that every list only contains the costs of every solution.

    For $i=1$ there is only one path that uses $0$ edges, which is the empty path. Therefore $L_u^0=\{(0,0)\}$ and $L_x^0=\emptyset$ for every $x\in V(G_s)\setminus \{u\}$ is correct.

    Assume that in a subsequent iteration we reach line~\ref{alg:invariant} and update $i$ to $i+1$. 
    Let $p=(u_1,...,u_j)$ with $u_1=u$ be an arbitrary Pareto-optimal path that uses at most $i$ edges, i.e., $j\leq i+1$.

    If $p$ uses at most $i-1$ edges, by our induction hypothesis it follows that $p$ is contained in $L_{u_j}^{i-1}$ and will be inserted into $L_{u_j}^i$ at the beginning.
    Since $p$ is Pareto-optimal, it will not be removed by any other solution that starts in $u$ and ends in $u_{j}$, including the solutions that contain at most $i$ edges. 
    Therefore $p$ will still be contained in $L_{u_j}^i$ when we reach line~\ref{alg:invariant} again. 

    Assume now that $p$ is a Pareto-optimal solution that uses exactly $i$ edges. 
    Let $p'=(u_1,...,u_{j-1})$ be the subpath that does not contain the last edge of $p$. 
    We observe that $p'$ must be a Pareto-optimal path that uses at most $i-1$ edges. 
    If this were not the case, then there would exist another path $p^*=(v_1,...,v_k)$, with $k\leq i-1$, such that $v_1=u_1,v_k=u_{j-1}$, and the bicriteria cost of $p^*$ dominates the bicriteria cost of $p'$. 
    Now, if we add the edge $(v_k,u_{j})$ to the path $p^*$, we obtain a solution that uses at most $i$ edges and dominates the solution $p$, which  contradicts the assumption that $p$ is Pareto-optimal.
    Therefore, $p'$ is also a Pareto-optimal path that contains at most $i-1$ edges.
    By our induction hypothesis we know that in iteration $i$ the set $L_{u_{j-1}}^{i-1}$ must contain all Pareto-optimal paths using at most $i-1$ edges. Hence, solution $p'$ is also contained in $L_{u_{j-1}}^{i-1}$. 
    Since in every iteration we evaluate each edge, the edge $(u_{j-1},u_j)$ will also be considered. As a result, the solution $p$ will be added to the set $L_{u_j}^{i}$ in this step. 
    Since in line~\ref{alg:subroutine_bellmann} we only remove dominated solutions, it follows that solution $p$ will not be removed. Hence, the solution will be present in this set when $i$ is set to $i+1$.    
\end{proof}

\subsection{Proof of \cref{lem:cycle_dissection}}\label{apx:cycle_dissection}

\cycleDissection*

\begin{proof}
    Let $C^*=(u_1^*,\dots,u_k^*,u_1^*)$ be an optimal cycle that maximizes $\frac{A(C^*)}{\ell(C^*)^\alpha}$. 
    For $i \leq k$, we define the partial path $C_i^*=(u_1^*,\dots,u_i^*)$ as the subpath of the optimal cycle up to vertex $u_i^*$.
    Let $s_i=\left(\sum_{j=1}^{i-1}\ell'(u_j^*,u_{j+1}^*), \sum_{j=1}^{i-1}w(u_j^*,u_{j+1}^*)\right)$ be the cost of this partial solution in the search graph $G_s$, and let $s^*=s_k+(\ell(u_k^*, u_1^*), w(u_k^*, u_1^*))$ denote the cost of the final simple cycle. 
    Note that by \cref{lem:weight_enclosed} $l_{s^*}$ corresponds to the perimeter of the optimal cycle $C^*$, and $\vert w_{s^*} \vert$ corresponds exactly to the enclosed area of $C^*$, i.e., $l_{s^*}=\ell(C^*)$ and $\vert w_{s^*} \vert = A(C^*)$.
    
    We assume that $C^*$ is ordered counterclockwise, i.e., $w(C^*)=A(C^*)$.
    Let $S^*\subseteq \poly$ denote the set of polygons enclosed by $C^*$.
    It is straightforward to show that, when comparing this solution with all solutions corresponding to simple cycles, the bicriteria solution $s^*=(\ell(C^*), A(C^*))$ must be Pareto-optimal. 
    If this were not the case, it would imply the existence of a simple cycle solution $s'=(l_{s'}, w_{s'})$ such that $s' \prec s^*$. 
    By Lemma~\ref{lem:weight_enclosed} we know that $s'$ corresponds to a selection of polygons $S'$ such that $\ell(S')=l_{s'}$ and $A(S') = w_{s'}$. We assume that $s'$ is ordered counterclockwise, because if this were not the case it cannot dominate $s^*$ because $w_{s'}$ would be negative. 
    But this also implies $\frac{A(S')}{\ell(S')^\alpha} > \frac{A(S^*)}{\ell(S^*)^\alpha}$, because either the perimeter or the area (or both) of $S'$ is strictly better than $S^*$, while the remaining objective is at least as good. 
    This is a contradiction to $C^*$ being an optimal solution.

    Therefore, assume that there exists some Pareto-optimal cycle $s'=(u_1',\dots, u_r',u_1')$ that is non-simple and $s'\prec s^*$. 
    We can recursively construct multiple simple cycles $s^1,\dots, s^g$ from $s'$ as follows: 
    Let $v\in s'$ be the first vertex, with the exception of $u_1'$, which appears multiple times in $s'$. 
    Then we can write $s' = (u_1', s_{\text{prev}},v, s_{\text{inter}}, v, s_{\text{after}}, u_1')$ where $s_{\text{prev}}, s_{\text{inter}}$ and $s_{\text{after}}$ represent the sequences of vertices before, between, and after the first two appearances of $v$, respectively. 
    Since $G_s$ contains no self-loops, we have $s_{\text{inter}}\neq \emptyset$.
    Therefore, we can construct two new cycles $s^1, s^2$ from $s'$ by setting $s^1=(u_1', s_{\text{prev}},v, s_{\text{after}}, u_1')$ and $s^2=(v,s_{\text{inter}},v)$. 
    Both cycles contain, by the assumption $s_{\text{inter}}\neq \emptyset$, at least three vertices.    
    We repeat this procedure recursively for $s^1,s^2$ until we are left with only simple cycles.
    In case that $u_1'$ appears at least three times in $s'$ we can analogously express $s'$ as $(u_1',s_{\text{prev}},u_1',s_{\text{after}},u_1')$ and decompose it into two cycles $s^1=(u_1',s_{\text{prev}},u_1')$ and $s^2=(u_1',s_{\text{after}},u_1')$ and recurse on both.

    For an arbitrary (possibly non-simple) cycle $s'=(u_1',\dots, u_r',u_1')$, we define $\ell'(s'):=\sum_{i=1}^{r} \ell'\left(u_i', u_{(i\text{ mod r})+1}'\right)$ and $w(s'):=\sum_{i=1}^{r} w\left(u_i', u_{(i\text{ mod r})+1}'\right)$ as its respective length and area.

    Let $C_{\text{mult}}=\{c^1,\dots, c^g\}$ be the set of simple cycles resulting from the previously defined recursive procedure. 
    We assume that each cycle is ordered counterclockwise, since this can only improve the circularity score of $s'$.
    We observe that each edge of $s'$ appears exactly once in one of the simple cycles $c^1,\dots,c^g$. 
    Thus, if the cycles $c^1,\dots,c^g$ are oriented as in $s'$, we have $\ell'(s') = \sum_{j=1}^g \ell'(c^j)$ and $w(s') = \sum_{j=1}^g w(c^j)$.
    We can assume that $w(c^i)\neq 0$ for any $i\in[g]$, since removing any such cycle from $C_{\text{mult}}$ strictly decreases the overall perimeter of $s'$, hence improves the objective score of $s'$.
    Therefore, for every $i\in [g]$, the simple cycle $c^i$ corresponds to a valid selection of polygons with a total area $w(c^i)>0$ and perimeter $\ell'(c^i)>0$.
    Since $s' \prec s^*$, it follows that 
    $w(s^*)/\ell'(s^*)^\alpha < w(s')/\ell'(s')^\alpha$. 
    On the other hand, the solution $s'$ can be written as the sum of individual areas and perimeters of independent polygons (where some polygons could have been selected multiple times).
    One can verify that \cref{lemma:SingleCycle_x} also holds if polygons are selected multiple times, as long as the total area and total perimeter correspond to the sum of the individual areas and perimeters of the selected polygons.
    Therefore it follows that
    there exists some simple cycle solution $c^j\in S'$ such that its circularity score is strictly better than that of solution $s'$, i.e., $w(s')/\ell'(s')^\alpha < \max_{j\in [g]} w(c^j)/\ell'(c^j)^\alpha$. 
    But this also means that $w(s^*)/\ell'(s^*)^\alpha < \max_{j\in [g]} w(c^j)/\ell'(c^j)^\alpha$, which is a contradiction to our assumption that $s^*$ is the solution with the best $\alpha$-circularity score.    
\end{proof}

\subsection{Proof of \cref{thm:algo_cycles}}\label{apx:algo_cycles}

\algoCycle*

\begin{proof}%
By Lemma~\ref{lem:cycle_dissection} we know that an optimal cycle solution $s^*$ is always Pareto-optimal.
By Lemma~\ref{lem:bellman_ford_negative}, the Bellman-Ford algorithm in Algorithm~\ref{alg:find_pareto_cycles} (lines \ref{alg:start_subroutine_bellmann} to \ref{alg:subroutine_bellmann}) computes the set of Pareto-optimal paths for any $u\in V(G_s)$, using at most $n$ edges.
Since an optimal solution $C^*$ can only consist of at most $n$ edges, if $u=u_1^*$ in line~\ref{alg:guess_vertex}, we will always compute $s^*$ after at most $n$ iterations.
Because $s^*$ is optimal, its $\alpha$-circularity score $w(s^*)/\ell(s^*)^\alpha$ must be maximal among all Pareto-optimal solutions. 
As a result, $s^*$ will always be stored as $s_u^*$ in line~\ref{alg:find_best}, and will never be replaced by another solution afterwards.

One can construct the search graph by first computing an arbitrary rooted tree in $\OO(n+m)$, use it to compute the dual complementary tree $G_D$ in $\OO(n+m)$, and at last, starting from the outer face $z$, recursively visit the vertices in the dual graph and set their appropriate weights in a bottom-up approach. As each dual vertex is visited exactly once and each recursion step takes constant time, the overall runtime of $\OO(n+m)$ follows.

In a naive approach, computing $\PP_1 \oplus \PP_2$ takes time $O(\vert \PP_1 \vert \cdot \vert \PP_2 \vert)$ by comparing every solution from $\PP_1$ with every solution from $\PP_2$ and only keeping those that are not dominated by any other solution.
By storing the Pareto sets in a sorted order (e.g., increasing in the lengths of the solutions) we can compute $\PP_1 \oplus \PP_2$ using a sweep procedure in $O(\vert \PP_1\vert+ \vert \PP_2 \vert) = \OO(\Pm)$.
For our polygon instance we can assume $m\geq n$. As we have three for-loops where two iterate over all vertices and one over all edges, we get a final runtime of $\OO(n^2 \cdot m \cdot \Pm)$.
\end{proof}

\section{Omitted Proofs from Section~\ref{sec:FPTAS}}\label{apx:fptas}

\subsection{Proof of \cref{lemma:FPTAS}}\label{apx:FPTAS}

\FPTASRounding*

\begin{proof}
Assume we have edge area costs $a_1,\dots, a_m\in \mathbb{R}$ and edge length costs $l_1,\dots,l_m\in \mathbb{R}_{> 0}$ in the search graph. 
Additionally, let $\alpha>1$ be arbitrarily chosen. 
For a given $\varepsilon > 0$ we wish to scale all areas by some constant $K\in \mathbb{R}$ such that in the scaled instance we have polynomial runtime and the computed optimal solution in the new instance is within a $(1-\varepsilon)$-factor of the original optimal solution.

We note that in the construction of the search graph,  the edge area costs are always upper bounded by $\Am$ and lower bounded by $-\Am$. 
Additionally, each edge area cost is either at least $A_{\text{min}}$ or at most $-A_{\text{min}}$ or equal to zero in case they belong to edges of the tree. 
Thus we have $a_i\in [-\Am,-A_{\text{min}}]\cup [A_{\text{min}},\Am] \cup \{0\}$.

Let $S\subseteq [m]$ be the edges forming some simple cycle $E_S\subseteq E = \{e_1,\ldots, e_m\}$ in our search graph, where $E_S=\{e_i\in E\mid i\in S\}$ encloses a positive area (so the cycle is traversing counterclockwise). 
Assume $e_i=(a_i,l_i)$ for each edge $e_i\in E_S$, so we have area $A(S)=\sum_{i\in S} a_i>0$ and perimeter $P(S)=\sum_{i\in S} l_i>0$. 
The $\alpha$-circularity cost of this solution is $c_\alpha(S)=\frac{A(S)}{P(S)^\alpha}=\frac{\sum_{i\in S} a_i}{\left(\sum_{i\in S} l_i\right)^\alpha}$.

We define the scaled version of the areas as $a_i':=\left\lfloor \frac{a_i}{K} \right\rfloor$. Let $c'_\alpha$ be the cost function using these scaled values, i.e., $c_\alpha'(S)=\frac{\sum_{i\in S} a_i'}{\left(\sum_{i\in S} l_i\right)^\alpha}$.
Since we aim to solve the $\alpha$-circularity problem in the scaled instance, we need to ensure that the $\alpha$-circularity cost does not deviate significantly when translating the computed solution back to the original setting. 
For this matter, we consider the rescaled areas $a_i^*:= \left\lfloor \frac{a_i}{K} \right\rfloor \cdot K$ and the corresponding rescaled $\alpha$-circularity cost $c_\alpha^*(S)=\frac{\sum_{i\in S} a_i^*}{\left(\sum_{i\in S} l_i\right)^\alpha}$.

Let $S_{\text{OPT}}$ be the optimal $\alpha$-circularity solution for the cost function $c_\alpha$, and let $S'$ be the optimal solution for the scaled cost function $c_\alpha'$.

First we notice that an optimal solution for $c_\alpha'$ is also an optimal solution for $c_\alpha^*$. 
Let $S\subseteq [m]$ be an arbitrary feasible solution. 
We have
\begin{align*}
    c_\alpha^*(S)&=\frac{\sum_{i\in S} a_i^*}{\left(\sum_{i\in S} l_i\right)^\alpha}
    = \frac{\sum_{i\in S} a_i'\cdot K}{\left(\sum_{i\in S} l_i\right)^\alpha}
    = K\cdot \frac{\sum_{i\in S} a_i'}{\left(\sum_{i\in S} l_i\right)^\alpha}
    = K\cdot c_\alpha'(S).
\end{align*}
Since every feasible solution is scaled by the same constant, a solution $S$ that maximizes $c_\alpha^*(S)$ must also maximize $c_\alpha'(S)$.

Next, we analyze how much the $\alpha$-circularity cost $c_\alpha$ can change when we use $c_\alpha^*$ instead.
For every $i\in [m]$, the rescaled areas satisfy $a_i^*= \left\lfloor \frac{a_i}{K} \right\rfloor \cdot K \leq \frac{a_i}{K}\cdot K = a_i$ and $a_i^*= \left\lfloor \frac{a_i}{K} \right\rfloor \cdot K > (\frac{a_i}{K}-1)\cdot K = a_i-K$. 
Therefore, for every possible solution $S$ we have $c_\alpha^*(S) = \frac{\sum_{i\in S} a_i^*}{P(S)^\alpha}\leq \frac{\sum_{i\in S} a_i}{P(S)^\alpha} = c_\alpha(S)$ and
$c_\alpha^*(S) = \frac{\sum_{i\in S} a_i^*}{P(S)^\alpha}> \frac{\sum_{i\in S} a_i-K}{P(S)^\alpha} = c_\alpha(S)-\frac{\vert S \vert\cdot K}{P(S)^\alpha}$.

Using our previous results, we now show that the cost of $c_\alpha(S')$ is close to $c_\alpha(S_{\text{OPT}})$:
\begin{align*}
    c_\alpha(S') \geq c_\alpha^*(S') \geq c_\alpha^*(S_{\text{OPT}}) > c_\alpha(S_{\text{OPT}})-\frac{\vert S_{\text{OPT}}\vert\cdot K}{P(S_{\text{OPT}})^\alpha} 
\end{align*}

Finally, we consider the ratio of the optimal solution to the solution in the scaled instance:
\begin{align*}
    \frac{c_\alpha(S')}{c_\alpha(S_{\text{OPT}})} > \frac{c_\alpha(S_{\text{OPT}})-\frac{\vert S_{\text{OPT}}\vert\cdot K}{P(S_{\text{OPT}})^\alpha}}{c_\alpha(S_{\text{OPT}})}
    = 1 - \frac{\left(\frac{\vert S_{\text{OPT}}\vert\cdot K}{P(S_{\text{OPT}})^\alpha}\right)}{\left(\frac{A(S_{\text{OPT}})}{P(S_{\text{OPT}})^\alpha}\right)} = 1 - \frac{\vert S_{\text{OPT}}\vert\cdot K}{A(S_{\text{OPT}})} > 1 - \frac{m\cdot K}{A_{\min}}.
\end{align*}
By choosing $K\leq \frac{\varepsilon\cdot A_{\min}}{m}$, we obtain $1-\frac{m\cdot K}{A_{\min}} \geq 1-\varepsilon$ as required.

For each area value $a_i$ we have $a_i\in [-\Am, \Am]$. In the scaled version we therefore have
\begin{align*}
    a_i'=\left\lfloor \frac{a_i}{K} \right\rfloor \leq \frac{a_i}{K}\leq \frac{\Am}{K}
\end{align*}
and
\begin{align*}
     a_i'=\left\lfloor\frac{a_i}{K}\right\rfloor> \frac{a_i}{K} - 1\geq - \frac{\Am}{K} -1.
\end{align*}
From the previous results, we know that if $K\leq \frac{\varepsilon\cdot A_{\min}}{m}$, we achieve a $1-\varepsilon$ approximation. 
Since $\frac{\Am}{A_{\min}}\leq f(m) \Leftrightarrow A_{\min}\geq \frac{\Am}{f(m)}$ we can use $K'=\frac{\varepsilon\cdot \Am}{f(m)\cdot m}\leq \frac{\varepsilon \cdot A_{\min}}{m}=K$ to achieve a $1-\varepsilon$ approximation.

For the runtime, using $K'$ as our scaling value, we get
\begin{align*}
    a_i' \leq \frac{\Am}{K'} = \frac{f(m)\cdot m}{\varepsilon}
\end{align*}
as well as
\begin{align*}
     a_i' > - \frac{\Am}{K'} -1 =-\frac{f(m) \cdot m}{\varepsilon} -1.
\end{align*}
Since $a_i'\in \mathbb{Z}$, we can bound $a_i'$ by $\frac{2f(m)\cdot m}{\varepsilon}+1$, which is polynomial in $m$ and $\frac{1}{\varepsilon}$.

For our initial runtime bound we have argued that $\Pm$ can be replaced by $n\cdot \Am$ by using $\Am$ as an upper bound on the weight of every edge in the search graph and using the fact that the optimal solution can only consist of at most $n$ edges, so any intermediate optimal solution weight will always be in the interval $[-n\cdot \Am,n\cdot \Am]$.
Since our rescaled edge area weights are in the interval $(-\frac{f(m)\cdot m}{\varepsilon}-1,\frac{f(m)\cdot m}{\varepsilon}]$, the weight of any intermediate optimal solution will be in the interval $(-\frac{f(m)\cdot m\cdot n}{\varepsilon}-n,\frac{f(m)\cdot m\cdot n}{\varepsilon}]$.
Therefore we can bound the size of the new Pareto set by $\OO(\frac{f(m)\cdot m\cdot n}{\varepsilon})$.
Using this bound instead of $\Pm$ gives us a final runtime of $\OO\left(\frac{n^3\cdot m^2\cdot f(m)}{\varepsilon}\right)$ while achieving a $(1-\varepsilon)$-approximation.
\end{proof}

\section{Omitted Proofs from Section~\ref{sec:NP-hardness}}\label{apx:NP-hardness}

\subsection{Proof of \cref{lem:function_maximum_hardness}}\label{apx:function_maximum_hardness}

\FunctionMaximumLemma*

\begin{proof}
For the first derivative of our function $c_\alpha(x)=\frac{\Ax+x}{\left(\Px+\frac{x}{c}\right)^\alpha}$ we get 
\begin{align*}
    c_\alpha'(x)=\frac{(\Px+\frac{x}{c})^\alpha-\frac{\alpha}{c}(\Ax+x)\left(\Px+\frac{x}{c}\right)^{\alpha - 1}}{\left(\Px + \frac{x}{c}\right)^{2\alpha}}.
\end{align*}
For the second derivative we obtain
\begin{align*}
    c_\alpha''(x)=\frac{\alpha\cdot(-2c\Px + (\alpha - 1)x + (\alpha + 1)\Ax)}{\left(\Px+\frac{x}{c}\right)^\alpha \left(c\Px+x\right)^2}.
\end{align*}

Next, we analyze the relevant extreme points of $c_\alpha$, i.e., the values $x\geq 0$ such that $c_\alpha'(x)=0$.
Since the denominator of $c_\alpha'(x)$ is always positive for $x\geq 0$ and $c\geq 0$, we only need to check when the numerator is zero:
\begin{align*}
    &~0 = \left(\Px+\frac{x}{c}\right)^\alpha-\frac{\alpha}{c}(\Ax+x)\left(\Px+\frac{x}{c}\right)^{\alpha - 1}\\
    \Leftrightarrow &~0 = \left(\Px+\frac{x}{c}\right)^{\alpha - 1}\left(\left(\Px+\frac{x}{c}\right) - \frac{\alpha(\Ax+x)}{c}\right).
\end{align*}
The right-hand side of the equation can only be zero if one of the factors is zero. 
Since the first factor can never be zero, we only need to consider the second factor:
\begin{alignat*}{2}
    &\left(\Px+\frac{x}{c}\right)\cdot c &&= \alpha(\Ax+x) \\
    \Leftrightarrow~& \Px \cdot c + x &&=\alpha(\Ax +x) \qquad\qquad |~c = \frac{\alpha\cdot \Ax + Z/2\cdot(\alpha -1)}{\Px}\\
    \Leftrightarrow~& \alpha \Ax + Z/2\cdot(\alpha -1) +x&&= \alpha \Ax + \alpha\cdot x  \\
    \Leftrightarrow~& Z/2 &&= x,
\end{alignat*}
where in the last step we used $(\alpha -1)>0$.
This means that for $x\geq 0$, the only extreme point occurs at $x=Z/2$. 
Substituting $x=Z/2$ and $c = \frac{\alpha\cdot \Ax + Z/2\cdot(\alpha -1)}{\Px}$ into the second derivative $c_\alpha''(x)$ gives us
\begin{align*}
    c_\alpha''\left(\frac{Z}{2}\right) &= \frac{\alpha\cdot\left(-2 \cdot \frac{\alpha\cdot \Ax + Z/2\cdot(\alpha -1)}{\Px} \cdot \Px + (\alpha - 1)\frac{Z}{2} + (\alpha + 1)\Ax\right)}{\left(\Px+\frac{Z/2}{\frac{\alpha\cdot \Ax + Z/2\cdot(\alpha -1)}{\Px}}\right)^\alpha\left(\frac{\alpha\cdot \Ax + Z/2\cdot(\alpha -1)}{\Px}\cdot \Px+\frac{Z}{2}\right)^2} \\
    &= \frac{\alpha\cdot(-2 \cdot \alpha\cdot \Ax - Z\cdot(\alpha -1) + (\alpha - 1)\frac{Z}{2} + (\alpha + 1)\Ax)}{\left(\Px+\frac{Z\cdot \Px}{2(\alpha \Ax + \frac{Z}{2}(\alpha - 1))}\right)^\alpha\left(\alpha \Ax +\frac{Z}{2}(\alpha - 1) + \frac{Z}{2}\right)^2}.
\end{align*}
We can see that the denominator is always positive, since all terms inside it must always be greater than zero. 
On the other hand, for the numerator (ignoring the positive factor $\alpha$ at the beginning), we get:
\begin{align*}
    &~-2\alpha \Ax - Z(\alpha - 1) + (\alpha - 1) \frac{Z}{2} + (\alpha + 1)\Ax \\
    =&~ \frac{1}{2}(\alpha -1)(-2\Ax - Z).
\end{align*}
Since $\alpha > 1, \Ax>0$ and $Z\geq 27$, this expression is always negative. 
Therefore, the one and only extreme point for the function $c_\alpha(x)$ with $x\geq 0$ occurs at $x=\frac{Z}{2}$, and it is a maximum.
\end{proof}

\ConstructionFirst*

\begin{proof}
    $b>0$ follows immediately from $b=Z^2$ and $Z\geq 27$.
    It is also trivial for $l_i=\frac{a_i}{2}(\frac{1}{b}+\frac{1}{2c})$, since $a_i,b,c>0$. 
    Now, we consider $\bar{l_i}=\frac{a_i}{2}(\frac{1}{b}-\frac{1}{2c})$. 
    This value can only be negative or zero if $\frac{1}{b}\leq\frac{1}{2c}$. 
    Using our previously derived values for $c$ and $b$, we get
\begin{alignat*}{2}
    &\frac{1}{Z^2} && \leq \frac{\Px}{2\alpha \Ax+Z(\alpha - 1)} \hspace{2cm} |~\Ax = Z^{10}-Z, \Px=4Z^5+1 \\
    \Leftrightarrow~ &\frac{1}{Z^2} && \leq \frac{4Z^5+1}{2\alpha Z^{10}-2\alpha Z+Z(\alpha - 1)} \\
    \Leftrightarrow~ &2\alpha Z^{10}-2\alpha Z + Z(\alpha -1) && \leq 4Z^7+Z^2 \\
    \Leftrightarrow~ &2\alpha Z^{9} -\alpha && \leq 4Z^6+Z+1\\
    \Leftrightarrow~ &\alpha && \leq \frac{4Z^6+Z+1}{2Z^9-1}.
\end{alignat*}
We can see that for $Z\geq 27$ the right hand side is always strictly less than 1. As we have $\alpha>1$, this inequality can never be fulfilled, thus the original inequality cannot hold.
Therefore, $\bar{l_i}> 0$ is always true.
\end{proof}

\ConstructionSecond*

\begin{proof}
    Since we need to place $n$ rectangles along the border of $X$, the total vertical length required is $nb=nZ^2$. 
    Since $a_i\geq 1$ for every $i\in [n]$, we can upper bound $n$ by $Z$, leading to an upper bound of $Z^3$ for the total length. 
    Since the side length of $X$ is $Z^5$ and $Z\geq 27$, it follows that all rectangles can be separated.
    Choosing $h=\frac{Z^5-nb}{n+1}$ ensures a valid spacing between the rectangles, allowing them to be equally distributed along the border of $X$.  
\end{proof}

\ConstructionThird*

\begin{proof}
Since our almost-square $X$ has side length $N=Z^5$, we need to show that $\bar{l_i}< Z^5$ always holds.
We have chosen $b=Z^2$, and since $b\cdot (l_i+\bar{l_i}) =a_i\leq Z$, it follows that $\bar{l_i}=\frac{a_i}{b}-l_i \leq \frac{Z}{Z^2}-l_i \leq \frac{1}{Z}$.
Since $Z\geq 27$, we have $\frac{1}{Z}\leq \frac{1}{27}<Z^5$, so $\bar{l_i}<Z^5$ always holds.
\end{proof}

\SingleRectangle*

\begin{proof}
Since from \cref{lemma:SingleCycle_x} we know that for $\alpha>1$, the $\alpha$-circularity of multiple disconnected polygons is always worse than that of the single best among them, and since all rectangles are pairwise non-adjacent, it suffices to bound the $\alpha$-circularity score of each individual rectangle.
We can upper bound the $\alpha$-circularity of a rectangle $R_i$ as follows:
\begin{align*}
    c_\alpha(R_i) = \frac{a_i}{\left(2l_i+2\bar{l_i}+2b\right)^\alpha}\leq \frac{Z}{b^{\alpha}} = \frac{1}{Z^{2\alpha - 1}}.
\end{align*}
For a solution $S_I$ with $\sum_{i\in I}a_i=Z/2$, we get an $\alpha$-circularity score of 
\begin{align*}
    c_\alpha(S_I)=\frac{\Ax+Z/2}{\left(\Px+\frac{Z/2}{c}\right)^\alpha}
    =\frac{Z^{10}-Z/2}{\left((4Z^5 + 1)+ \frac{(Z/2)\cdot (4Z^5+1)}{\alpha (Z^{10}-Z) + Z/2(\alpha-1)}\right)^\alpha}.
\end{align*}
By \cref{lem:function_maximum_hardness} we know that any other solution $S'\subseteq \{R_1,\ldots,R_n,X\}$ with $X\in S'$ will have a strictly less $\alpha$-circularity score.
Therefore, for $S_I$ to be the global optimal solution we need the following inequality to hold:
\begin{alignat*}{2}
    c_\alpha(R_i)\leq \frac{1}{Z^{2\alpha - 1}} < \frac{Z^{10}-Z/2}{\left((4Z^5 + 1)+ \frac{(Z/2)\cdot (4Z^5+1)}{\alpha (Z^{10}-Z) + Z/2(\alpha-1)}\right)^\alpha}= c_\alpha(S_I)\\
    \Leftrightarrow  \left((4Z^5 + 1)+ \frac{(Z/2)\cdot (4Z^5+1)}{\alpha (Z^{10}-Z) + Z/2(\alpha-1)}\right)^\alpha < Z^{2\alpha-1}(Z^{10}-Z/2).
\end{alignat*}
Since $Z\geq 27$ and $\alpha\in (1,2]$, we can upper bound the left-hand side by 
\begin{align*}
    \left((5Z^5) + \frac{5Z^6}{Z^{9}}\right)^2 = 25Z^{10} + 50Z^2 + \frac{25}{Z^6}.
\end{align*}
Additionally, we can lower bound the right hand side by 
\begin{align*}
    Z^{11}-Z^2/2.
\end{align*}
Together, we obtain the inequality
\begin{alignat*}{2}
   & 25Z^{10} + 50Z^2 + \frac{25}{Z^6} &&< Z^{11}-Z^2/2 \\
   \Leftrightarrow~& 25 + \frac{50}{Z^8} + \frac{25}{Z^{16}} &&< Z - \frac{1}{2Z^8}.
\end{alignat*}
We can observe that this inequality always holds for $Z\geq 27$. 
Therefore, the $\alpha$-circularity score of any individual rectangle is always worse than the solution in which the corresponding items sum up to $Z/2$.
\end{proof}

\NPhardness*

\begin{proof}
We apply our reduction from the partition problem. 
Let $Y=(a_1,\ldots, a_n)$ be an arbitrary partition instance, and let $Z=\sum_{i\in [n]}a_i$. 
We assume $Z\geq 27$, otherwise the instance can be solved in constant time. 
For any $\alpha \in (1,2]$, we construct our instance as described in the previous procedure. 
As shown in Lemmata~\ref{lem:construction_1} to \ref{lem:construction_3}, this construction always produces a feasible instance.
By Lemmata~\ref{lem:function_maximum_hardness} and \ref{lemma:NP-hardness_single_rectangle}, we know that the largest achievable $\alpha$-circularity score is $\frac{A_X+Z/2}{\left(P_X+\frac{Z/2}{c}\right)^\alpha}$, and this score can only be achieved by selecting a set $S_I=R_I\cup \{X\}$, where $R_I\subseteq \{R_1,\ldots, R_n\}$. 
Furthermore, by Lemma~\ref{lem:function_maximum_hardness} we know that the areas of the rectangles in $S_I$ must sum up to $Z/2$ to achieve the highest $\alpha$-circularity score.
Since the area of an arbitrary rectangle $R_i$ corresponds to $a_i$, for reconstructing the equivalent values for the partition instance, it is enough to select all values $a_i$ such that $R_i\in R_I$.
If the optimal $\alpha$-circularity score is less than $\frac{A_X+Z/2}{\left(P_X+\frac{Z/2}{c}\right)^\alpha}$, then no solution $I\subset [n]$ with $\sum_{i\in I}a_i=Z/2$ can exist. 
Otherwise, we would be able to achieve a strictly better $\alpha$-circularity score.
\end{proof}

\section{Experiments}\label{sec:exp}

As discussed in Section~\ref{sec:Alg}, the algorithm of Park and Phillips~\cite{PP93} can be adapted directly to compute a cycle maximizing $c_\alpha$. For this, one first needs to build the expanded search graph (as discussed after Lemma~\ref{lem:weight_enclosed}). Assume that the optimal cycle contains some vertex $u$ and encloses area $B$.
By the same reasoning as for our algorithm, choosing starting vertex $(u,0)$ and computing the shortest path to vertex $(u,B)$ in the expanded search graph will return the boundary of an optimal cycle. Hence, one could run the algorithm for every possible choice of $u$ and $B$. This inherently results in a running time depending on $\Am$ due to the number of different choices for $B$.

In the following we will experimentally test how this algorithm and \cref{alg:find_pareto_cycles} perform when compared to each other on either synthetically generated datasets or subsets of real-world datasets.
For this purpose we implemented the algorithm of Park and Phillips as described in their original work, with three important practical modifications:
For the expanded graph, we employ a lazy state generation strategy, meaning that vertices are only created when they are actually visited during the search. 
Since a large fraction of all theoretically possible states is never explored, this significantly reduces both memory consumption and runtime. 
Additionally, we prevent cycles that visit the source node multiple times by ensuring that Dijkstra's algorithm never inserts a state corresponding to the source node into the priority queue. 
Finally, as previously mentioned and further discussed in \cref{sec:weight_bounds_instance}, we adapt the weight bounds used in the algorithm to the interval $\{-nW,...,nW\}$ instead of the interval $\{-W,\ldots,W\}$. 
In fact, this modification turned out to be necessary in practice, as the original bounds fail to recover the correct solution on our test instance, whereas the adjusted bounds yield the correct results.

To compare the performance of both algorithms, we measure the number of operations performed: 
For \cref{alg:find_pareto_cycles}, the main bottleneck is the repeated computation of the Pareto-sum in Line 12 in \cref{alg:subroutine_bellmann}.
As discussed in the proof of \cref{thm:algo_cycles}, we can compute the Pareto sum of two Pareto sets $\PP_1,\PP_2$ in $\OO(\vert \PP_1 \vert + \vert \PP_2 \vert)$ if both sets are sorted (e.g., increasing in the length of the solution) by using a simple sweeping approach. 
Therefore, we define the number of operations as the sum of $|L_b| + |L_b'|$ over all executions of Line 12 in \cref{alg:subroutine_bellmann}.

\begin{figure}
    \centering
    \fbox{
    \includegraphics[width=.7\linewidth, trim = 4cm 2cm 2cm 2cm, clip]{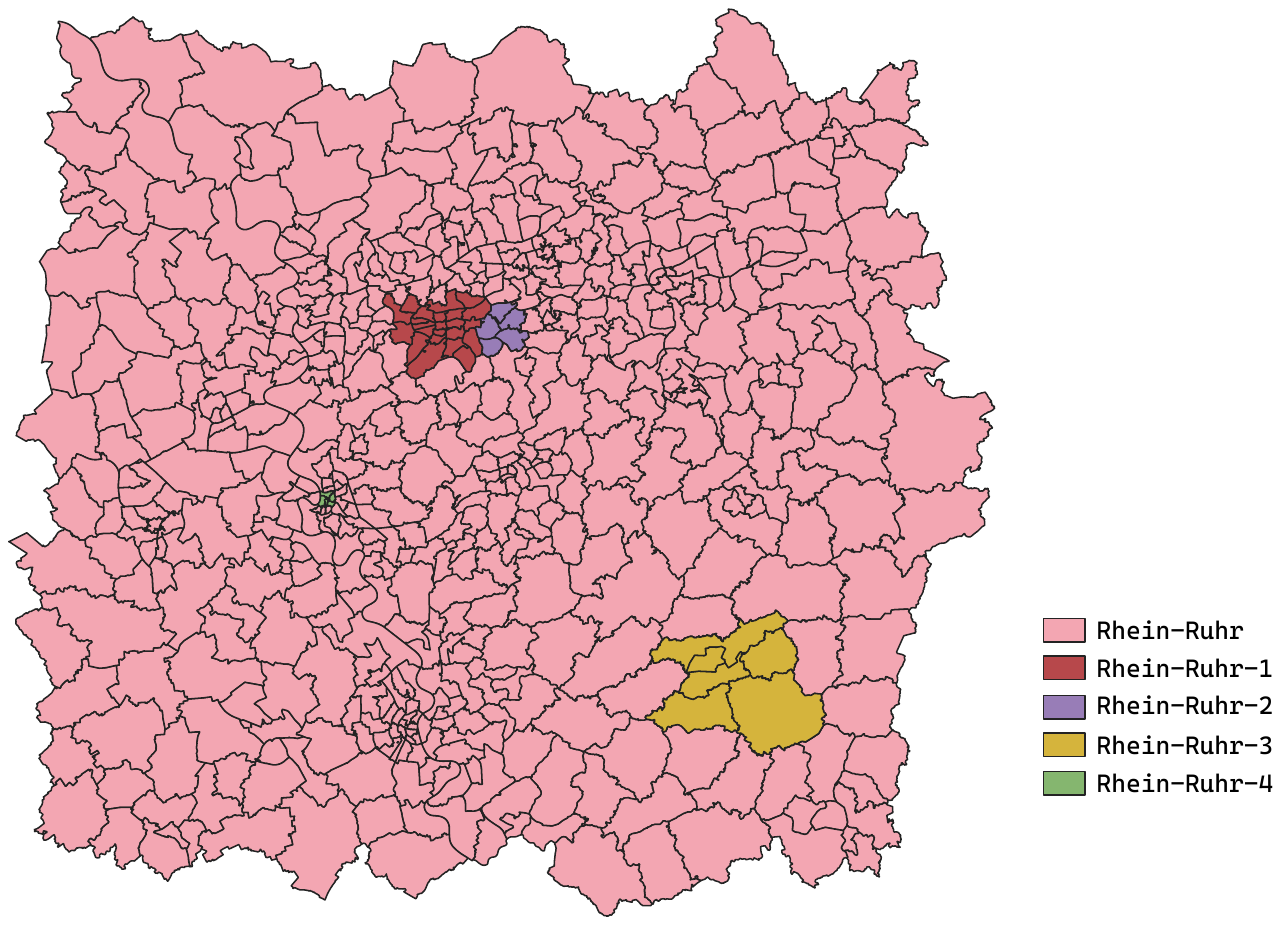}
    }
    \caption{Visualization of the different subsets chosen from the Rhein-Ruhr metropolitan region.}
    \label{fig:dataset_rheinruhr_subsets}
\end{figure}

For the algorithm by Park and Phillips we estimate the computational effort by counting the number of vertices that are actually visited in the expanded graph by Dijkstra's algorithm. 

\subsection{Scaling}
We first conducted experiments on a synthetic test instance.
For simulating the setting of scaling resulting from zooming in or out, we scale the instance by a factor $s\in \NN$, which results in all edge lengths being multiplied by $s$ and all areas by $s^2$. 
The results show that the number of operations for both algorithms remains constant, even for large scaling factors up to $s = 512$.

We can see that this result holds also if we scale the lengths and areas of all polygons independently from each other.
This is because scaling will result in the same equivalent solutions being found: 
Assume in the following every edge length is multiplied with some factor $s_1\in \RR_{>0}$ and every area is multiplied with some factor $s_2 \in \RR_{>0}$.
In both algorithms, if in some iteration we would normally end up with some intermediate solution representing bicriteria cost $(l,w)\in \RR^2$, we will now end up with solution $(s_1\cdot l, s_2\cdot w)$.
As this change will affect every solution evaluated during the algorithm, the overall behavior of both algorithms remains unchanged, as scaling two bicriteria solutions does not change their dominance behavior and the same equivalent solutions will be evaluated.
This means that the number of operations will remain unchanged for both algorithms.

\begin{figure}[t]
    \centering

    \begin{minipage}{0.45\linewidth}
        \centering
        \includegraphics[width=\linewidth, trim=0cm 2.65cm 0cm 2.5cm, clip]{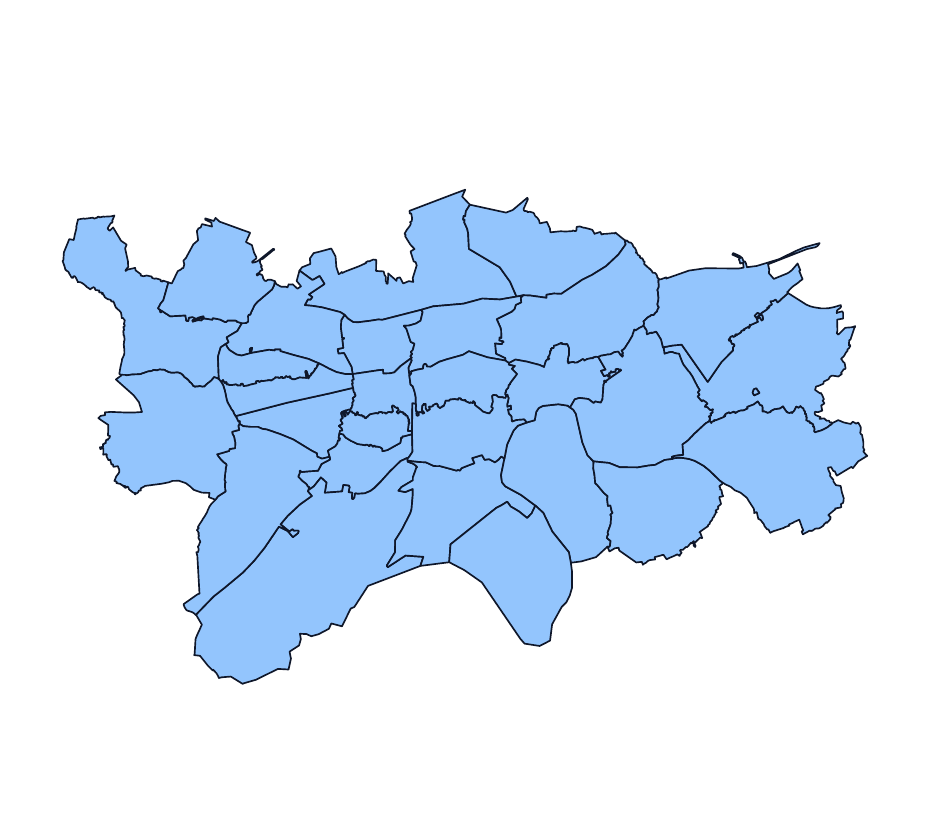}

        \includegraphics[width=0.75\linewidth, trim=1cm 1cm 1cm 1.75cm, clip]{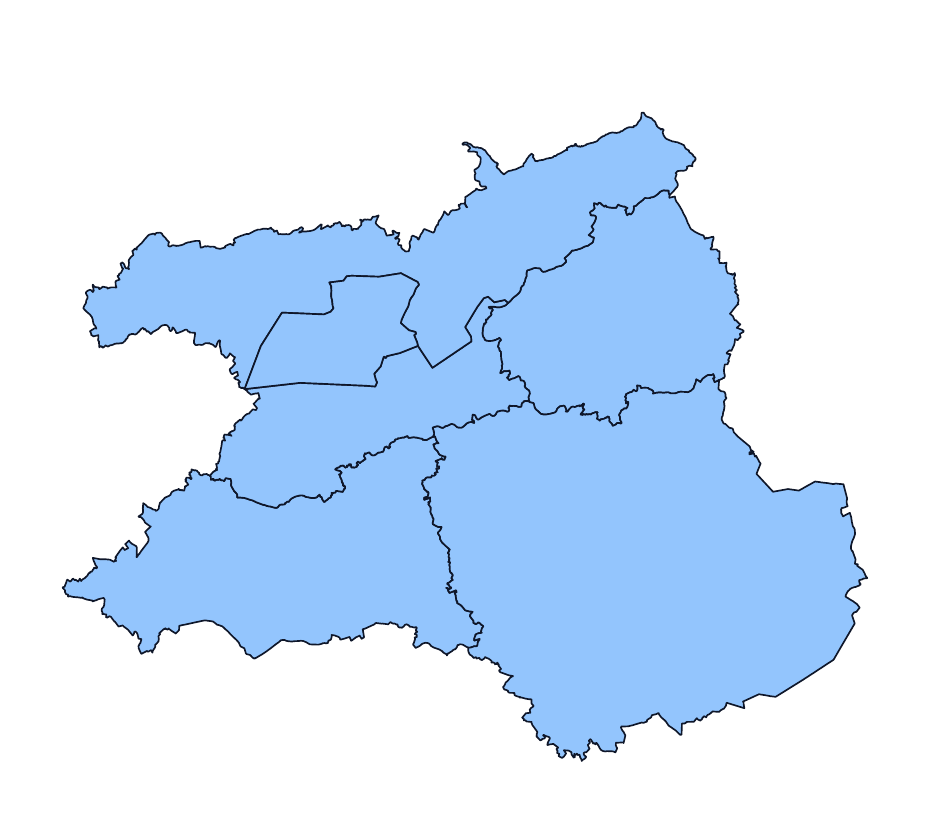}
    \end{minipage}
    \hfill
    \begin{minipage}{0.35\linewidth}
        \centering
        \includegraphics[width=0.8\linewidth, trim=2cm 1cm 2cm 1.5cm, clip]{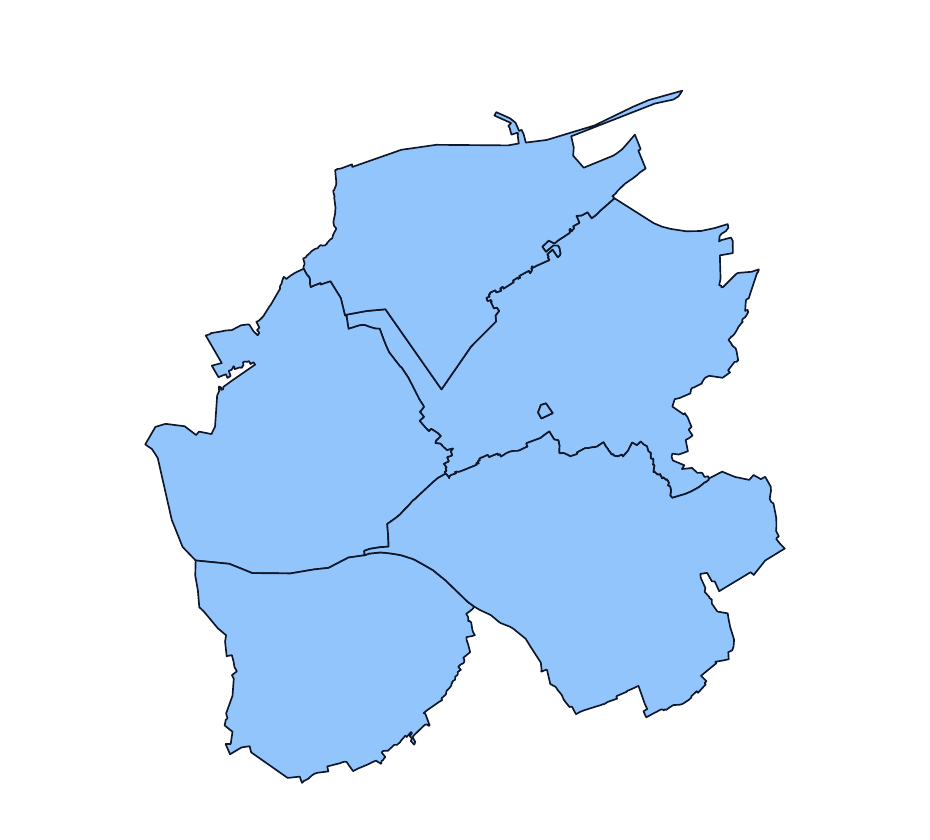}

        \includegraphics[width=0.7\linewidth, trim=1cm 0.5cm 1cm 0.5cm, clip]{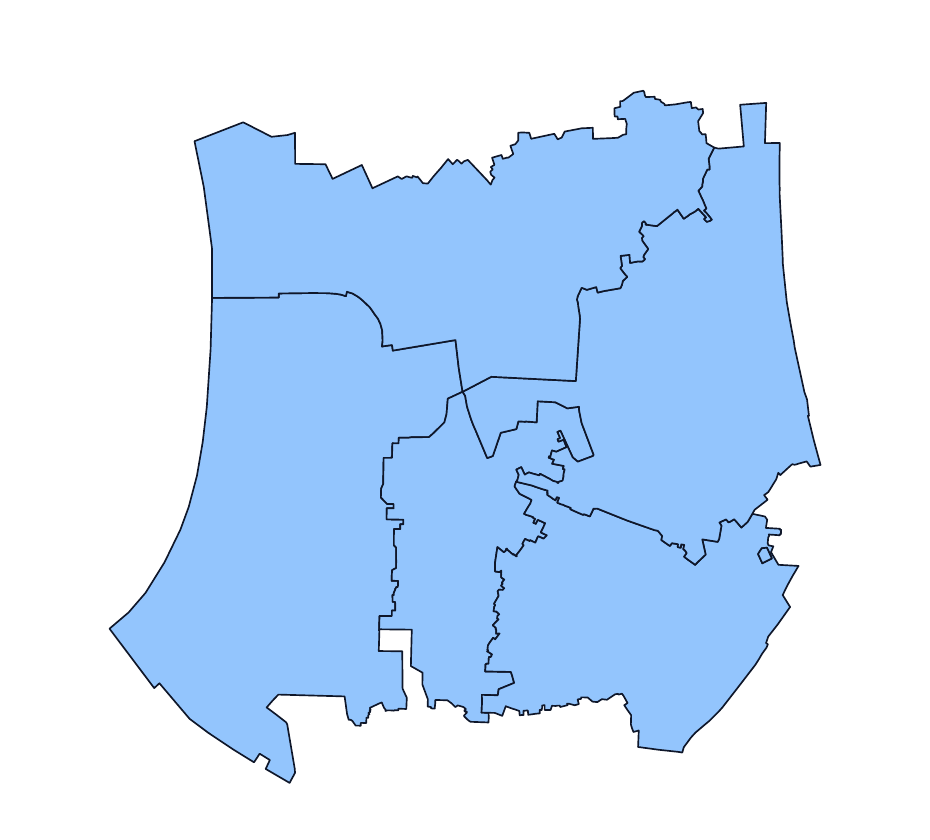}
    \end{minipage}

    \caption{Geometries of the Rhein-Ruhr subsets for \textit{Rhein-Ruhr-1} (top left), \textit{Rhein-Ruhr-2} (top right), \textit{Rhein-Ruhr-3} (bottom left), and \textit{Rhein-Ruhr-4} (bottom right).}
    \label{fig:dataset_shapes}
\end{figure}

\subsection{Real-world datasets}
Because of scaling having no influence on the number of basic operations performed by both algorithms, we will consider the setting of real-world instances that are rounded down for simplification. 

In the following we consider real-world datasets for computing the optimal Polsby-Popper-score.
For this matter we considered 4 connected divisions of the Rhein-Ruhr metropolitan region, labeled \textit{Rhein-Ruhr-i} for $i\in [4]$.
The geometries of the respective Rhein-Ruhr datasets are visualized in \cref{fig:dataset_shapes}. 

In contrast to the previously considered artificially created dataset, these subsets are characterized by significantly larger weights.
For example, while the test instance has a weight bound $W = 106$, the weight bound for \textit{Rhein-Ruhr-2} is approximately $W \approx 4 \cdot 10^8$. 
To make the experiments tractable, we scale the datasets by rounding to a reduced number of significant digits used to represent the weights. 
More explicitly, given some arbitrary value $x\in \RR_{\neq 0}$ and value $i\in \NN$, let
$k_i(x)=10^{\floor{\log_{10}(\vert x \vert)}-i+1}$ be the factor by which $x$ will be scaled such that only the first $i$ significant digits are used.
We define a new weight function 
$w_i(x)=\round{\frac{x}{k_i(x)}}\cdot k_i(x)$, which returns for an arbitrary value $x\neq 0$ a new value such that only the first $i$ significant digits are used and the order of magnitude of both values remains the same.
Assume an edge $e\in E(G_\poly)$ has weight $w(e) = 134082280$.
We now consider approximations such as $w_1(e) = 100000000$, $w_2(e)=130000000$, $w_3(e)=134000000$, and so on. 
We note that the polygons in the Rhein-Ruhr datasets use between 7 and 9 significant digits, so approximations with only 1–4 significant digits remain far from the original precision.

\begin{figure}[t]
    \centering
    \includegraphics[width=0.8\linewidth]{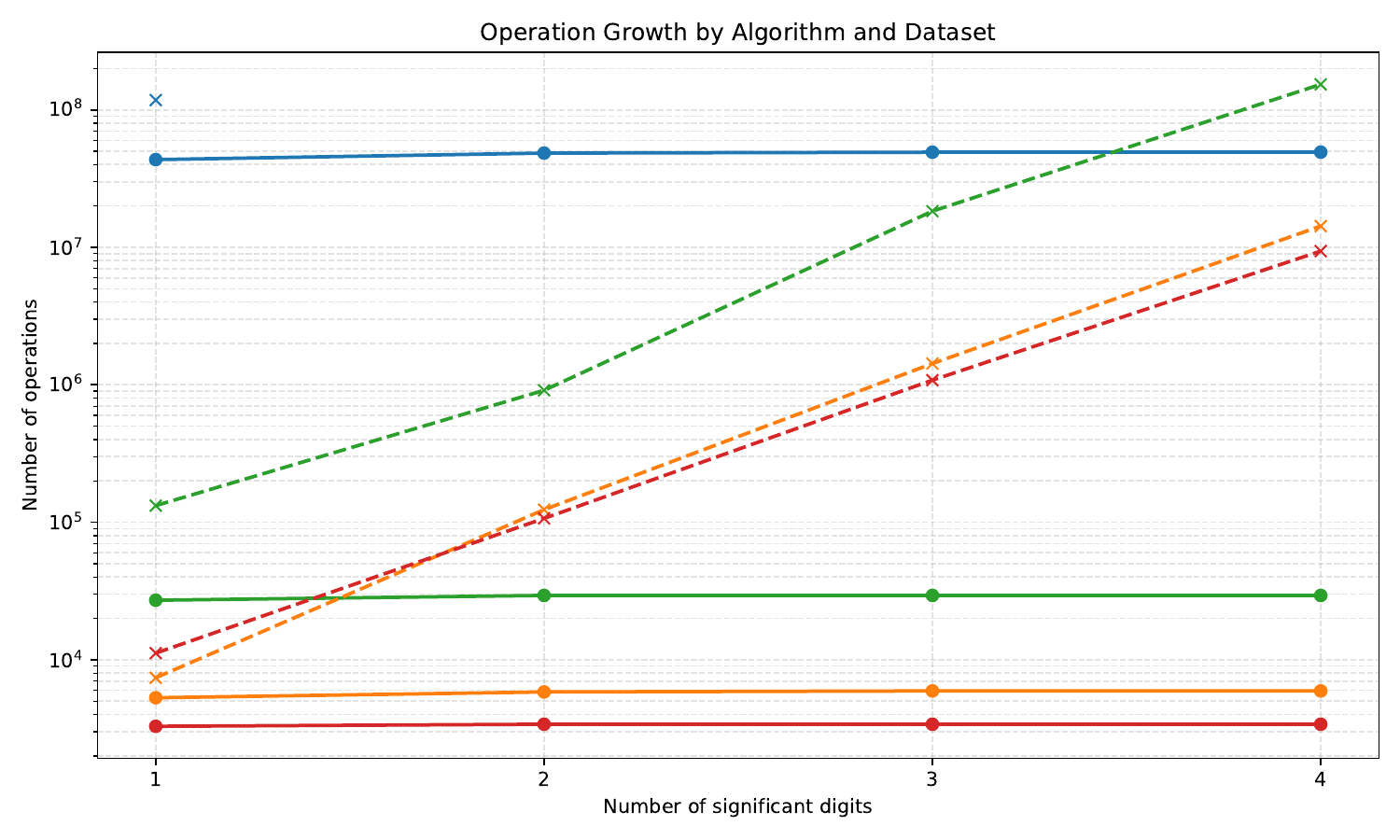}
    \caption{
    Number of operations for both algorithms under varying scaling and precision settings (logarithmic scale on the $y$-axis). Dashed lines correspond to the algorithm by Park and Phillips and solid lines to \cref{alg:subroutine_bellmann}. We indicate the datasets with colors: blue for \textit{Rhein-Ruhr-1}, orange for \textit{Rhein-Ruhr-2}, green for \textit{Rhein-Ruhr-3}, and red for \textit{Rhein-Ruhr-4}. 
For \textit{Rhein-Ruhr-1}, we report results for the Park and Phillips algorithm only for the case of one significant digit, as computations for higher precision levels were prohibitively time-consuming. 
    }
    \label{fig:operation_growth}
\end{figure}
\begin{figure}[t]
    \centering
    \includegraphics[width=0.8\linewidth]{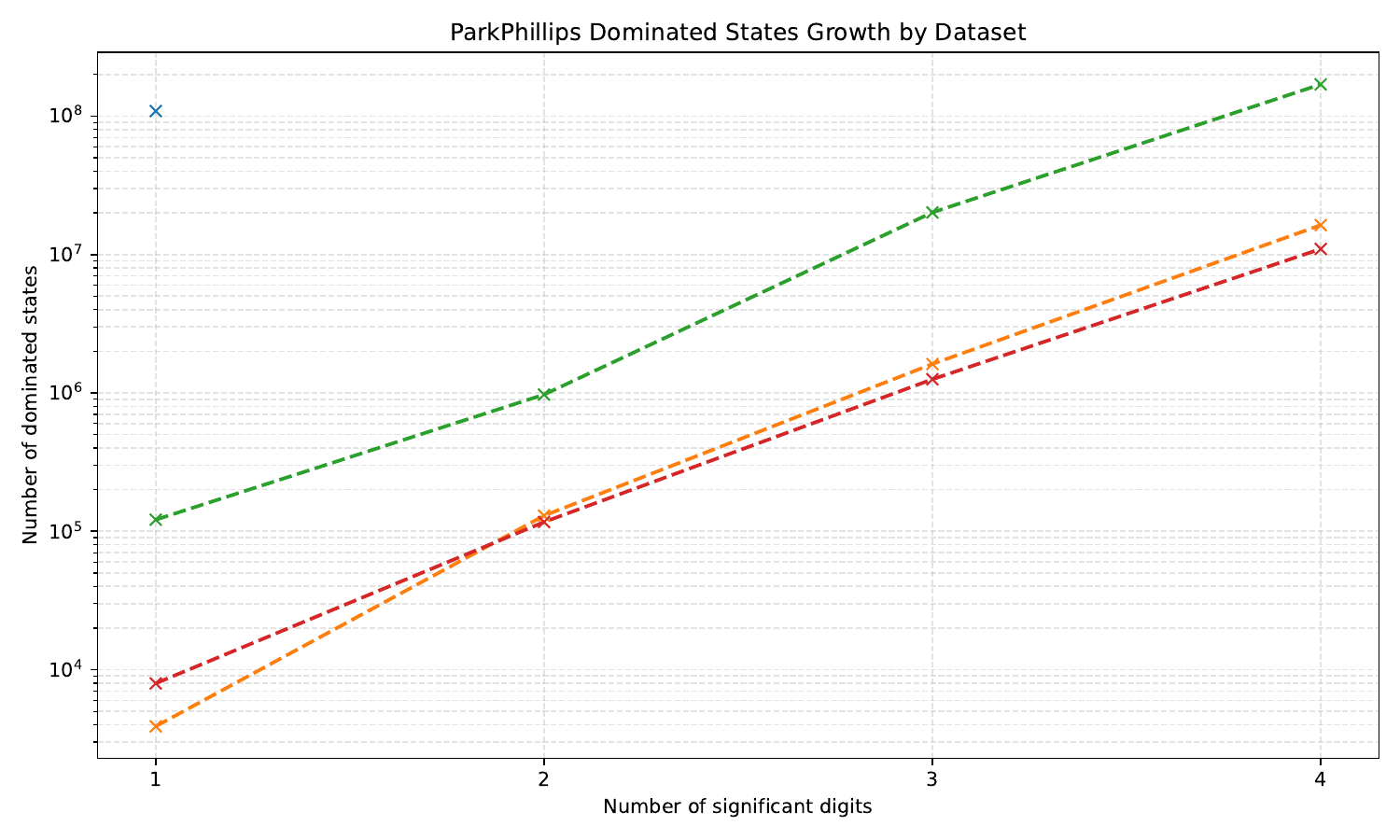}
    \caption{
    Number of dominated states visited by the Park and Phillips algorithm (logarithmic scale on the $y$-axis).
    The datasets are indicated by the same colors as in \cref{fig:operation_growth}.
    }
    \label{fig:dominated_states}
\end{figure}

\subsection{Results}
Across the Rhein-Ruhr subsets, we observe a very small number of Pareto-optimal solutions. Specifically, \textit{Rhein-Ruhr-2}, \textit{Rhein-Ruhr-3}, and \textit{Rhein-Ruhr-4} yield 4, 3, and 5 solutions, respectively.
This number remained stable across all tested levels of precision. 
The only exception occurs for \textit{Rhein-Ruhr-2} with a single significant digit, where one solution is lost due to rounding, resulting in only 3 Pareto-optimal solutions. 
Overall, this indicates that, apart from isolated rounding effects, the number of Pareto-optimal solutions in this dataset appears to be largely independent of the number of significant digits.

Finally, we visualize the number of operations for both algorithms in \cref{fig:operation_growth}. 
A key observation from \cref{fig:operation_growth} is the fundamentally different scaling behavior of the two approaches. 
\cref{alg:subroutine_bellmann} exhibits an almost constant number of operations across all tested levels of precision, demonstrating a high degree of robustness with respect to increasing weight resolution. 
In contrast, the algorithm by Park and Phillips indicates an exponential growth in the number of operations as the number of significant digits increases, which appears as linear growth in the logarithmic plot. 
This highlights a substantial practical advantage of \cref{alg:subroutine_bellmann} on instances with large or high-precision weights.

We further analyze the behavior of the algorithm by Park and Phillips by examining the number of dominated vertices encountered in the expanded graph during the execution of Dijkstra's algorithm. 
Specifically, given some starting vertex $(u,0)$, we count for every reached vertex $(v, w)$ at distance $d_{v,w}$ the cardinality of reached vertices $(v, w')$ at distance $d_{v,w'}$ such that the latter vertex is dominated by the first vertex, i.e., 
$(d_{v,w},w)\prec (d_{v,w'},w')$.
Intuitively, these states are suboptimal with respect to the Pareto dominance relation and therefore are not relevant for the final set of solutions, yet they are still explored during the search.

The results are shown in \cref{fig:dominated_states}. 
We observe a clear exponential increase in the number of dominated states as the precision of the weights increases. 
This mirrors the growth pattern observed for the total number of visited states in \cref{fig:operation_growth} and provides a direct explanation for the poor scaling behavior of the algorithm by Park and Phillips. 
A large fraction of the computational effort is spent exploring dominated states that are ultimately discarded. 
In contrast, \cref{alg:subroutine_bellmann} avoids this combinatorial explosion by construction, as it operates directly on Pareto sets and does not generate dominated intermediate states. 
\section{Weight bounds for intermediate paths}\label{sec:weight_bounds_instance}

In this section we will show that there exist instances such that the absolute weight of any optimal simple tour in the search graph cannot be bounded by any constant factor of $W$.

For the minimum quotient cut problem, we are given a graph $G=(V,E)$ with vertex weights $w:V\rightarrow \mathbb{R}_{\geq 0}$ and edge weights $c:E \rightarrow \mathbb{R}_{\geq 0}$.
We aim to find a cut $(S,\bar{S})$ that minimizes the sparsity ratio $\frac{c(S,\Bar{S})}{\min\{w(S),w(\Bar{S})\}}$. 
Let $W=\sum_{v\in V}w(v)$ denote the total weight of all vertices.

In the following we consider the instance and its dual illustrated in \cref{fig:w:construction}, which consists of a constant number of blue vertices ($c_b$), orange vertices ($c_o$), green edges ($c_g$), and dotted edges ($c_d$).
The two central vertices have weight $B/2$, where $B$ is a large constant. 
All remaining smaller vertices have unit weight.
The edges in \cref{fig:w:original_instance} shown in green have cost 1, while all dotted edges have infinite cost.
Thus, the total weight is $W=B+(c_b-1)+(c_o-1)$.

By selecting all blue vertices as $S$, only the green edges contribute to the cost of the cut.
As there are $c_g$ green edges, we get $c(S,\bar{S})=c_g$.
Additionally, we have $\min\{w(S),w(\bar{S})\}= B/2 + \min\{c_b-1,c_o-1\}$.
This gives us a total score of $\frac{c(S,\bar{S})}{\min\{w(S),w(\bar{S})\}}=\frac{c_g}{B/2 + \min\{c_b-1,c_o-1\}}$.
As we have a constant number of vertices and edges, depending on the choice of $B$ this value becomes arbitrarily small.
We can also see that in any case this is the unique optimal solution, as any proper subset of the set of edges with cost $1$ will not result in two sets being cut from each other, so any different valid cut must  cut at least one edge with cost $\infty$.

\begin{figure}
    \begin{minipage}[t]{0.475\textwidth} 
        \centering
        \includegraphics[width=\textwidth, page=7]{figures/counterexample2.pdf} 
        \subcaption{Original planar graph $G$.}
        \label{fig:w:original_instance}
    \end{minipage}
    \centering
    \begin{minipage}[t]{0.475\textwidth} 
        \centering
        \includegraphics[width=\textwidth, page=4]{figures/counterexample2.pdf} 
        \subcaption{Dual graph.}
        \label{fig:w::dual_of_original}
    \end{minipage}
    \caption{
    Instance construction for weight bound.\\
    \cref{fig:w:original_instance}: Original graph $G$. All dotted lines have an arbitrary high cost and all green edges have a cost of $1$. Removing all green edges separates $G$ into the blue and yellow vertices.\\
    \cref{fig:w::dual_of_original}: Dual graph. Both areas in the middle have same weight $B/2$. The spanning tree for the construction of $G_s$ consists of the violet edges. The red shaded area consists of all edges with absolute weight at least $B$, where green edges are positive and brown edges are negative.
    }
    \label{fig:w:construction}
\end{figure}

Now let us consider its dual graph, shown in \cref{fig:w::dual_of_original}. 
We can see that the unique optimal solution set $S$ corresponds to the areas shaded in green and blue.
Assume the corresponding search graph $G_s$ is constructed from a spanning tree shown as violet edges. 
Then, by construction, all edges shown in green and brown will have an absolute value at least $B$, where green edges are positive and brown edges are negative.
In the following we will denote all edges whose associated weight depends on the choice of $B$ as \emph{heavy edges} and all remaining edges as \emph{light edges}.
By our choice of the spanning tree, the set of heavy edges will correspond to the marked edges in the red area and the border between the two polygons with weight $B/2$ each.

As the optimal solution is unique, it will always be a cycle corresponding to the boundary of the colored polygon.
Starting from vertex $u$, when following along the border of the respective shaded polygon the traversal encounters the heavy edges $-B/2,B+2,B+4,B+6,B+8,-B-7,-B-5,-B-3$, which, together with the corresponding light edges, will sum up to $B/2+36$.
Now, if we only consider the heavy edges, any tour that describes the boundary of the shaded polygon in counterclockwise order and starts at an arbitrary vertex of the border will have at some point an intermediate sum of at least $2B$.
As all other edges can only contribute at most a constant value to the accumulated cost, it follows that if $B$ is chosen large enough, regardless of the starting point of the tour, the accumulated absolute weight will be larger than $W=B+(c_b-1)+(c_o-1)$.

In our example instance, the accumulated weight bound was only a constant factor larger than the given weight bounds $-W$ and $W$.
As one can simply generalize this instance to incorporate more spirals in the dual graph, we can see that the accumulated weight for the optimal cycle cannot be bounded by any constant factor of $W$.
Therefore, to guarantee the retrieval of the optimal sparsest cut in the search graph, one needs to compute a shortest path in the expanded graph where accumulated weights are bounded by $-n\cdot W$ and $n\cdot W$ instead.

\section{\NP{}-hardness of the \kk{}-Districting Problem}\label{sec:NP-hardness-districting}

Now we consider the $k$-\textit{districting} problem and show that it is weakly \NP{}-hard as well.
In this problem, we are given a set of polygons $\poly=\{p_1,\dots,p_n\}$ and a value $k\leq n$.
Our goal is to partition these polygons into $k$ non-empty groups such that the smallest $\alpha$-circularity score over the $k$ districts is maximized.
More formally, we seek to find an assignment $a:\poly\rightarrow [k]$ such that 
$\min_{i\in [k]}{c_\alpha(S_i)}$ is maximized, where $S_i=\bigcup_{p_j\in \poly:a(p_j)=i}p_j$ is the union of polygons assigned to set $S_i$, and $S_i\neq \emptyset$ for all $i\in [k]$.

Using our previous construction in a slightly modified way, we can show that for any $k\geq 2$ and $\alpha\in (1,2]$, this problem is weakly \NP{}-hard as well. 
We use the same reduction from the partition problem and construct our instance in nearly the same manner  as for the original \NP{}-hardness proof. 
The only difference is that, for $k=2$, we also include a very large almost-circle $C$ which is drawn as if it is slightly beneath the almost-square $X$ as well as all rectangles $R_1,\ldots,R_n$. 
For a visualization, see Figure~\ref{fig:hardness_districting}.
For $k\geq 3$, we also add $k-2$ shapes $Y_1,...,Y_{k-2}$, which are adjacent to either only the large almost-circle $C$ or to one another at 
a single point.
For a visualization see Figure~\ref{fig:hardness_districting_k}.
Each of the shapes $Y_1,\dots,Y_{k-2}$ has an $\alpha$-circularity score of $c_\alpha(S_I)$, where $S_I$ is a partition shape solution as in \cref{sec:NP-hardness}.

\begin{figure}
    \centering
    \begin{minipage}[t]{0.475\textwidth} 
        \centering
        \includegraphics[page=2,width=.8\linewidth]{figures/hardness_instance_circle2.pdf}
\caption{Adapted instance for the $2$-districting problem. We include a very large circle $C$ next to our original instance from Section~\ref{sec:NP-hardness}.}
\label{fig:hardness_districting}
    \end{minipage}
    \hfill
    \begin{minipage}[t]{0.475\textwidth}
        \centering
        \includegraphics[page=3,width=.7\linewidth]{figures/hardness_instance_circle2.pdf}
\caption{Adaptation for the $k$-districting problem. The almost-squares $Y_1$ to $Y_{k-2}$ are designed such that each has area $\Ax+Z/2$ and perimeter $\Px+\frac{Z/2}{c}$.}
\label{fig:hardness_districting_k}
    \end{minipage}
\end{figure}

Intuitively, for any $k\geq 2$ the instance is designed such that the optimal $\alpha$-circularity score can only be achieved as follows:
\begin{itemize}
    \item If $I\subset [n]$ corresponds to a partition solution, assign all rectangles in $R_I$ as well as the almost-square $X$ to the first district.
    \item Assign the remaining rectangles $\overline{R_I} = \{R_1,\dots,R_n\}\setminus R_I$ along with the almost-circle $C$ to the second district.
    \item Each shape $Y_i$ for $i \in [k-2]$ is assigned to one of the remaining $k-2$ districts each.
    \item The $\alpha$-circularity score of any optimal solution is upper bounded by $\frac{A_X+Z/2}{(P_X+ Z/(2c))^\alpha}$ and only holds with equality if a feasible partition solution exists.
\end{itemize}

All districts, except the one containing $C$ (which has an even larger $\alpha$-circularity score), achieve an $\alpha$-circularity score of $c_\alpha(S_I)$.
If any rectangle is assigned to a district that does not contain either the almost-circle $C$ or the almost-square $X$, then, by \cref{lemma:NP-hardness_single_rectangle} and \cref{lemma:SingleCycle_x}, it will always have a worse score than $c_\alpha(S_I)$, because the rectangles can only be assigned to the shapes $Y_1,\dots,Y_{k-2}$ or must form their own district.
Thus, assuming that all rectangles need to be assigned to $C$ and $X$, the only remaining shapes to be assigned to districts are $C,X$ and $Y_1,\dots,Y_{k-2}$, a total of $k$ shapes. 
Since by definition each district needs to contain at least one polygon, it follows that each of the previous shapes must be assigned to a different district.

\begin{restatable}{theorem}{DistrictingHardness}
\label{thm:NP-hardness_districting}
The $k$-\textit{districting} problem is weakly \NP{}-hard for every $k\geq 2$ and $\alpha\in (1,2]$.
\end{restatable}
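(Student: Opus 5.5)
We reduce from partition, using the construction from \cref{sec:NP-hardness} together with the almost-circle $C$ and the shapes $Y_1,\dots,Y_{k-2}$ described above. Let $\tau:=\frac{A_X+Z/2}{(P_X+Z/(2c))^\alpha}$ be the threshold score. We will show that the optimal min-score of the $k$-districting instance equals $\tau$ if and only if the partition instance is a yes-instance; otherwise it is strictly smaller. Since all numbers in the construction are polynomial in $Z$, this gives weak \NP{}-hardness.

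\textbf{Construction.} First we fix the parameters of $C$ and the $Y_j$. Take $C$ to be a large regular polygon (an almost-circle) whose radius is a suitable polynomial in $Z$. Its Polsby-Popper-type score should be so large that $C$ together with any subset of the rectangles still has $\alpha$-circularity strictly above $\tau$. Attaching rectangle $R_i$ changes the area of $C$ by $a_i$ and its perimeter by at most $2(l_i+b)$. So it suffices that $A(C)/P(C)^\alpha$ exceeds $\tau$ by a margin that survives these additive perturbations. For $\alpha\le 2$ the score $A/P^\alpha$ grows like $r^{2-\alpha}$, or stays at $1/(4\pi)$ when $\alpha=2$. Meanwhile $\tau\approx Z^{10}/(4Z^5)^\alpha$ is below the square's ratio, which is below the circle's ratio. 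Hence a circle of radius comparable to $Z^5$, polygonalised finely enough, works.

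Next we make $C$ adjacent to $X$ and to every rectangle along the outside, as in \cref{fig:hardness_districting}. We choose each $Y_j$ as an almost-square with area $A_X+Z/2$ and perimeter $P_X+Z/(2c)$, built by the same corner-clipping and detour trick, so that $c_\alpha(Y_j)=\tau$. Each $Y_j$ touches $C$ or another $Y_j$ in a single point only. This means no shared boundary edges, so by \cref{lemma:SingleCycle_x} any union of a $Y_j$ with other disjoint pieces scores below the better of its parts.

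\textbf{Completeness.} Given a partition solution $I$, the district $X\cup R_I$ scores exactly $\tau$ by \cref{lem:function_maximum_hardness}. The district $C\cup\overline{R_I}$ scores above $\tau$ by the choice of $C$, and each $Y_j$ scores exactly $\tau$. Hence the minimum is $\tau$.

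\textbf{Soundness.} Suppose some assignment attains a minimum of at least $\tau$. We proceed in four steps.
\begin{enumerate}
\item No district may consist of rectangles only. Such a district is a disjoint union of rectangles and scores at most $Z^{1-2\alpha}<\tau$ by \cref{lemma:SingleCycle_x} and the computation in \cref{lemma:NP-hardness_single_rectangle}.
\item No district may contain a $Y_j$ together with anything else. Since $Y_j$ shares no edges, the union is a disjoint union whose score is strictly below the maximum of its parts. The other parts score at most $\tau$, except $C$-containing parts, so the score falls below $\tau$. For the $C$ case, we additionally verify that $C\cup Y_j$ drops below $\tau$ because $C$ is so large. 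This is a further inequality to check, and it could instead be forced by taking $C$ large enough that $c_\alpha(C)$ only slightly exceeds $\tau$ in a controlled way. This balancing is the delicate point.
\item It follows that the $k$ districts contain $C$, $X$ and the $Y_j$ in separate districts, and each rectangle joins either $X$ or $C$.
\item The district containing $X$ then scores at most $\tau$, with equality exactly when the rectangles attached to $X$ sum to $Z/2$, by \cref{lemma:NP-hardness_single_rectangle}. So the minimum reaching $\tau$ yields a partition solution.
\end{enumerate}

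\textbf{Main obstacle.} The hard part is choosing the size of $C$ in the window of parameters. It must be big enough that $C$ plus any rectangle subset stays at least $\tau$. It must also avoid allowing any district containing $C$ and $X$ together (a disconnected or merged union) to beat the intended structure; mergers of $X$ and $C$ are harmless anyway because they only reduce the number of useful districts. Finally, it must keep $C$ merged with a $Y_j$ below $\tau$. I would first try a radius proportional to $Z^5$ with explicit constants and check the inequalities using $Z\ge 27$, in the same style as the proof of \cref{lemma:NP-hardness_single_rectangle}.
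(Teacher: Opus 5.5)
Your construction, completeness argument, and the $k=2$ case (steps 1, 3 and 4 without any $Y_j$) match the paper. For $k\ge 3$, however, step 2 has a genuine gap in the case where some $Y_j$ shares a district with $C$.

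Your first justification, that $C\cup Y_j$ drops below $\tau$ ``because $C$ is so large'', points the wrong way. We have $c_\alpha(C\cup Y_j)\ge c_\alpha(C)\bigl(P(C)/(P(C)+P(Y_j))\bigr)^\alpha$, so enlarging $C$ pushes the score of $C\cup Y_j$ towards $c_\alpha(C)$. By your own choice of $C$, that value exceeds $\tau$: it is close to $1/(4\pi)>1/16\approx\tau$ for $\alpha=2$, and it grows without bound for $\alpha<2$. \cref{lemma:SingleCycle_x} only gives $c_\alpha(C\cup Y_j)<c_\alpha(C)$, which does not help. Your fallback, tuning $C$ so that $c_\alpha(C)$ exceeds $\tau$ only by a controlled margin, would mean abandoning the large-$C$ regime: $C$ would have to be comparable in size to $Y_j$, with a margin that shrinks as $\alpha\to 1$. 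You have not carried this out. As written, step 3 (that $C$, $X$ and the $Y_j$ lie in separate districts) is therefore not established.

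No inequality about $C\cup Y_j$ is needed. The paper uses the same counting argument you already sketch for merging $X$ with $C$. If any two of the $k-1$ shapes $C,Y_1,\dots,Y_{k-2}$ share a district, they occupy at most $k-2$ districts. Then at least two non-empty districts are built from $X,R_1,\dots,R_n$ alone, so one of them contains only rectangles and scores below $\tau$ by your step 1.

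This forces $C,Y_1,\dots,Y_{k-2}$ into distinct districts. The remaining district must then contain $X$, again by step 1. \cref{lemma:SingleCycle_x} rules out placing a $Y_j$ together with $X$ or with rectangles, so every rectangle joins $X$ or $C$, and you are back in the $k=2$ situation. In particular, there is no two-sided window for the size of $C$: only a lower bound is required, and $C$ can be taken as large as you like.
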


\begin{proof}
We first argue why this holds true for $k=2$ and then adapt the argument for an arbitrary $k\geq 3$.

For $k=2$, consider the instance shown in \cref{fig:hardness_districting}. 
For a given partition instance we first construct our instance as in \cref{sec:NP-hardness}. 
In addition, we include a large circle $C$, that is drawn such that it appears to be slightly beneath the original instance and intersects the two right corners of the almost-square $X$. 
Because the circle is not a full circle, we will refer to it in the following as the \textit{almost-circle} $C$.
Let $I\subset [n]$ be an optimal index set for the respective partition instance.
This set is also optimal for optimizing $\alpha$-circularity if the corresponding set of rectangles is included with the almost-square $X$. 
We claim that this set of polygons is also optimal for the $2$-districting problem, if the almost-circle $C$ is sufficiently large. 
Let $R_I$ denote the optimal set of rectangles for our original instance, and let $\overline{R_I}$ represent the set of remaining rectangles, i.e., $\overline{R_I} = \{R_1,\dots,R_n\}\setminus R_I$. 
Assume we assign $a(p)=1$ for every $p\in R_I$, as well as $a(X)=1$. 
For every remaining rectangle $p\in \overline{R_I}$ and the almost-circle $C$, we set their assignment to $2$.
We observe that, as we increase the radius of $C$, the score $c_\alpha(S_2)$ will also increase, approaching the largest possible value of $\frac{1}{4\pi}$ for $\alpha=2$, and becoming arbitrarily large for $\alpha \in (1,2)$, 
assuming that the remaining polygons do not change.
Therefore, we can conclude that if $C$ is large enough, $\min_{i\in [2]}c_\alpha(S_i)$ must correspond to the cost of an optimal solution for the $\alpha$-circularity problem variant in Section~\ref{sec:NP-hardness}, i.e., $\min_i c_{\alpha}(S_i) = \frac{\Ax+Z/2}{\left(\Px+\frac{Z/2}{c}\right)^\alpha}$.

It remains to show that no other assignment solution results in a better $2$-districting cost. 
Assume the almost-square $X$ as well as the almost-circle $C$ are assigned to the same district, i.e., $a(X)=a(C)$.
Without loss of generality, assume $a(X)=1$. 
Since at least one polygon must be assigned to the second district, there exists at least one rectangle $R_i$ s.t. $a(R_i)=2$. 
From Lemma~\ref{lemma:SingleCycle_x} we know that, since all rectangles are disjoint from each other, if more than one rectangle is assigned to the second district, we can improve the $\alpha$-circularity score by assigning only the best of these rectangles to the second district. 
However, from Lemma~\ref{lemma:NP-hardness_single_rectangle} we know that the $\alpha$-circularity score of a single rectangle will always be worse than the $\alpha$-circularity score of the rectangle solution corresponding to the optimal partition solution, which has the same $\alpha$-circularity score as the solution constructed earlier for the $2$-districting version. 
This implies that in the optimal assignment the almost-circle $C$ and the almost-square $X$ cannot be in the same district.

Finally, for any assignment of rectangles to either the almost-square $X$ or the almost-circle $C$, as discussed above, we must end up with the $\alpha$-circularity score of the almost-square $X$ and the cost of the rectangles assigned to the same district as $X$. 
However, we already know that the solution maximizing the $\alpha$-circularity for the almost-square $X$ corresponds to a subset of the rectangles that matches an optimal partition solution. 
In conclusion, we see that an optimal solution for $2$-districting can be used to solve the partition instance optimally.

Next, we consider an arbitrary value $k\geq 3$. 
We use the same instance as before, but we add $k-2$ shapes $Y_1,\dots,Y_{k-2}$ at the border of $C$ or in such a way that they touch each other at exactly one point (or at some subpart of the edge with an arbitrarily small common boundary).
Alternatively, one can place $Y_1,\dots,Y_{k-2}$ only adjacent to $C$ and increase the size of $C$ until all shapes fit at the border of $C$ without touching each other. 
Each of these shapes is designed to have an $\alpha$-circularity score of $c_\alpha(S_I)$.
We can see that if we assign each polygon $Y_i$ for $i \in [k-2]$ to its own district and use the optimal assignment for $k=2$, the final cost will match the previous $\alpha$-circularity score of an optimal solution for $k=2$, i.e., $\min_i c_{\alpha}(S_i) = \frac{\Ax+Z/2}{\left(\Px+\frac{Z/2}{c}\right)^\alpha}$. 

By Lemma~\ref{lemma:SingleCycle_x}, we know that if any shape $Y_i$ is assigned to a rectangle or the almost-square $X$, but not to the almost-circle $C$, its $\alpha$-circularity score (which is also an upper bound of the overall $\alpha$-circularity score) must strictly decrease. 
Therefore, any such assignment cannot yield a better solution. 
On the other hand, assume some of these shapes are assigned to $C$. 
If we only consider the assignments of all $Y_i$ and $C$ to some districts, there must be at least two districts which did not get assigned any polygon so far. 
Since only the almost-square $X$ and all rectangles $R_1,\dots,R_n$ remain to be assigned, at least one district must be assigned only rectangles. 
By \cref{lemma:NP-hardness_single_rectangle}, this automatically leads to a worse $\alpha$-circularity score for this district compared to $c_\alpha(S_I)$.
Therefore, the previously described assignment with a $k$-districting score of $\frac{\Ax+Z/2}{\left(\Px+\frac{Z/2}{c}\right)^\alpha}$ must be optimal, if a partition solution exists.
\end{proof}

\end{document}